%% file: main.tex
\documentclass[prb,twocolumn,aps,groupedaddress]{revtex4-2}

\usepackage{amsmath,amssymb,amsthm,mathtools,tikz,braket,framed}
\usetikzlibrary{cd}

\usepackage[bookmarksnumbered=true]{hyperref}
\usepackage{cleveref}

\theoremstyle{definition}\newtheorem{defi}{Definition}[section]
\theoremstyle{plain}\newtheorem{prop}[defi]{Proposition}
\theoremstyle{plain}\newtheorem{thm}[defi]{Theorem}
\theoremstyle{plain}\newtheorem{cor}[defi]{Corollary}
\theoremstyle{plain}\newtheorem{lem}[defi]{Lemma}
\theoremstyle{definition}\newtheorem{eg}[defi]{Example}
\theoremstyle{definition}
\theoremstyle{remark}\newtheorem{rk}[defi]{Remark}

\crefname{eg}{Example}{Examples}
\crefname{defi}{Definition}{Definitions}
\crefname{prop}{Proposition}{Propositions}
\crefname{thm}{Theorem}{Theorems}
\crefname{cor}{Corollary}{Corollaries}
\crefname{lem}{Lemma}{Lemmas}
\crefname{rk}{Remark}{Remarks}
\crefname{table}{Table}{Tables}
\crefname{equation}{Eq.}{Eqs.}
\Crefname{equation}{Eq.}{Eqs.}
\crefname{section}{Sec.}{Secs.}
\Crefname{section}{Sec.}{Secs.}
\crefname{subsection}{Sec.}{Secs.}
\Crefname{subsection}{Sec.}{Secs.}
\crefname{subsubsection}{Sec.}{Secs.}
\Crefname{subsubsection}{Sec.}{Secs.}
\crefname{appendix}{Appendix}{Appendixes}
\Crefname{appendix}{Appendix}{Appendixes}

\newcommand{\clspace}{K}
\newcommand{\symalg}{A_\mathrm{sym}}
\newcommand{\ktheory}{\mathcal{K}}
\newcommand{\Gr}{\mathrm{Gr}}

\begin{document}

\title{Algebra of free fermions: Classifying spaces, Hamiltonians, and computation}
\author{Tian Yuan}
\email[]{tyuan19@fudan.edu.cn}
\affiliation{State Key Laboratory of Surface Physics and Department of Physics, Fudan University, Shanghai 200433, China}
\author{Yang Qi}
\email[]{qiyang@fudan.edu.cn}
\affiliation{State Key Laboratory of Surface Physics and Department of Physics, Fudan University, Shanghai 200433, China}

\date{\today}

\begin{abstract}
    Research on topological phases of matter is a core field in modern condensed matter physics. Free fermion systems, such as topological insulators and superconductors, have been studied using the ``tenfold way'' and K-theory. Building on Kitaev's idea of $\Omega$-spectrum and classifying space, as well as Freed--Moore's K-theory, this work demonstrates that free fermionic systems form a genuine $G$-$\Omega$-spectrum and clarifies its connection to several distinct classification schemes appearing in the physical literature. By introducing the $\mathbb{Z}_2$-graded algebra $\symalg^V$, the classification problem for systems with general symmetries, including antilinear symmetries, antisymmetries, projective representations, and point group symmetries, is turned into an extension problem in representation theory. To solve this, a computational method for the $\mathbb{Z}_2$-graded Wedderburn--Artin decomposition of $\symalg^V$ is developed. This decomposition not only yields a classification but also enables the explicit construction of the corresponding Dirac Hamiltonian. Furthermore, a GAP programming package has been developed to automate these calculations.
\end{abstract}

\maketitle

\input{1.introduction.tex}
\input{2.algebra_of_free_fermion.tex}
\input{3.graded_algebra.tex}
\input{4.decomp_of_AV.tex}
\input{5.conclusion.tex}

\begin{acknowledgments}
  The authors thank Chen Fang for insightful discussions.
  Y.Q. acknowledges support from National Key R\&D Program of China (Grant No. 2022YFA1403402), from the National Natural Science Foundation of China (Grant No. 12174068), and from the Science and Technology Commission of Shanghai Municipality (Grants No. 24LZ1400100 and No. 23JC1400600).
\end{acknowledgments}

\appendix

\input{a1.some_linear_algebra.tex}
\input{a2.technical_graded.tex}

\input{a3.group_algebra.tex}
\input{a4.graded_group_algebra.tex}
\input{a5.K-theory.tex}
\input{a6.gap_example.tex}

\bibliography{refs}

\end{document}

%% file: 1.introduction.tex
\section{Introduction}\label{section:introduction}

Topological phases of matter play a central role in modern condensed matter physics. Among these, gapped free fermionic systems encompassing topological insulators and superconductors
\cite{bernevigQuantumSpinHall2006,kaneQuantumSpinHall2005,hasanColloquiumTopologicalInsulators2010,qiTopologicalInsulatorsSuperconductors2011,kane2TopologicalOrder2005,fuTopologicalInsulatorsInversion2007,konigQuantumSpinHall2007,zhangTopologicalInsulatorsBi2Se32009,kitaevUnpairedMajoranaFermions2001,fuSuperconductingProximityEffect2008,ivanovNonAbelianStatisticsHalfQuantum2001,lutchynMajoranaFermionsTopological2010,oregHelicalLiquidsMajorana2010,readPairedStatesFermions2000}
constitute a fundamental and rigorously understood class. With the theoretical development of topological insulators and superconductors, a systematic topological classification of free fermionic systems has become an urgent necessity. Altland and Zirnbauer proposed the well-known tenfold symmetry classes \cite{altlandNonstandardSymmetryClasses1997,heinznerSymmetryClassesDisordered2005}. Subsequently, the ``tenfold way'' was connected to free fermionic systems via time-reversal ($T$), particle-hole ($C$), and chiral ($P$) symmetries \cite{schnyderClassificationTopologicalInsulators2008}, leading to the establishment of the well-known topological periodic table. Kitaev \cite{kitaevPeriodicTableTopological2009} synthesized these phenomena using Clifford algebras and K-theory, thereby introducing homotopy theory---a cornerstone of modern mathematics---into the study of topological phases of matter.

Despite the tremendous success of the existing periodic table, it becomes inadequate when more complex symmetries---such as crystalline symmetries \cite{fuTopologicalCrystallineInsulators2011} and magnetic space group symmetries \cite{shiozakiClassificationSurfaceStates2022}---must be taken into account. Building upon Kitaev's work, Freed and Moore generalized the K-theoretic framework to incorporate arbitrary symmetries \cite{freedTwistedEquivariantMatter2013}. Their approach can handle general antilinear symmetries and antisymmetries (i.e., symmetries that anticommute with the Hamiltonian). Furthermore, it allows for projective representations of symmetries (e.g., in spin-$\frac{1}{2}$ systems) and accommodates nontrivial symmetry actions on the spatial lattice.
For other related works, we refer the reader to Refs. \cite{gomiFreedMooreKtheory2021,thiangKTheoreticClassificationTopological2016,kubotaNotesTwistedEquivariant2016,scaglioneComparisonTwoApproaches2024}.

For systems with general symmetries, computing the classification is highly nontrivial. For strong topological insulators/superconductors, the classification can be derived from a generalized Clifford algebra extension problem \cite{cornfeldClassificationCrystallineTopological2019,shiozakiClassificationSurfaceStates2022}. Furthermore, strong topological phases serve as the ``root states'' for computations in more general settings; determining their classification paves the way for calculating the general cases. For a specific class of space groups, the complete classification problem can be reduced to the classification of strong phases \cite{cornfeldTenfoldTopologyCrystals2021}. For general crystalline symmetries, physicists typically employ symmetry-based indicator theory \cite{kruthoffTopologicalClassificationCrystalline2017,poSymmetrybasedIndicatorsBand2017,songQuantitativeMappingsSymmetry2018} (for fragile topology beyond symmetry indicators, see \cite{bouhonGeometricApproachFragile2020}). However, the classification obtained via this method is incomplete. A complete classification requires the application of the Atiyah--Hirzebruch spectral sequence (AHSS) in either real space or momentum space \cite{shiozakiAtiyahHirzebruchSpectralSequence2022,shiozakiAtiyahHirzebruchSpectralSequence2023,onoCompleteCharacterizationTopological2024}.

These previous works lack a method to explicitly construct a model Hamiltonian, and require case-by-case derivation to obtain the topological classification of Dirac Hamiltonians. To solve this problem, we introduce an algorithm to explicitly calculate the $\mathbb{Z}_2$-graded Wedderburn--Artin decomposition of the $\mathbb{Z}_2$-graded algebra $\symalg^V$, which serves as the algebra of Dirac Hamiltonians. As a byproduct of this decomposition, we can obtain both the model Hamiltonians and the topological classification of Dirac Hamiltonians for strong topological insulators/superconductors in the most general settings. This algorithm is implemented as a GAP package \cite{ttiiddeeTtiiddeeFreeFermionicSPT2026}, which is our main computational result and will be heavily used in our upcoming work \cite{boundaryanomaly2026} on the boundary states of topological insulators/superconductors.

Classification represents the lowest-order information of topological phases of matter; however, many physical problems require more sophisticated data. In his pioneering talks \cite{alexeiTopologicalClassificationPhases2011,alexeiClassificationShortRangeEntangled2013,alexeiHomotopytheoreticApproachSPT2015}, Kitaev proposed a more refined approach to characterize invertible topological orders, which encompasses free fermionic systems as a special case. Kitaev's ideas were developed in the subsequent works \cite{gaiottoSymmetryProtectedTopological2019,xiongMinimalistApproachClassification2018}. His core idea is to utilize an $\Omega$-spectrum to study the classifying space of invertible topological orders. An $\Omega$-spectrum is a sequence of spaces $X_n$ indexed by integers, satisfying the homotopy equivalence $\Omega X_{n+1} \simeq X_{n}$. From the perspective of modern homotopy theory \cite{mayGeometryIteratedLoop2006}, an $\Omega$-spectrum is synonymous with an ``Abelian group''\footnote{To be precise, a connective $\Omega$-spectrum.}; that is, it is a homotopy type endowed with a commutative multiplication structure. Physically, $X_n$ represents the space of $n$-dimensional Hamiltonians or quantum states, which we uniformly refer to as the classifying space in this paper. The Abelian group structure on this space physically corresponds to invertible stacking. This perspective makes the validity and generality of Kitaev's ideas self-evident. The difficulty in applying this approach lies in identifying a concrete $\Omega$-spectrum to characterize a specific type of invertible topological order. In homotopy theory, there is a one-to-one correspondence between $\Omega$-spectra and generalized cohomology theories. For many physical systems of interest, their classifying spaces are not directly specified by an $\Omega$-spectrum, but rather by a generalized cohomology theory. Examples include interacting bosonic systems \cite{chenSymmetryProtectedTopological2013,kapustinSymmetryProtectedTopological2014}, interacting fermionic systems \cite{kapustinFermionicSymmetryProtected2015}, invertible stabilizer codes \cite{geikoClassificationInvertibleStabilizer2025}, quantum cellular automata \cite{czajkaAnomaliesLatticeHomotopy2025}, and invertible topological quantum field theories \cite{freedShortrangeEntanglementInvertible2014,freedReflectionPositivityInvertible2021}.

For free fermionic systems, the classifying space corresponds to the space of gapped Hamiltonians. Mathematically, this is precisely the $\Omega$-spectrum associated with the K-theory mentioned earlier. In the presence of symmetries, we need to consider the genuine $G$-$\Omega$-spectrum \cite{cornfeldTenfoldTopologyCrystals2021,mayEquivariantHomotopyCohomology1996}. A genuine $G$-$\Omega$-spectrum is no longer a sequence of spaces indexed by integers; rather, it consists of $G$-spaces $X_V$ indexed by virtual real representations $V$ of the group $G$, satisfying the equivalence $\Omega^V X_{V\oplus W}\simeq X_W$. When $G$ is the trivial group, its real representations are determined solely by their dimensions, and we recover the sequence of spaces indexed by integers. Here, the real representation $V$ (or the virtual real representation $-V$) of $G$ can be interpreted as the real space (or the momentum space). An ordinary $\Omega$-spectrum can be viewed as a mathematical structure that packages systems of all dimensions together, whereas a genuine $G$-$\Omega$-spectrum further incorporates the different actions of symmetries on space.

This paper is organized into two main parts. In the first part, \cref{section:algebra of free fermionic systems}, we study the classifying spaces for gapped free fermionic systems. We first define the algebra $\symalg$ which describes the symmetries of free fermionic systems, as well as the algebra $\symalg^V$ consisting of Dirac Hamiltonians that possess such symmetries. Next, we construct several classifying spaces for free fermionic systems, elucidate the relationships between them, and demonstrate how these spaces constitute a genuine $G$-$\Omega$-spectrum. There are four different but equivalent classifications for free fermionic systems, which are summarized in \cref{eq:diagram of classification}. Here, we only consider the classification of strong phases. Several previously ambiguous concepts in the physical literature are now rigorously clarified by \cref{eq:diagram of classification}. At the end of the section, we review the tenfold way using the $\mathbb{Z}_2$-graded Wedderburn--Artin decomposition of $\symalg^V$, and the extension problems of $\symalg^V$ that lead to an algebraic classification of free fermionic systems.

The second part, comprising \cref{section:Graded Algebra,section:decomposition of AV}, establishes the theoretical foundation for computing the classifications, explicitly constructing Dirac Hamiltonians, and deriving the topological invariants of Dirac Hamiltonians. It also provides the underlying mathematical framework for the GAP software package. In \cref{section:Graded Algebra}, we review the theory of graded semisimple algebras and present the classification of $\mathbb{Z}_2^N$-graded division algebras. Section \ref{section:decomposition of AV} constructs the $\mathbb{Z}_2$-graded Wedderburn--Artin decomposition of $\symalg^V$. This decomposition is the main tool to compute the classifications, as well as to explicitly construct Dirac Hamiltonians and their corresponding topological invariants. At the end of this section, we introduce the GAP software package \cite{ttiiddeeTtiiddeeFreeFermionicSPT2026} and provide several computational examples demonstrating its application.

\subsection{Conventions and notations}\label{subsection:conventions and notations}

We use $\mathbb{K}$ to denote a field; in this work, $\mathbb{K}$ is always $\mathbb{R}$ or $\mathbb{C}$. $Cl^{p,q}$ is the Clifford algebra over $\mathbb{R}$, and $\mathbb{C}l^{p,q}$ or $\mathbb{C}l^{p}$ is the Clifford algebra over $\mathbb{C}$, as defined in \cref{subsection:Clifford Algebra}.

$\symalg=\mathbb{C}[G,\omega,s]$ is a $\mathbb{Z}_2$-graded algebra with grading $h$, as defined in \cref{subsection:0+1-dimensional gapped free fermionic systems}.

A vector space $V$ that has an action of an algebra $A$ is called an $A$ module. The corresponding representation of $A$ on $V$ is the algebra homomorphism $\rho_V: A \to \mathrm{End}(V)$. Module and representation are equivalent concepts in principle; however, we distinguish them by referring to the vector space $V$ as the $A$ module and to the algebra homomorphism $\rho_V$ as the representation of $A$.

When we use the word ``space'', we refer to a topological space or a homotopy type, and more generally a $G$ space or a $G$-homotopy type depending on the context. $\cong$ of spaces stands for (G-)homeomorphism, $\simeq$ of spaces stands for ($G$-)homotopy equivalence. We use $\Omega X$ to denote the loop space of a space $X$ with a basepoint. If $V$ is a $G$-representation over $\mathbb{R}$, $S^V$ is the one-point compactification of $V$ with the point at infinity as the basepoint. Similarly, we use $\Omega^V X$ to denote the mapping space from $S^V$ to $X$ preserving the basepoint.

%% file: 2.algebra_of_free_fermion.tex
\section{Algebras of Free Fermionic Systems}\label{section:algebra of free fermionic systems}

In \cref{subsection:symmetry of the free fermionic systems}, we introduce a $\mathbb{Z}_2$-graded algebra $\symalg=\mathbb{C}[G,\omega,s]$, which will be used to describe the symmetries of free fermionic systems. From the algebra $\symalg$, we can construct the classifying space $\clspace(\symalg)$ for $(0+1)$-d gapped free fermionic systems, as discussed in \cref{subsection:classifying space}, and the classifying space $\Omega^V(\clspace(\symalg))$ for $(q+1)$-d gapped free fermionic systems, as discussed in \cref{subsection:q+1-dimensional gapped free fermionic systems}. In \cref{subsection:dirac hamiltonians}, we construct the algebra $\symalg^V$ of Dirac Hamiltonians, and the classifying space $\clspace(\symalg^V)$ for Dirac Hamiltonians in terms of their mass terms. The classifying spaces $\clspace(\symalg^V)$ and $\Omega^V(\clspace(\symalg))$ have the same $G$-homotopy type; the $G$-homotopy equivalence is given by the construction of Dirac Hamiltonians from the mass terms. That is, we have an equivalence of classifying spaces
\begin{equation}
    \begin{tikzcd}
        \left\{\textrm{Dirac Hamiltonians}\right\}
            \arrow[r, "\simeq"{above} , hook]
        & \left\{\textrm{Hamiltonians}\right\}.
    \end{tikzcd}
\end{equation}
or in math symbols,
\begin{equation}
    \begin{tikzcd}
        \clspace(\symalg^V)
            \arrow[r, "\simeq"{above}, hook]
        & \Omega^V(\clspace(\symalg)).
    \end{tikzcd}
\end{equation}
In \cref{subsubsection:atomic insulator and Bott periodicity}, we show that $\clspace(\symalg^V)$ forms a genuine $G$-$\Omega$-spectrum, which follows from a well-known physical folk-theorem: atomic insulators are systems without nontrivial boundary states. Mathematically, the physical folk-theorem is equivalent to a generalization of $(1,1)$-Bott periodicity for $KR$-theory \cite{atiyahKtheoryReality1966}.

In \cref{subsection:The classification of gapped free fermionic hamiltonians}, we consider the problem of classification. The main result is the following diagram:
\begin{widetext}
    \begin{equation}\label{eq:diagram of classification}
    \begin{tikzcd}
        \left\{\textrm{Classification of Dirac Hamiltonians}\right\}
            \arrow[r, "\cong"{above}, hook]
        & \left\{\textrm{Classification of Hamiltonians}\right\}
            \arrow[d, "\cong"{left}, "\textrm{Solving the Hamiltonian}"{right}] \\
        \left\{\textrm{Algebraic Classification}\right\}
            \arrow[u, "\cong"{left}, "\textrm{Changing the sign of the mass term}"{right}]
            \arrow[r, "\textrm{ABS construction}"{above}, "\cong"{below}]
        & \left\{\textrm{bundle-theoretic classification}\right\},
    \end{tikzcd}
\end{equation}
\end{widetext}
or in math symbols,
\begin{equation}
    \begin{tikzcd}
        \pi_0\left(\clspace(\symalg^V)^G\right)
            \arrow[r, "\cong"{above}, hook]
        & \pi_0\left(\left(\Omega^V\clspace(\symalg)\right)^G\right)
            \arrow[d, no head, "\cong"{right}] \\
        \mathcal{A}(\symalg^V)
            \arrow[u, "\cong"{left}]
            \arrow[r, "\cong"{below}]
        & \tilde{\ktheory}_{\symalg}(S^V).
    \end{tikzcd}
\end{equation}
The two classification groups mentioned above are determined by the connected components of the $G$-fixed point of the classifying spaces, as discussed in \cref{subsubsection:the topological classification}. Since $\clspace(\symalg)$ also serves as the classifying space for vector bundles with symmetry $\symalg$, it yields a bundle-theoretic classification $\tilde{\ktheory}_{\symalg}(S^V)$, as detailed in \cref{subsubsection:bundle-theoretic classification}. The canonical map from the classifying space to the bundle-theoretic classification corresponds to the process of ``solving the Hamiltonian'', which is the right-hand map in the diagram. To determine these classifications, we only need to calculate the classification of the mass terms of the Dirac Hamiltonians. Section \ref{subsubsection:the algebraic classification} introduces an algebraic classification $\mathcal{A}(\symalg^V)$ of these mass terms, which corresponds to the left map, as established in \cref{subsubsection:comparison of two classifications}. The proof that this map constitutes an isomorphism can be found in \cref{subsection:Tenfold Way}. Finally, the lower map represents the classic Atiyah--Bott--Shapiro (ABS) construction, as discussed in \cref{subsubsection:comparison of algebraic classification and abs construction}.

\subsection{Symmetries of free fermionic systems}\label{subsection:symmetry of the free fermionic systems}

In quantum mechanics, the symmetry of a system is often described by group theory and group representation. In free fermionic systems, the symmetry behaves in a more complicated way: we need to consider antilinear symmetries, projective representations, and anticommutation with the Hamiltonian. Thus, it is inconvenient to use only the language of group theory; we need to generalize the group to an algebra.

An algebra is a vector space with an associative and unital multiplication. For each group $G$, there is an algebra $\mathbb{C}[G]$ called the group algebra. $\mathbb{C}[G]$ is spanned by the elements $g\in G$ as a vector space, and the multiplication is given by $g\cdot h=gh$. The representation theory of $G$ is equivalent to the representation theory of $\mathbb{C}[G]$; therefore, we can replace $G$ by $\mathbb{C}[G]$ in the study of symmetries in quantum mechanics.

As a symmetry, $\mathbb{C}[G]$ only describes the linear (or unitary) symmetry of the system. In general, the symmetry can be antilinear (or antiunitary). Furthermore, the multiplication rule $g\cdot h=gh$ is also violated in many cases; for example, the rotation symmetry of a spin-$\frac{1}{2}$ system only satisfies this multiplication rule up to a sign.

To describe these symmetries, we generalize $\mathbb{C}[G]$ to a new algebra $\mathbb{C}[G,\omega,s]$. The group homomorphism $s:G\to \mathbb{Z}_2$ is called a $\mathbb{Z}_2$-grading on $G$. $G_0=s^{-1}(0)$ is the subgroup of linear symmetries, and $G_1=s^{-1}(1)$ is the subset of antilinear symmetries. $\omega$ is a two-cocycle, i.e., $[\omega]\in H^2(G,U(1))$. The coefficient group $U(1)$ of the cohomology has a $G$-action that is given by ${}^ga=a$ for $g\in G_0$, and ${}^ga=\bar{a}$ for $g\in G_1$, where $a\in U(1)$, and $\bar{a}$ is the complex conjugate of $a$. The generators of $\mathbb{C}[G,\omega,s]$ are still $g\in G$, but we have $g\cdot h=\omega(g,h)gh$ for $g,h\in G$, and $gi=-ig$ for $g\in G_1$.

Notice that the algebra $\mathbb{C}[G,\omega,s]$ is not an algebra over $\mathbb{C}$. The multiplication of a $\mathbb{C}$ algebra should be $\mathbb{C}$-bilinear, but we have $gi=-ig$ for $g\in G_1$. Thus, $\mathbb{C}[G,\omega,s]$ is actually an algebra over $\mathbb{R}$: $i\in \mathbb{C}$ is considered as an independent generator of this $\mathbb{R}$ algebra, rather than just a scalar.

\subsection{\texorpdfstring{$(0+1)$-dimensional gapped free fermionic systems}{(0+1)-dimensional gapped free fermionic systems}}\label{subsection:0+1-dimensional gapped free fermionic systems}

A 0+1-d gapped free fermionic system is a vector space with a Hamiltonian and a symmetry acting on it. The requirement of a gap partitions the vector space into two subspaces, of the occupied states and of the unoccupied states. Since we only care about the topology, we should focus on the partition itself but not the specific value of energy. Thus, we assume the energy of all occupied states to be $-1$, and the energy of all unoccupied states to be 1.

Under the assumption above, the Hamiltonian of the gapped system is a linear map $H$ such that $H^2=1$. If the system has symmetry, the vector space should be a representation of the algebra $\symalg \coloneqq \mathbb{C}[G,\omega,s]$. Together with the Hamiltonian $H$, we form a new algebra $Cl^{0,1}\otimes\mathbb{C}[G,\omega,s]$, where $Cl^{0,1}$ is the Clifford algebra that has one generator $\tilde{e}_1$ and satisfies the relation $\tilde{e}_1^2 = 1$.

There are symmetries anticommuting with the Hamiltonian $H$, for example, the chiral symmetry. Therefore, we need to introduce a new grading on $G$ that is given by $h:G\to \mathbb{Z}_2$. With this grading, $\mathbb{C}[G,\omega,s]$ becomes a $\mathbb{Z}_2$-graded algebra. A $\mathbb{Z}_2$-graded algebra is a $\mathbb{Z}_2$-graded vector space equipped with a unital and associative multiplication which preserves the grading. There is a theory of $\mathbb{Z}_2$-graded algebras similar to that of ordinary algebras, as will be discussed in \cref{section:Graded Algebra}. \footnote{Notice that $G$ is actually a $\mathbb{Z}_2\times \mathbb{Z}_2$-graded group with grading $s\times h$, and $\mathbb{C}[G,\omega,s]$ is also a $\mathbb{Z}_2\times \mathbb{Z}_2$-graded algebra. In practical calculations, it is convenient to consider both gradings, but for current conceptual purposes, we only need the grading of $h$.}

Now the algebra of the symmetry and the Hamiltonian becomes $Cl^{0,1}\hat{\otimes}\mathbb{C}[G,\omega,s]$. $\hat{\otimes}$ is the super tensor product of $\mathbb{Z}_2$-graded algebras; the generator $\tilde{e}_1\in Cl^{0,1}$ has grading 1. The ordinary tensor product $A\otimes B$ has the property of $a\otimes 1$ commuting with $1\otimes b$ for $a\in A$ and $b\in B$. The super tensor product $\hat{\otimes}$ has the property of $a\hat{\otimes} 1$ anticommuting with $1\hat{\otimes} b$ for $a\in A_1$ and $b\in B_1$, where $A_1$ and $B_1$ are the grading 1 subspaces of $A$ and $B$, when either $a$ or $b$ has grading 0, $a\hat{\otimes} 1$ commutes with $1\hat{\otimes} b$.

A 0+1-d free fermionic system with the symmetry given by the $\mathbb{Z}_2$-graded algebra $\mathbb{C}[G,\omega,s]$ is a representation of $Cl^{0,1}\hat{\otimes}\mathbb{C}[G,\omega,s]$. $Cl^{0,1}\hat{\otimes}\mathbb{C}[G,\omega,s]$ is a semisimple algebra, which means that any representation can be decomposed into a direct sum of irreducible representations. Moreover, as a consequence of semisimplicity, $Cl^{0,1}\hat{\otimes}\mathbb{C}[G,\omega,s]$ can be decomposed into a product of matrix algebras over division algebras, which is called the Wedderburn--Artin decomposition. Then the representation theory of $Cl^{0,1}\hat{\otimes}\mathbb{C}[G,\omega,s]$ reduces to the representation theory of the matrix algebras over division algebras, which is easy to study.

\subsection{Classifying spaces}\label{subsection:classifying space}

To classify quantum systems, let us imagine a space $\clspace$ of all quantum states (with some properties, for example, 0+1-d gapped free fermionic systems), which is called the classifying space. A point in the classifying space $\clspace$ is a quantum state. Given two points in $\clspace$, a path between them should be considered as an equivalence between two quantum states. Given two paths in $\clspace$, the homotopy between them should be considered as an equivalence between two different equivalences between quantum states. This keeps going to higher homotopies; therefore, the classifying space is a homotopy type.

What is the equivalence of two quantum systems? For different types of quantum systems, there should be different kinds of equivalence. For example, for the gapped free fermionic systems, the classifying space is the space of parameters of the gapped Hamiltonians, which is a topological space, and the equivalences are the paths and higher homotopies in the topological space. The physical meaning of this equivalence is the adiabatic deformation of the Hamiltonian. Another example is gapped interacting bosonic systems, where the equivalences are given by invertible gapped domain walls and by domain walls between domain walls.

To get the classifying space for 0+1-d gapped fermionic systems, let us denote the $\mathbb{Z}_2$-graded algebra of symmetry $\mathbb{C}[G,\omega,s]$ by $\symalg$. $Cl^{0,1}\hat\otimes \symalg$ is the algebra of the symmetry together with the Hamiltonian. There is a subtlety that, to get a meaningful classification, we need to consider an infinite-dimensional quantum system for two reasons: (1) the quantum systems in the real world have infinite-dimensional Hilbert spaces, and (2) it is generally allowed to add new degrees of freedom to the system in the study of topological phases.

In the language of representation theory, the infinite-dimensional system is given by a representation $\rho$ of $Cl^{0,1}\hat{\otimes}\symalg$; the $\symalg$ module corresponding to $\rho$ is denoted by $\mathcal{U}(Cl^{0,1}\hat{\otimes}\symalg)$. $\rho$ is a (countably) infinite-dimensional representation of $Cl^{0,1}\hat{\otimes}\symalg$; if we decompose it into irreducible representations, we require that the direct sum components contain all species of irreducible representations, and that for each irreducible representation there are infinitely many copies. $\mathcal{U}(Cl^{0,1}\hat{\otimes}\symalg)$ is called the universe of $Cl^{0,1}\hat{\otimes}\symalg$.

The Hamiltonian of the system is $\rho(\tilde{e}_1\hat\otimes 1)=H_0$. $H_0$ is chosen to be the trivial system. Other symmetric Hamiltonians of the system also form a representation of $Cl^{0,1}\hat\otimes \symalg$. The action of $\symalg$ is the same as $\rho$, but the Hamiltonian $H$ is different from $H_0$. We require $H$ to be different from $H_0$ only on a finite-dimensional subspace of $\mathcal{U}(Cl^{0,1}\hat{\otimes}\symalg)$. The reason is that we do not allow infinitely many high-energy states to be occupied and infinitely many low-energy states to be unoccupied.

Now we define the classifying space. We first define the classifying space\footnote{
    All classifying spaces considered in this work are topological spaces. To define the topology of $\clspace(\symalg)$, we first take an infinite sequence of finite-dimensional submodules of $\mathcal{U}(Cl^{0,1}\hat{\otimes}\symalg)$. We then consider the topological spaces of linear maps $H$ with $H^2 = 1$ on a finite-dimensional submodule. There is a canonical inclusion map between the spaces of linear maps on two different submodules, and if $V\hookrightarrow W$, the inclusion map is given by direct sum with $H_0|_{V'}$, where $W\cong V\oplus V'$. Finally, we take the direct limit of these topological spaces, which automatically has a topological structure.
}
of (not necessarily symmetric) Hamiltonians.
\begin{framed}
    For $(0+1)$-d gapped free fermionic systems with symmetry $\symalg$, the classifying space for (not necessarily symmetric) Hamiltonians is $\clspace(\symalg)$, which is the space of linear maps on $\mathcal{U}(Cl^{0,1}\hat{\otimes}\symalg)$ such that for $H\in \clspace(\symalg)$, $H^2=1$ and $H$ differs from $H_0$ only on a finite-dimensional subspace.
\end{framed}
Since $\symalg=\mathbb{C}[G,\omega,s]$, $\clspace(\symalg)$ has a $G$-action given by $(-1)^{h(g)}\rho(g)H\rho(g)^{-1}$. Notice that $\rho(g)^{-1}=\omega(g^{-1},g)^{-1}\rho(g^{-1})$. Here, we take the convention that $g^{-1}$ is the inverse as a group element in $G$, and $(g)^{-1}$ is the inverse as an algebra element in $\mathbb{C}[G,\omega,s]$. Therefore, we have $(g)^{-1} = \omega(g^{-1},g)^{-1}g^{-1}$ and $\rho(g)^{-1}=\rho((g)^{-1})$. The $G$-action makes $\clspace(\symalg)$ a $G$-space.
\begin{framed}
    For $(0+1)$-d gapped free fermionic systems with symmetry $\symalg$, the classifying space for symmetric Hamiltonians is $\clspace(\symalg)^G$, which is the $G$-fixed point subspace of $\clspace(\symalg)$, i.e., the subspace satisfying $(-1)^{h(g)}\rho(g)H\rho(g)^{-1}=H$ for $H\in \clspace(\symalg)$.
\end{framed}

\subsection{\texorpdfstring{$(q+1)$-dimensional gapped free fermionic systems}{(q+1)-dimensional gapped free fermionic systems}}\label{subsection:q+1-dimensional gapped free fermionic systems}

In $q+1$ dimension, free fermionic systems can be studied through the Hamiltonians on the Brillouin zone. In general, the Brillouin zone is a $q$-dimensional torus $T^q$. Here, we assume the Brillouin zone to be a $q$-dimensional sphere $S^q$, for the reason that we only consider strong topological insulators or superconductors
\footnote{The Brillouin zone comes from the Fourier transformation of a translation-invariant Hamiltonian. It is not known whether a translation-noninvariant Hamiltonian can be studied by a Hamiltonian on the Brillouin zone (Kitaev recently proposed a rigorous method to classify free fermionic systems \cite{alexeikitaevLocalDefinitionsGapped}, which yields the same results as the tenfold way), but we can give some remarks on this point. The periodic boundary conditions in each direction on the torus come from translation invariance in that direction. If we consider a translation-noninvariant system, at least the periodic boundary conditions should be broken. The spherical Brillouin zone is a way to break the periodic boundary conditions, but it retains part of the ``translation invariance'' in the sense that it does not have $q$ translation-invariant directions, but is translation invariant in a ``uniform'' way. We hope that the Hamiltonians on the spherical Brillouin zone are a good approximation to translation-noninvariant systems whose Hamiltonians vary slowly in real space. Then we could study each small region of the real space by the Hamiltonian on the spherical Brillouin zone. Finally, for any Hamiltonian on the spherical Brillouin zone, we can construct a Hamiltonian on the torus via the map $T^q\to S^q$ that maps the periodic boundary of the torus to one point.}.

Some symmetries have an action on the Brillouin zone, for example, the point group symmetries. In general, we consider a $q$-dimensional orthogonal real representation $\rho_V$ of $G$ on $V$, and form the representation sphere $S^V$ by taking the one-point compactification of $V$. The $G$-action on $V$ makes $S^V$ a $G$-space. The representation sphere $S^V$ is a model of spherical Brillouin zone with $G$-action.

To construct a symmetric Hamiltonian on $S^V$, we only need to construct a $G$-equivariant map $H:S^V\to \clspace(\symalg)$ satisfying 
\begin{equation}\label{eq:symmetry condition}
    (-1)^{h(g)}\rho(g)H(k)\rho(g)^{-1}=H(g\cdot k),
\end{equation}
for $k\in S^V$\footnote{The $G$-action on $S^V$ is the $G$-action on the momentum space. Notice that the antilinear symmetry has an extra space inversion on the real space. Our convention differs from that of \cite{shiozakiClassificationSurfaceStates2022}: they put the extra space inversion on the momentum space.}.
\begin{framed}
    For $(q+1)$-d gapped free fermionic systems with symmetry $\symalg$, the classifying space for (not necessarily symmetric) Hamiltonians on the Brillouin zone $S^V$ is $\Omega^V (\clspace(\symalg))$. The classifying space for symmetric Hamiltonians on the Brillouin zone $S^V$ is the $G$-fixed point subspace $\left(\Omega^V \clspace(\symalg)\right)^G$.
\end{framed}
$\Omega^V (\clspace(\symalg))\coloneq \mathrm{Top}_*(S^V,\clspace(\symalg))$ is the mapping space from $S^V$ to $\clspace(\symalg)$ preserving the basepoint. The basepoint of $S^V$ is the point at infinity, and the basepoint of $\clspace(\symalg)$ is $H_0$. There is a $G$-action on $\Omega^V (\clspace(\symalg))$ that is given by $g\cdot f(g^{-1}\cdot-)$ for $f$ in the mapping space.

We give a remark on why the Hamiltonians should preserve the basepoint, i.e., $H(\infty)=H_0$. We have
\begin{equation}
    \mathrm{Top}(S^V,\clspace(\symalg))\simeq \mathrm{Top}_*(S^V,\clspace(\symalg))\times \clspace(\symalg).
\end{equation}
The homotopy equivalence is given by a calculation using stable homotopy theory,
\begin{equation}
    \begin{aligned}
        \mathrm{Top}(S^V,\clspace(\symalg)) & \simeq \mathrm{Top}_*(S^V_+,\clspace(\symalg)) \\
        & \simeq \mathrm{Sp}(\Sigma^{\infty}S^V_+,\clspace(\symalg)) \\
        & \simeq \mathrm{Sp}(\Sigma^{\infty}S^V\vee \Sigma^{\infty}S^0,\clspace(\symalg))\\
        & \simeq \mathrm{Sp}(\Sigma^{\infty}S^V,\clspace(\symalg)) \\
        & \phantom{{}\simeq{}} \quad \times \mathrm{Sp}(\Sigma^{\infty}S^0,\clspace(\symalg)) \\
        & \simeq \mathrm{Top}_*(S^V,\clspace(\symalg))\\
        & \phantom{{}\simeq{}} \quad \times \mathrm{Top}_*(S^0,\clspace(\symalg)) \\
        & \simeq \mathrm{Top}_*(S^V,\clspace(\symalg))\times \clspace(\symalg).
    \end{aligned}
\end{equation}
The homotopy equivalence shows that given a $(q+1)$-d Hamiltonian with $H(\infty)=H_0$ and a $(0+1)$-d Hamiltonian $H'$, we can construct a new $(q+1)$-d Hamiltonian with $H(\infty)=H'$. The choice of $H(\infty)$ corresponds to stacking the Hamiltonian with a trivial band, which is considered as an equivalence relation.

\subsection{Dirac Hamiltonians}\label{subsection:dirac hamiltonians}

\subsubsection{Definition of Dirac Hamiltonians}\label{subsubsection:definition of dirac hamiltonians}

There is a special class of $(q+1)$-d Hamiltonians that are called Dirac Hamiltonians. A Dirac Hamiltonian on $S^V$ is given by
\begin{equation}\label{eq:Dirac Hamiltonian}
    H(k)=
    \begin{cases}
        -k_0M+\sum_{i=1}^q k_i\tilde{\Gamma}_i, &k\in S_+^V\\
        k_0M_0+\sum_{i=1}^q k_i\tilde{\Gamma}_i, &k\in S_-^V,
    \end{cases}
\end{equation}
where we embed $S^V$ into a $(q+1)$-dimensional vector space $V\oplus \mathbb{R}$ as the unit sphere; hence, we have $\sum_{i=0}^{q}k^2_i=1$, and $S_+^V$ and $S_-^V$ are the hemispheres with $k_0\geq0$ and $k_0\leq0$, respectively. $M$, $M_0$, and $\tilde{\Gamma}_i$ are operators that square to 1. $M$ and $\tilde{\Gamma}_i$ anticommute with each other, and $M_0$ and $\tilde{\Gamma}_i$ anticommute with each other.

In order to satisfy the symmetry condition \cref{eq:symmetry condition}, we require
\begin{align}
        (-1)^{h(g)}\rho(g)M\rho(g)^{-1} & =M, \\
        (-1)^{h(g)}\rho(g)M_0\rho(g)^{-1} & =M_0, \\
        (-1)^{h(g)}\rho(g)\tilde\Gamma_i\rho(g)^{-1} & =\sum_j\rho_V(g)_{ji}\tilde\Gamma_j.\label{eq:G-action on Gamma}
\end{align}

\subsubsection{Classifying spaces for Dirac Hamiltonians}\label{subsubsection:classifying space of dirac hamiltonians}

Motivated by Dirac Hamiltonians, we construct a new $\mathbb{Z}_2$-graded algebra $\symalg^V$. $\symalg^V$ is generated by generators of $\symalg$ and $q$ additional generators $\tilde{e}_i$, where $\tilde{e}_i$ satisfy the same relations as $\tilde\Gamma_i$ in Dirac Hamiltonians, hence generating a Clifford algebra, and they also satisfy \cref{eq:G-action on Gamma}, i.e.,
\begin{equation}
    (-1)^{h(g)}g\cdot \tilde{e}_i \cdot (g)^{-1} = \sum_j\rho_V(g)_{ji}\tilde{e}_j = \rho_V(g)(\tilde{e}_i).
\end{equation}
Here, $\{\tilde{e}_i\}$ is considered as a basis of $V$, $\rho_V(g)$ is a linear map on $V$, and $\rho_V(g)_{ji}$ is the matrix under this basis. Recall that we distinguish $g^{-1}$ and $(g)^{-1}$. $g^{-1}$ is the inverse of $g$ as a group element, $(g)^{-1}$ is the inverse of $g$ as an algebra element, and we have $(g)^{-1}=g^{-1}\omega(g,g^{-1})^{-1}$. The generators $\tilde{e}_i$ have grading 1, and the group elements $g\in G$ with $h(g)=1$ also have grading 1; this defines a $\mathbb{Z}_2$-grading structure on $\symalg^V$.

In parallel with the construction in \cref{subsection:classifying space}, but using $\symalg^V$ instead of $\symalg$, we obtain the algebra of Dirac Hamiltonians $Cl^{0,1}\hat{\otimes}\symalg^V$, the infinite-dimensional module $\mathcal{U}(Cl^{0,1}\hat{\otimes}\symalg^V)$, and the classifying space for Dirac Hamiltonians $\clspace(\symalg^V)$. Let $\rho$ be the representation homomorphism of $\mathcal{U}(Cl^{0,1}\hat{\otimes}\symalg^V)$. Then $M_0=\rho(\tilde{e}_1\hat\otimes 1)$ is the basepoint of $\clspace(\symalg^V)$, which represents the trivial Dirac Hamiltonian.
\begin{framed}
    For $(q+1)$-dimensional Dirac Hamiltonians with symmetry $\symalg$, the classifying space for (not necessarily symmetric) Hamiltonians is $\clspace(\symalg^V)$, which is the space of linear maps on $\mathcal{U}(Cl^{0,1}\hat{\otimes}\symalg^V)$ such that for any $M\in\clspace(\symalg^V)$, $M$ differs from $M_0$ only on a finite-dimensional subspace, and upon replacing $M_0$ with $M$, the map $M$ yields a representation of $Cl^{0,1}\hat{\otimes}\symalg^V$.

    For $(q+1)$-d Dirac Hamiltonians with symmetry $\symalg$, the classifying space for symmetric Hamiltonians is $\clspace(\symalg^V)^G$, which is the $G$-fixed point subspace of $\clspace(\symalg^V)$, i.e., $(-1)^{h(g)}\rho(g)M\rho(g)^{-1}=M$ for $M\in \clspace(\symalg^V)$.
\end{framed}
The classifying space $\clspace(\symalg^V)$ is the space of mass terms $-M$ on the north point of the Brillouin zone $S^V$. Given $M\in \clspace(\symalg^V)$, we can construct a Dirac Hamiltonian [\cref{eq:Dirac Hamiltonian}] with $-M$ on the north point and $-M_0$ on the south point of $S^V$.

\subsubsection{Comparison of two classifying spaces: A conjecture}\label{subsubsection:comparison of two classifying spaces}

Notice that there is a canonical map from $\clspace(\symalg^V)$ to $\Omega^V (\clspace(\symalg))$ induced\footnote{
    The classifying space $\clspace(\symalg)$ is the space of $H$ differing from $H_0$ only on a finite-dimensional subspace, but the Dirac Hamiltonian [\cref{eq:Dirac Hamiltonian}] differs from $H_0$ on almost all points of the sphere $S^V$; therefore, we have to show that this map is well defined. The map is constructed from a sequence of Dirac Hamiltonians on finite-dimensional submodules of $\mathcal{U}(Cl^{0,1}\hat{\otimes}\symalg^V)$, followed by taking the direct limit. However, even this construction is not well defined because the construction of Dirac Hamiltonians doesn't commute with the inclusion of classifying spaces. Fortunately, it commutes up to homotopy, therefore, we can take the homotopy direct limit, and since the inclusion maps of finite-dimensional classifying spaces are cofibrations, the direct limit of the classifying spaces is homotopy equivalent to the homotopy direct limit. We therefore obtain a well-defined map up to homotopy.
}
by the construction of Dirac Hamiltonians from the mass term using \cref{eq:Dirac Hamiltonian}, and we set the basepoint $H_0$ of $\clspace(\symalg)$ and the basepoint $M_0$ of $\clspace(\symalg^V)$ to be equal [we assume the universe $\mathcal{U}(Cl^{0,1}\hat{\otimes}\symalg)$ is constructed from $\mathcal{U}(Cl^{0,1}\hat{\otimes}\symalg^V)$ by restriction of algebras $Cl^{0,1}\hat{\otimes}\symalg\hookrightarrow Cl^{0,1}\hat{\otimes}\symalg^V$]. There is a conjecture:
\begin{framed}
    $\clspace(\symalg^V)\simeq \Omega^V (\clspace(\symalg))$ is a $G$-equivariant homotopy equivalence.
\end{framed}
This demonstrates that studying the classifying space for Dirac Hamiltonians is sufficient. It should be noted that this result has been used implicitly in physics-related works, for example, Refs. \cite{cornfeldTenfoldTopologyCrystals2021,shiozakiClassificationSurfaceStates2022}.

The $G$-homotopy equivalence relies on a generalization of the fundamental theorem of equivariant $K$-theory established by Karoubi \cite{karoubiKtheorieEquivariante1970}. Assuming this generalization holds, the proof of the $G$-homotopy equivalence is presented in Appendix \ref{section:K-theory}.

\subsubsection{Atomic insulators and Bott periodicity}\label{subsubsection:atomic insulator and Bott periodicity}

The aforementioned $G$-equivariant homotopy equivalence alone is insufficient to package $\clspace(\symalg^V)$ into a genuine $G$-$\Omega$-spectrum. If we denote a genuine $G$-$\Omega$-spectrum as $X_V$, it should satisfy $\Omega^V X_{V\oplus W} \simeq X_W$. Denoting $\clspace(\symalg)$ as $X_0$, currently we only have $\Omega^V X_0 \simeq \clspace(\symalg^V)$; therefore, $\clspace(\symalg^V)$ is effectively $X_{-V}$, and we still need to find the definition of $X_{V}$. Here, $-V$ employs the concept of a virtual representation, which is the formal difference $V-W$ of two representations, making the set of all representations an abelian group.

The task below is to find a real representation $V'$ of $G$ such that $\clspace(\symalg^{V\oplus V'}) \simeq \clspace(\symalg)$. Consequently, we have $X_0 \simeq X_{-V-V'}$, which further leads to $X_{V} \simeq X_{V-V-V'} \simeq X_{-V'}$. Thus, we can define $X_{V}$ as $X_{-V'}$, namely $\clspace(\symalg^{V'})$.

We will prove that $V'$ is the representation satisfying $\rho_{V'}(g) = (-1)^{s(g)}\rho_V(g)$. There exists the following $\mathbb{Z}_2$-graded algebra isomorphism:
\begin{equation}
    \symalg^{V\oplus V'} \cong \symalg^{Q_V\oplus Q_{V''}} \cong Cl^{d, 0} \hat{\otimes} Cl^{0, d} \hat{\otimes} \symalg \textrm{,}
\end{equation}
where the first isomorphism is induced by $\tilde{e}_i\mapsto i e_i$ with $\tilde{e}_i\in V'$. The definition of $\symalg^{Q}$ can be found in \cref{subsection:Z2-Graded Algebra AV}. $(V, Q_V)$ is a space equipped with a quadratic form $Q_V$ whose diagonal elements are all 1. $(V'', Q_{V''})$ is a space equipped with a quadratic form $Q_{V''}$ whose diagonal elements are all $-1$, satisfying $\rho_{V''}(g) = \rho_V(g)$. The second isomorphism is given by the ``separation of variables''; see \cref{thm:separation of variable}. We also need to check whether the two-cocycle and the $\mathbb{Z}_2$-grading after the separation of variables are consistent with $\symalg$. Note that $o_V = o_{V''}$ and $o_{V\oplus V''} = o_{V} + o_{V''} = 0$. Therefore, the $\mathbb{Z}_2$-grading after separation of variables remains $h$. The two-cocycle after separation of variables is
\begin{align}
    & \phantom{{}={}} \omega + \frac{1}{2}\omega_{Q_V\oplus Q_{V''}} + \frac{1}{2}(h + o_{V\oplus V''})\cup o_{V\oplus V''}  \\
    & = \omega + \frac{1}{2}( w_2^+ + w_2^- + w_1^+\cup w_1^- + w_1^-\cup w_1^- ) \\
    & = \omega
\end{align}
The third term in the first line is 0. The second term follows from \cref{cor:2 of sw class}, noting that $w_1^+ = o_V$, $w_1^- = o_{V''}$, and $w_2^+ = w_2^-$. This completes the proof of the algebraic isomorphism. Note that $Cl^{d,d}$ is a $\mathbb{Z}_2$-graded matrix algebra; therefore, the representation theory of $\symalg^{V\oplus V'}$ is identical to that of $\symalg$, which implies that their classifying spaces are $G$-homotopy equivalent. Thus, we obtain $\clspace(\symalg^{V\oplus V'}) \simeq \clspace(\symalg)$.

The $G$-homotopy equivalence $\clspace(\symalg^{V\oplus V'}) \simeq \clspace(\symalg)$ is a generalization of the $(1,1)$-Bott periodicity in KR-theory~\cite{atiyahKtheoryReality1966}. Using this $G$-homotopy equivalence, we can define the classifying space $X_{V-U} \simeq X_{-V'-U} \simeq \clspace(\symalg^{V'\oplus U})$ for any virtual representation $V - U$, where $V'$ is as defined above. Furthermore, we have 
\begin{equation}
    \begin{aligned}
        \Omega^V X_{V + W} & \simeq \Omega^V X_{- V' - W'} \simeq \Omega^{V\oplus V' \oplus W'} X_{0} \\
        & \simeq \Omega^{W'} X_{- V - V'} \simeq \Omega^{W'} X_{0} \simeq X_{-W'} \\
        & \simeq X_{W}.
    \end{aligned}
\end{equation}
Therefore, $\clspace(\symalg^V)$ is a genuine $G$-$\Omega$-spectrum.

Physically, the classification of atomic insulators is identical to that of zero-dimensional systems, and atomic insulators lack nontrivial boundary states. Note that the physical significance of the equivalence $\clspace(\symalg^{V\oplus V'}) \simeq \clspace(\symalg)$ is that the classifying space of zero-dimensional systems is equivalent to the classifying space of systems on $V\oplus V'$, where the latter constitutes the classifying space of systems with no nontrivial boundary states. This $G$-homotopy equivalence provides a rigorous description of the boundary-state problem of atomic insulators.

\subsection{The classification of gapped free fermionic Hamiltonians}\label{subsection:The classification of gapped free fermionic hamiltonians}

\subsubsection{The topological classification}\label{subsubsection:the topological classification}

The classification of free fermionic systems is given by the connected components of the classifying spaces. We use 
\begin{equation}
    \pi_0\left( \left(\Omega^V \clspace(\symalg)\right)^G\right) \cong \pi_0\left(\clspace(\symalg^V)^G\right)
\end{equation}
as the classification, where $\pi_0$ is the zeroth homotopy group, which corresponds to the set of path-connected components.

\subsubsection{The algebraic classification}\label{subsubsection:the algebraic classification}

The Grothendieck group $\mathcal{M}(A)$ of a semisimple algebra $A$ is the free abelian group generated by the isomorphism classes of irreducible representations of $A$. Given an algebra homomorphism $f:A\to B$, it induces a group homomorphism $f^*:\mathcal{M}(B)\to \mathcal{M}(A)$.

We define the algebraic classification $\mathcal{A}(\symalg^V)$ of free fermionic systems with symmetry $\symalg$ as the cokernel of
\begin{equation}
    i^*:\mathcal{M}(Cl^{0,2}\hat\otimes \symalg^V)\to \mathcal{M}(Cl^{0,1}\hat\otimes \symalg^V)
\end{equation}
induced by the inclusion 
\begin{equation}
    i:Cl^{0,1}\hat\otimes \symalg^V\to Cl^{0,2}\hat\otimes \symalg^V,
\end{equation}
i.e., $\mathcal{A}(\symalg^V) \coloneq \mathcal{M}(Cl^{0,1}\hat\otimes \symalg^V)/\mathrm{Im}(i^*)$.

There is an equivalent way to construct the algebraic classification using the $\mathbb{Z}_2$-graded Grothendieck group. For a $\mathbb{Z}_2$-graded semisimple algebra $A$, the Grothendieck group $\mathcal{N}(A)$ is a free abelian group that is generated by the isomorphism classes of irreducible $\mathbb{Z}_2$-graded representations of $A$. The algebraic classification is given by the cokernel of $i^*:\mathcal{N}(Cl^{0,1}\hat\otimes \symalg^V)\to \mathcal{N}(\symalg^V)$; we still denote it by $\mathcal{A}(\symalg^V)$.

To show the equivalence of the two Grothendieck groups above, notice that a $\mathbb{Z}_2$-grading structure is equivalent to an extra linear map $f$ satisfying $f^2=\mathrm{id}$. The grading 0 subspace corresponds to $f=1$, and the grading 1 subspace corresponds to $f=-1$.  In our case, $f$ is precisely the element $\tilde{e}_1 \hat{\otimes} 1$ in $Cl^{0,1} \hat{\otimes} \symalg^V$.

\subsubsection{Comparison of two classifications}\label{subsubsection:comparison of two classifications}

There is a canonical map from the algebraic classification to the topological classification, i.e., $\mathcal{A}(\symalg^V)\to \pi_0\left(\clspace(\symalg^V)^G\right)$. The map is obtained from ``changing the sign of the mass term'', which is a very common technique in the physics study of boundary states. Here, we explain the precise mathematical meaning of this technique.

Given $[W]\in \mathcal{A}(\symalg^V)$, where $W$ is a module over $Cl^{0,1}\hat{\otimes}\symalg^V$, and an arbitrary injective map $W\hookrightarrow \mathcal{U}(Cl^{0,1}\hat{\otimes}\symalg^V)$, the image of $[W]$ is given by $-M_0|_W$, where $M_0$ is the basepoint of $\clspace(\symalg^V)^G$.
\begin{framed}
    The map ``changing the sign of the mass term'' $\mathcal{A}(\symalg^V)\cong \pi_0\left(\clspace(\symalg^V)^G\right)$ is a group isomorphism.
\end{framed}
The proof can be found in \cref{subsubsection:decomposition of the classifying spaces}.

\subsubsection{\texorpdfstring{Bundle-theoretic classification}{Bundle-theoretic classification}}\label{subsubsection:bundle-theoretic classification}

Let $H(k)$ be a Hamiltonian on the $k$-space $S^V$ with symmetry $\symalg$. There is a vector bundle $E$ over $S^V$ spanned by wave functions $\{ \psi_i(k) \}_{i=1}^N$. $\{ \psi_i(k) \}_{i=1}^N$ can be chosen such that it forms a global frame of $E$; therefore, the vector bundle $E$ is trivial. The Hamiltonian $H(k)$ is a vector bundle map from $E$ to itself. If we solve the Hamiltonian $H(k)$, we will get a vector bundle\footnote{It is proved in \cite{karoubiKtheoryIntroduction2009} that for a vector bundle map $f$ satisfying $f^2=\mathrm{id}$, $\mathrm{Ker}(f\pm 1)$ always exists. Thus, $\mathrm{Ker}(H-1)$ is $E_0$, and $\mathrm{Ker}(H+1)$ is $E_1$. This property is called Karoubi completeness.} $E_0$ on $S^V$ spanned by states with $H(k)=1$, and a vector bundle $E_1$ on $S^V$ spanned by states with $H(k)=-1$. We have $E\cong E_0\oplus E_1$.

There is an $\symalg$ action on the trivial bundle $E$. The action of $g\in \symalg$ maps a vector in the fiber over $k\in S^V$ to a vector in the fiber over $g\cdot k$. If $H(k)$ is $G$ symmetric, for $h(g)=0$, the action of $g$ preserves $E_0$ and $E_1$; for $h(g)=1$, the action of $g$ exchanges $E_0$ and $E_1$.

Motivated by the discussion above, we define a class of vector bundles equipped with these structures; we denote the category of such bundles on a $G$-space $X$ by $\mathrm{Vect}_{\symalg}(X)$. An object of $\mathrm{Vect}_{\symalg}(X)$ is a trivial vector bundle $E$ that is the product of $X$ and an ungraded $\symalg$ module. The bundle $E$ carries a $\mathbb{Z}_2$-grading compatible with the action of $G$, i.e., $E \cong E_0 \oplus E_1$, satisfying the condition that, for $g \in G$, if $h(g) = 0$ then $g$ acts as a linear or antilinear bundle map preserving $E_0$ and $E_1$, while if $h(g) = 1$ then $g$ acts as a linear or antilinear bundle map that exchanges $E_0$ and $E_1$.

A morphism of $\mathrm{Vect}_{\symalg}(X)$ from $E$ to $E'$ is a bundle map that is a $\symalg$-module map on each fiber and preserves the $\mathbb{Z}_2$-grading.

We define a bundle-theoretic Abelian group $\tilde{\ktheory}_{\symalg}(X)$
\footnote{The definition of $\tilde{\ktheory}_{\symalg}(X)$ may look unusual at first glance; however, it is actually equivalent to Karoubi's definition of $K$-theory (see \cref{subsection:reduced K-theory} for details). It also reduces to several familiar $K$-theories for specific choices of $\symalg$.}.
An element of $\tilde{\ktheory}_{\symalg}(X)$ is an equivalence class of isomorphism classes of $\mathrm{Vect}_{\symalg}(X)$. The equivalence is given by $(E_0,E_1)\sim (E_0\oplus F_0,E_1\oplus F_1)$, where $F$ is an element of $\mathrm{Vect}_{\symalg}(X)$ such that $F_0$ and $F_1$ are trivial vector bundles; note that $F_0$ and $F_1$ may carry nontrivial actions of $\symalg$. The abelian group structure is defined by $(E_0,E_1)\oplus(E'_0,E'_1):=(E_0 \oplus E'_0 , E_1 \oplus E'_1)$.

The physical interpretation of $F=(F_0, F_1)$ is that of a trivial state: a system consisting of trivial bands that carry some representation of $\symalg$. $F_0$ corresponds to a trivial band of unoccupied states, while $F_1$ corresponds to a trivial band of occupied states. The equivalence relation then reflects the idea that adding trivial bands to the system does not change its classification.

\begin{eg}
    If $\symalg$ has only linear symmetries, a trivial two-cocycle, and a trivial $\mathbb{Z}_2$-grading $h$, then $\tilde{\ktheory}_{\symalg}(X)$ reduces to the reduced $G$-equivariant complex $K$-theory. In this case, $E_0$ and $E_1$ are complex $G$ bundles, and $E_0$ and $E_1$ are inverses of each other in the $K$ group. Hence, $E_0$ and $E_1$ determine one another, so only one of them is needed to represent the $K$ group.
\end{eg}

\subsubsection{Comparison to ABS construction}\label{subsubsection:comparison of algebraic classification and abs construction}

The ABS construction \cite{atiyahCliffordModules1964} is an isomorphism from the algebraic classification $\mathcal{A}(\symalg^V)$ to the bundle-theoretic classification $\tilde{\ktheory}_{\symalg}(S^V)$.

Let $W$ be a $\mathbb{Z}_2$-graded $\symalg^V$ module. The ABS construction is given by the direct sum $E_0 \oplus E_1$. The space $E_0$ is the clutching of $S^V_+ \times W_0$ and $S^V_- \times W_1$ along the boundaries of $S^V_+$ and $S^V_-$, where the fibers are identified via the transition function $v \mapsto \rho_W(v)$. Here, $v$ lies in the boundary of $S^V_+$ and may also be regarded as an element of $\symalg^V$ by viewing $\tilde{e}_i$ as a basis of $V$. The space $E_1$ is the clutching of $S^V_+ \times W_1$ and $S^V_- \times W_0$ with the transition function $v \mapsto -\rho_W(v)$.

Next, we show that the ABS construction $E$ of $W$ is the vector bundle $F$ obtained by solving the Dirac Hamiltonian [\cref{eq:Dirac Hamiltonian}] with $M=-M_0$. Recall that $W$ can also be considered as an ungraded module over $Cl^{0,1}\hat{\otimes}\symalg^V$. As a direct consequence, the diagram \cref{eq:diagram of classification} commutes.

We only need to show $E$ and $F$ are isomorphic on each hemisphere, and that the transition function of $E$ acts trivially on each fiber of $F$ over the boundaries of the hemispheres (using the local isomorphism on each hemisphere to transfer the transition function of $E$ to $F$). Note that $F$ is a subbundle of a trivial bundle; we use the elements of the trivial bundle to represent the elements of $F$; therefore, the transferred transition function must act trivially on $F$.

We first consider $F_0$ with $H(k)=1$. The vector bundle on $S_+^V$ is trivial; there is an isomorphism $f: F_0|_{S^V_+} \to S^V_+ \times W_0\cong E_0|_{S^V_+}$. $f$ is given by $(k,v)\mapsto (k,v+M_0v)$ for $v\in W$ with $H(k)v=v$, and $f^{-1}$ is given by $(k,w)\mapsto (k,\frac{1}{2(1+k_0)}(w+H(k)w))$ for $w\in W_0$. The isomorphism $g: F_0|_{S^V_-} \to S^V_- \times W_1\cong E_0|_{S^V_-}$ is given by $(k,v)\mapsto (k,v-M_0v)$ for $v\in W$ with $H(k)v=v$, and $g^{-1}$ is given by $(k,w)\mapsto (k,\frac{1}{2(1-k_0)}(w+H(k)w))$ for $w\in W_1$. The transition function on $E$ from $S_+^V$ to $S_-^V$ coincides with $H(k)$ on $F$; therefore, it acts trivially.

For $F_1$ with $H(k)=-1$, we have the isomorphism $h: F_1|_{S^V_+} \to S^V_+ \times W_1\cong E_1|_{S^V_+}$. $h$ is given by $(k,v)\mapsto (k,v-M_0v)$ for $v\in W$ with $H(k)v=-v$, and $h^{-1}$ is given by $(k,w)\mapsto (k,\frac{1}{2(1+k_0)}(w-H(k)w))$ for $w\in W_1$. We also have the isomorphism $l: F_1|_{S^V_-} \to S^V_- \times W_0\cong E_1|_{S^V_-}$. $l$ is given by $(k,v)\mapsto (k,v+M_0v)$ for $v\in W$ with $H(k)v=-v$, and $l^{-1}$ is given by $(k,w)\mapsto (k,\frac{1}{2(1-k_0)}(w-H(k)w))$ for $w\in W_0$. The transition function on $E$ from $S_+^V$ to $S_-^V$ coincides with $-H(k)$ on $F$; therefore, it acts trivially.

\subsection{Tenfold way}\label{subsection:Tenfold Way}

\subsubsection{Calculation of the Algebraic Classification}\label{subsubsection:calculation of the algebraic classification}

$\symalg^V$ is a $\mathbb{Z}_2$-graded semisimple $\mathbb{R}$ algebra; it can be decomposed into a product of $\mathbb{Z}_2$-graded matrix algebras over $\mathbb{Z}_2$-graded division algebras, i.e., the $\mathbb{Z}_2$-graded Wedderburn--Artin decomposition. We devote the remaining sections to calculating the decomposition. Thus, throughout the rest of this section, we assume that the decomposition has already been obtained, i.e., we have \begin{equation}
    \symalg^V\cong \prod_i M_{d_i}(D_i)(\bar{c}_i),
\end{equation}
where $D_i$ is a $\mathbb{Z}_2$-graded division algebra. The definition of $\mathbb{Z}_2$-graded matrix algebra can be found in \cref{subsubsection:semisimple graded algebra}.

There are ten different isomorphism classes of $\mathbb{Z}_2$-graded division $\mathbb{R}$ algebras. Three of them are ordinary division algebras $\mathbb{R}$, $\mathbb{C}$, and $\mathbb{H}$, and the other seven are $Cl^{0,1}$, $Cl^{1,0}$, $Cl^{2,0}$, $Cl^{0,2}$, $\mathbb{C}l^{1}$, $Cl^{0,3}$ and $Cl^{3,0}$. This is where the ``tenfold way'' comes from.

To calculate $\mathcal{A}(\symalg^V)$, it is easy to show that the $\mathbb{Z}_2$-graded Grothendieck group has the property $\mathcal{N}(\prod_i A_i)\cong \bigoplus_i \mathcal{N}(A_i)$, thus giving the decomposition
\begin{equation}
    \mathcal{N}(\symalg^V)\cong \bigoplus_i \mathcal{N}(D_i),
\end{equation}
here we used the fact that tensoring with a $\mathbb{Z}_2$-graded matrix algebra doesn't change the equivalence class of the $\mathbb{Z}_2$-graded module category of a $\mathbb{Z}_2$-graded algebra, equivalently, they are $\mathbb{Z}_2$-graded Morita equivalent. Similarly, we have
\begin{equation}
    \mathcal{N}(Cl^{0,1}\hat\otimes \symalg^V)\cong \bigoplus_i \mathcal{N}(Cl^{0,1}\hat{\otimes}D_i).
\end{equation}
Thus, we obtain the decomposition,
\begin{equation}
    \mathcal{A}(\symalg^V)\cong \bigoplus_i \mathcal{A}(D_i).
\end{equation}
Now the calculation of $\mathcal{A}(\symalg^V)$ reduces to the calculation of $\mathcal{A}(D_i)$, which is a well-known result; see, for example, Ref. \cite{atiyahCliffordModules1964}. We record the value of $\mathcal{A}(D_i)$ in \cref{table:A(Clpq)}.

\subsubsection{Decomposition of the classifying spaces}\label{subsubsection:decomposition of the classifying spaces}

We calculate the structure of $\clspace(\symalg^V)^G$. Notice that we have
\begin{equation}
    \mathcal{U}(Cl^{0,1}\hat\otimes \symalg^V)\cong \bigoplus_i \mathcal{U}\left(Cl^{0,1}\hat\otimes M_{d_i}(D_i)(\bar{c}_i)\right).
\end{equation}
$Cl^{0,1}\hat\otimes \symalg^V$ acts on the right-hand side through the projection to each product factor of the Wedderburn--Artin decomposition. $\clspace(\symalg^V)^G$ is the space of the linear map $M$ representing $\tilde{e}_1\hat{\otimes} 1$, or equivalently, the space of gradings. Notice that $M$ is an $(\symalg^V)_0$-module map, and each direct sum factor of $\mathcal{U}(Cl^{0,1}\hat\otimes \symalg^V)$ only contains nonisomorphic $(\symalg^V)_0$ modules; therefore, $M$ will not mix different direct sum factors. Thus, we get the decomposition
\begin{equation}
    \clspace(\symalg^V)^G\cong \prod_i \clspace\left(M_{d_i}(D_i)(\bar{c}_i)\right)\simeq \prod_i \clspace(D_i).
\end{equation}

Finally, notice that the map ``changing the sign of the mass term'' is also compatible with the decomposition. Thus, to prove that the map is an isomorphism, it suffices to prove that $\mathcal{A}(D_i)\to \pi_0\left(\clspace(D_i)\right)$ is an isomorphism, which can be checked case by case for all ten basic cases.

%% file: 3.graded_algebra.tex
\section{Graded Algebras}\label{section:Graded Algebra}

Let $\Gr$ be a group. A $\Gr$-graded algebra is an algebra equipped with a $\Gr$-grading structure. Many algebras naturally have a grading structure. For example, the group algebra containing antilinear elements has a natural $\mathbb{Z}_2$-grading, and $\symalg$ has a natural $\mathbb{Z}_2\times \mathbb{Z}_2$-grading. In this section, we first review the basic theory of graded algebras, referring to \cite{nastasescuMethodsGradedRings2004} as a monograph on graded algebras. We then focus on $\Gr = \mathbb{Z}_2^N$. Readers familiar with graded algebra fundamentals may proceed by reading only \cref{thm:structure theorem of Gr-crossed product algebra} and the classification of $\mathbb{Z}_2^N$-graded division algebras in \cref{subsection:classification of Z2N-graded division algebra}; they may skip this chapter and return when needed.

In what follows, we assume that $\Gr$ is an abelian group, and we use $+$ for the multiplication.

\subsection{Definitions}

\subsubsection{Graded vector spaces}\label{subsubsection:graded vector space}

Let $\mathrm{Vec}_{\mathbb{K}}^{\Gr}$ be the category of $\Gr$-graded vector spaces over $\mathbb{K}$. The objects are $\Gr$-graded vector spaces $V=\bigoplus_{c\in \Gr}V_c$, and the morphisms are linear maps $f:V\to W$ such that $f(V_c)\subset W_c$. The tensor product of $V$ and $W$ is defined by $(V\otimes W)_c=\bigoplus_{c_1+c_2=c}V_{c_1}\otimes W_{c_2}$, which makes $\mathrm{Vec}_{\mathbb{K}}^{\Gr}$ a monoidal category.

An algebra object internal to a monoidal category is an object $A$ equipped with a multiplication $A\otimes A\to A$ and a unit $1\to A$ satisfying the associativity and unit axioms. Similarly, an $A$-module object internal to a monoidal category is an object $V$ equipped with an $A$ action $A\otimes V\to V$ satisfying the associativity and unit axioms. By taking the monoidal category to be $\mathrm{Vec}_{\mathbb{K}}^{\Gr}$, the algebra object becomes a $\Gr$-graded algebra, and the $A$-module object becomes a $\Gr$-graded $A$ module.

Let $\mathrm{Hom}^\Gr(V,W)$ be the vector space of $\Gr$-graded maps, and $\underline{\mathrm{Hom}}^\Gr(V,W)$ be the $\Gr$-graded vector space of all linear maps whose $c$ component is the vector space of $\Gr$-graded maps $V\to W(c)$, where $W(c)$ is the same vector space as $W$ with grading shifted by $c$, i.e., $W(c)_{c'}=W_{c'+c}$. Similarly, $\mathrm{Mod}_A^\Gr(V,W)$ is the vector space of $A$-module maps that preserve the grading, and $\underline{\mathrm{Mod}}_A^\Gr(V,W)$ is the $\Gr$-graded vector space of all $A$-module maps.

\subsubsection{Semisimple graded algebra}\label{subsubsection:semisimple graded algebra}

We first recall the definition of semisimple algebras. An algebra $A$ is semisimple if it is a direct sum of simple $A$ modules. The major property of semisimple algebras is the Wedderburn--Artin theorem, which states that any semisimple algebra is isomorphic to a finite product of matrix algebras over division algebras, i.e., $\Phi=(\rho_i):A \cong \prod_i M_{d_i}(D_i)$, where $i$ takes values in the isomorphism classes of simple $A$-modules, and $D_i$ are division algebras. For $\mathbb{K}=\mathbb{C}$, $D_i$ must be $\mathbb{C}$, and for $\mathbb{K}=\mathbb{R}$, $D_i$ is either $\mathbb{R}$, $\mathbb{C}$, or $\mathbb{H}$.

A simple module is of $D$ type if the endomorphism algebra of the module is $D$. The irreducible representations can be extracted from the components $\rho_i$ of $\Phi$. Taking the canonical module $D_i^d$ for $M_d(D_i)$ and forgetting the $D_i$ structure, $\rho_{i}$ becomes an irreducible representation of $A$ with a canonical $D_i^{\mathrm{op}}$ structure acting from the right-hand side (see Appendix \ref{subsection:Convention for Matrix Algebra} for details), i.e., the module is of $D_i^{\mathrm{op}}$ type.

The theory of $\Gr$-graded semisimple algebras is similar. A $\Gr$-graded algebra $A$ is semisimple if it is a direct sum of simple $\Gr$-graded $A$ modules. A $\Gr$-graded semisimple algebra can also be decomposed into a finite product of $\Gr$-graded matrix algebras over $\Gr$-graded division algebras, which is called the $\Gr$-graded Wedderburn--Artin decomposition. Thus, we first introduce the definition of $\Gr$-graded matrix algebras. Let $A$ be a $\Gr$-graded algebra. Then, $M_d(A)(\bar{c})$ is the $\Gr$-graded algebra of $d\times d$ matrices over $A$ with shifted grading $\bar{c}=(c_1,\dots,c_d)\in \Gr^d$. The grading $\lambda$ component is given by
\begin{equation}\label{eq:graded matrix algebra}
    M_d(A)(\bar{c})_{\lambda}=
    \begin{bmatrix}
        A_{c_1+\lambda-c_1} & A_{c_1+\lambda-c_2} & \cdots & A_{c_1+\lambda-c_d} \\
        A_{c_2+\lambda-c_1} & A_{c_2+\lambda-c_2} & \cdots & A_{c_2+\lambda-c_d} \\
        \vdots & \vdots & \ddots & \vdots \\
        A_{c_d+\lambda-c_1} & A_{c_d+\lambda-c_2} & \cdots & A_{c_d+\lambda-c_d}
    \end{bmatrix}.
\end{equation}
Then, the $\Gr$-graded Wedderburn--Artin theorem is
\begin{thm}\label{thm:structure theorem of semisimple graded algebra}
    Let $A$ be a semisimple $\Gr$-graded algebra. Then, $A$ is isomorphic to $\prod_{i}M_{d_i}(D_i)(\bar{c}_i)$, where $i$ takes values in the isomorphism classes of simple $A$ modules up to grading shifts, $\bar{c}_i=(c_{i1},\dots,c_{id_i})\in \Gr^{d_i}$, and $D_i$ is a graded division algebra, i.e., all nonzero homogeneous elements $a\in D_c$ are invertible.
\end{thm}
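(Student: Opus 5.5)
We prove the statement by mimicking the ungraded Wedderburn--Artin argument inside the category of $\Gr$-graded $A$-modules. The two ingredients are a graded Schur lemma and the identification $A\cong \mathrm{End}_A(A)^{\mathrm{op}}$, carried out with the graded Hom spaces $\underline{\mathrm{Mod}}^\Gr_A$ introduced in \cref{subsubsection:graded vector space}.

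\emph{Step 1: the regular module.} By semisimplicity, write the regular module $A$ as a direct sum of simple graded modules. Since $A$ is finite dimensional over $\mathbb{K}$, this sum is finite. Group the summands by isomorphism class up to grading shift. For each class $i$, fix a representative $S_i$; then the summands in that class are $S_i(c_{i1}),\dots,S_i(c_{id_i})$ for suitable shifts $c_{ij}\in\Gr$. This gives
\[
A\cong\bigoplus_i\bigoplus_{j=1}^{d_i}S_i(c_{ij}).
\]

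\emph{Step 2: graded Schur lemma.} Let $S,T$ be simple graded modules and let $f\in\underline{\mathrm{Mod}}^\Gr_A(S,T)_c$ be homogeneous. Then $f$ is a graded map $S\to T(c)$. Its kernel and image are graded submodules, so $f$ is either zero or an isomorphism. Consequently, $D_i:=\underline{\mathrm{Mod}}^\Gr_A(S_i,S_i)$ is a $\Gr$-graded algebra in which every nonzero homogeneous element is invertible, i.e.\ a graded division algebra. Moreover, $\underline{\mathrm{Mod}}^\Gr_A(S_i,S_k(c))=0$ for all $c$ whenever $i\neq k$.

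\emph{Step 3: assembling the matrix algebra.} Recall that $\underline{\mathrm{Mod}}^\Gr_A(A,A)\cong A^{\mathrm{op}}$ via right multiplication; this is an isomorphism of graded algebras. Decompose this endomorphism algebra into blocks using Step 1. By the vanishing in Step 2, the off-diagonal blocks between different classes are zero, so the algebra splits as a product over $i$. The $i$-th factor is the algebra of $d_i\times d_i$ matrices with $(j,k)$ entry in
\[
\underline{\mathrm{Mod}}^\Gr_A\bigl(S_i(c_{ik}),S_i(c_{ij})\bigr),
\]
which equals $D_i$ with its grading shifted by $c_{ij}-c_{ik}$. Comparing these shifts with \cref{eq:graded matrix algebra} identifies the factor as $M_{d_i}(D_i)(\bar c_i)$, possibly up to a sign or ordering convention on the shifts.

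\emph{Step 4: removing the opposite.} Steps 1--3 give $A^{\mathrm{op}}\cong\prod_i M_{d_i}(D_i)(\bar c_i)$. To obtain a statement about $A$ itself, take opposites on both sides. Use $M_d(D)(\bar c)^{\mathrm{op}}\cong M_d(D^{\mathrm{op}})(\bar c')$ via transposition, where $\bar c'$ is $\bar c$ with negated shifts. Then rename $D^{\mathrm{op}}$ as $D$; the opposite of a graded division algebra is again a graded division algebra. Finally, check that the index set consists of the isomorphism classes of simple modules up to shift, as claimed in \cref{thm:structure theorem of semisimple graded algebra}.

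I expect the main obstacle to be the bookkeeping of the gradings in Step 3: verifying that the degree of the $(j,k)$ entry matches the pattern $A_{c_j+\lambda-c_k}$ of \cref{eq:graded matrix algebra}, including the correct sign convention after the transpose in Step 4. If the signs fail to match, the fix is to replace $\bar c$ by $-\bar c$ in the final statement, which is harmless because $\bar c_i$ is arbitrary in the theorem.
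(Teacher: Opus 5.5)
Your proof is correct and is essentially the paper's argument: decompose the regular module into graded simples grouped by shift class, use the graded Schur lemma to kill the blocks between different classes, and identify the graded endomorphism algebra of the regular module with $\prod_i M_{d_i}(D_i)(\bar c_i)$, where $D_i=\underline{\mathrm{Mod}}^\Gr_A(S_i,S_i)$. Your Step 2 usefully spells out the Schur step, which the paper leaves implicit.

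The only real difference is that the paper decomposes $A$ as a \emph{right} module. Then $A\cong\underline{\mathrm{Mod}}^\Gr_A(A,A)$ directly via left multiplication, so your Step 4 detour through $A^{\mathrm{op}}$ is not needed. Also, no sign adjustment is needed in Step 3: the degree-$\lambda$ part of $\underline{\mathrm{Mod}}^\Gr_A\bigl(S_i(c_{ik}),S_i(c_{ij})\bigr)$ is exactly $(D_i)_{c_{ij}+\lambda-c_{ik}}$, which matches \cref{eq:graded matrix algebra}.
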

\begin{proof}
    $A$ can be decomposed into a direct sum of simple right $A$ modules. Let $V$ and $W$ be two simple modules, which are either isomorphic or nonisomorphic. If they are nonisomorphic, then $\mathrm{Mod}_A^\Gr(V,W)=0$. However, $\underline{\mathrm{Mod}}_A^\Gr(V,W)$ may not be 0, and $V$ could be isomorphic to $W(c)$ for some $c\in \Gr$. Thus, we can decompose $A$ into $\bigoplus_i\left( \bigoplus_{j=1}^{d_i} V_i(c_{ij})\right)$, where each $V_i$ is simple and $\underline{\mathrm{Mod}}_A^\Gr(V_i,V_j)=0$ for $i\not =j$.

    We have
    \begin{widetext}
    \begin{equation}
        A \cong \underline{\mathrm{Mod}}_A^\Gr(A,A) \cong \bigoplus_i \left(\bigoplus_{j=1}^{d_i} \bigoplus_{k=1}^{d_i} \underline{\mathrm{Mod}}_A^\Gr\left(V_i(c_{ij}),V_i(c_{ik})\right)\right),
    \end{equation}
    \end{widetext}
    and define $\underline{\mathrm{Mod}}_A^\Gr\left(V_i(c_{ij}),V_i(c_{ik})\right)$ to be the $(k,j)$ entry of the graded matrix algebra $M_{d_i}(D_i)(\bar{c}_i)$, where $D_i$ is $\underline{\mathrm{Mod}}_A^\Gr(V_i,V_i)$.
\end{proof}
All $\Gr$-graded irreducible representations can be extracted from the decomposition, in a way similar to the ungraded case.

\subsubsection{Braiding structure}\label{subsubsection:Braiding in Graded Vector Space}

Many concepts in representation theory actually involve braiding structure. For example, in order to define the multiplication of the tensor product algebra $A\otimes B$, we first need to use an exchange map from $A\otimes B\otimes A\otimes B$ to $A\otimes A\otimes B\otimes B$ acting on the middle two terms, and then multiply two pairs of $A$ and $B$. The exchange of $A$ and $B$ is exactly the braiding structure. Another example is the multiplication of the opposite algebra $A^{\mathrm{op}}$. We first need to use the exchange map from $A\otimes A$ to itself, and then multiply them. Other examples include the graded vector space structure on the hom space of $A$ modules, and the tensor product of $A$ modules.

The braiding structure on $\mathrm{Vec}_{\mathbb{K}}^{\Gr}$ is not unique. The classification of braiding structures on $\mathrm{Vec}_{\mathbb{K}}^{\Gr}$ is given in \cite{etingofTensorCategories2015}. In this work, we only consider the braiding structure twisted by a symmetric bilinear form $b:\Gr\times \Gr\to \mathbb{Z}_2$, where $\Gr=\mathbb{Z}_2^N$.

For example, if we take $\Gr=\mathbb{Z}_2$, we have two braiding structures. The first one is the trivial braiding structure, and the second one is the braiding structure of the super vector space, which is given by $V_c\hat{\otimes} W_{c'}\xrightarrow{b(c,c')} W_{c'}\hat{\otimes} V_c$, where $b(c,c')=-1$ if $c=c'=1$, and $b(c,c')=1$ otherwise.

We adopt the following convention: the braiding structure on $V\otimes W$ is trivial, while the braiding structure on $V\hat{\otimes} W$ is that of super vector spaces.

\subsection{\texorpdfstring{$\Gr$-crossed product algebra}{Gr-crossed product algebra}}\label{subsection:Gr-Crossed Product Algebra}

A $\Gr$-crossed product algebra is a $\Gr$-graded algebra $A$ such that there exists an invertible element $a_c\in A_c$ for each $c\in \Gr$.

\begin{prop}\label{prop:equivalence of categories between right A-modules and A_0-modules}
    Let $A$ be a $\Gr$-crossed product algebra. There is an equivalence between the category of $A_0$ modules and the category of graded $A$ modules,
    \begin{equation}
        A\otimes_{A_0}-:
        \begin{tikzcd}
            \mathrm{Mod}_{A_0}
              \arrow[r, "\simeq"{above}, shift left=1.2ex]
            & \mathrm{Mod}_A^{\Gr}
              \arrow[l, shift left=1.2ex]
            \arrow[phantom, from=1-1, to=1-2, "\perp" sloped, midway]
        \end{tikzcd}
        : (-)_0,
    \end{equation}
    where $(-)_0$ is the restriction of the graded $A$ modules to the grading 0 $A_0$ submodule.
\end{prop}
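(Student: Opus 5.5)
We prove the proposition by showing that the two functors in it are quasi-inverse, and we use invertibility of the elements $a_c$ at every step. Grade $A\otimes_{A_0}M$ by $(A\otimes_{A_0}M)_c \coloneqq A_c\otimes_{A_0}M$. This is well defined because each $A_c$ is an $A_0$-bimodule and $A=\bigoplus_c A_c$ as right $A_0$-modules. Since $A_cA_{c'}\subset A_{c+c'}$, the left $A$-action respects the grading, so $A\otimes_{A_0}-$ lands in $\mathrm{Mod}_A^{\Gr}$. The functor $(-)_0$ is clearly well defined, because $A_0$ preserves the degree-$0$ part. We then construct two natural isomorphisms, the unit $\eta_M:M\to (A\otimes_{A_0}M)_0$ and the counit $\varepsilon_N:A\otimes_{A_0}N_0\to N$.

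For the unit, $(A\otimes_{A_0}M)_0=A_0\otimes_{A_0}M\cong M$ via $m\mapsto 1\otimes m$. This step is immediate and natural in $M$, and it does not need the crossed-product hypothesis.

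For the counit, take $\varepsilon_N(a\otimes n)=an$, which is a graded $A$-module map. We check that it is an isomorphism degree by degree, i.e.\ that $A_c\otimes_{A_0}N_0\to N_c$ is bijective. Here we use the key consequence of the hypothesis: if $a_c\in A_c$ is invertible, then $a_c^{-1}\in A_{-c}$, and $A_c=a_cA_0=A_0a_c$. Indeed, for $x\in A_c$ we have $x=a_c(a_c^{-1}x)$ with $a_c^{-1}x\in A_0$. The inverse is homogeneous because, writing $a_c^{-1}=\sum_d b_d$ and comparing components of $a_cb=1\in A_0$, one finds that $a_cb_{-c}=1$ and $a_cb_d=0$ for $d\neq -c$; multiplying on the left by $a_c^{-1}$ gives $b_d=0$. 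Consequently $A_c$ is free of rank one as a right $A_0$-module with basis $a_c$, so $A_c\otimes_{A_0}N_0\cong N_0$ via $a_cx\otimes n\mapsto xn$. Under this identification $\varepsilon_N$ becomes $n\mapsto a_cn$ from $N_0$ to $N_c$. This map is injective since $a_c$ is invertible, and surjective since any $n'\in N_c$ equals $a_c(a_c^{-1}n')$ with $a_c^{-1}n'\in N_0$.

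Naturality of both transformations is immediate from their formulas, and the triangle identities reduce to the identity $1\cdot n=n$, so the two functors form an adjoint equivalence. The main point requiring care is the homogeneity of $a_c^{-1}$ and the resulting freeness $A_c=a_cA_0$. This is the only place where the crossed-product hypothesis enters, and everything else is formal. To stay consistent with the proof of \cref{thm:structure theorem of semisimple graded algebra}, which works with right modules, we would also note that the same argument applies verbatim to right modules using $-\otimes_{A_0}A$ and $A_c=A_0a_c$.
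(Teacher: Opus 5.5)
Your proof is correct and follows the same route as the paper: in both, the key step is that the counit $A\otimes_{A_0}N_0\to N$ is an isomorphism because each component $N_c$ is reached from degree $0$ by the invertible $a_c$. The paper works with right modules and simply writes down the inverse $v_c\mapsto v_c a_c^{-1}\otimes_{A_0}a_c$, whereas your homogeneity of $a_c^{-1}$, the unit isomorphism, and the triangle identities supply details the paper leaves implicit.
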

\begin{proof}
    We prove $V_0\otimes_{A_0} A\cong V$ for any $A$ module $V$. From the left to the right, we have $v\otimes_{A_0}a \mapsto v\cdot a$ for $v\in V_0$ and $a\in A$. From the right to the left, we have $v_c\mapsto v_0\otimes_{A_0}a_c$ for $v_c\in V_c$ and $v_0=v_c \cdot a_c^{-1}$, where $a_c$ is the invertible element.
\end{proof}

Let $V$ be a right $A_0$ module. We define a new $A_0$ module $V^{a_c}$ by $v\cdot_{V^c}b=v\cdot (a_cba_c^{-1})$, where $b\in A$, $v\in V$ and $a_c\in A_c$ is invertible. We also denote $V^{a_c}$ by $V^c$ if a set of invertible elements $\{a_c\}_{c\in \Gr}$ is given. Notice that the isomorphism class of $V^c$ does not depend on the choice of $a_c$. It is easy to show that there is a commutative diagram up to natural isomorphisms,
\begin{equation}\label{eq:commutative diagram of N^c}
    \begin{tikzcd}
        \mathrm{Mod}_{A_0}
            \arrow[r, "\simeq"{above}, shift left=1.2ex]
            \arrow[d,"(-)^c"{left},"\simeq"{right}]
        & \mathrm{Mod}_A^{\Gr}
            \arrow[l, shift left=1.2ex]
            \arrow[d,"(-)(c)","\simeq"{left}]
            \arrow[phantom, from=1-1, to=1-2, "\perp" sloped, midway]\\
        \mathrm{Mod}_{A_0}
            \arrow[r, "\simeq"{above}, shift left=1.2ex]
        & \mathrm{Mod}_A^{\Gr}
            \arrow[l, shift left=1.2ex]
            \arrow[phantom, from=2-1, to=2-2, "\perp" sloped, midway]
    \end{tikzcd}
\end{equation}
The functor $(-)(c)$ is the grading shifting functor.

Now we introduce the Wedderburn--Artin theorem for $\Gr$-crossed product algebras. According to \cref{thm:structure theorem of semisimple graded algebra}, we already know that the semisimple graded algebras can be decomposed into graded matrix algebras. However, \cref{eq:commutative diagram of N^c} gives more restrictions on each graded matrix component.
\begin{thm}\label{thm:structure theorem of Gr-crossed product algebra}

    Let $A$ be a semisimple $\Gr$-crossed product algebra, and $V$ be a simple right $A_0$-module that corresponds to a matrix algebra component $M_{d}(D_0)$ of the Wedderburn--Artin decomposition of $A_0$. Then, the graded matrix algebra component that corresponds to $V\otimes_{A_0}A$ has the form $M_{\left|\Gr/E\right|d}(D)(\bar{c})$. Here, $E=\left\{c\in \Gr|V^c\cong V\right\}$ is a subgroup of $\Gr$, $\{c_i\}$ is a set of representatives of $\Gr/E$, and we have
    \begin{equation}
        \bar{c}=(c_1,\dots,c_1,c_2,\dots,c_2,\cdots,c_{\left|\Gr/E\right|},\dots,c_{\left|\Gr/E\right|}),
    \end{equation}
    where each $c_i$ appears $d$ times. $D$ is an $E$-graded division algebra with $D_c\not = 0$ for $c\in E$, and $D_0$ happens to be the grading 0 subalgebra of $D$.

\end{thm}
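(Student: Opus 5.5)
We use \cref{thm:structure theorem of semisimple graded algebra} together with \cref{prop:equivalence of categories between right A-modules and A_0-modules} and the diagram \cref{eq:commutative diagram of N^c}. Set $W=V\otimes_{A_0}A$. By the equivalence of categories, $W$ is a simple graded $A$ module, because $V$ is simple. The graded simple modules lying in the same block as $W$, i.e. those isomorphic to $W$ up to a grading shift, are the shifts $W(c)$. Under the equivalence, $W(c)$ corresponds to the $A_0$ module $V^c$, so $W(c)\cong W(c')$ if and only if $V^c\cong V^{c'}$. Since $(V^{c})^{c'}\cong V^{c+c'}$ up to isomorphism (use the invertible element $a_ca_{c'}\in A_{c+c'}$), this holds if and only if $c-c'\in E$. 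The same relation shows that $E$ is a subgroup: it contains $0$, is closed under addition, and contains inverses since $V^{-c}\cong (V^{c})^{-1}$-twist. The distinct graded shifts of $W$ are therefore indexed by $\Gr/E$.

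Next we determine the multiplicities, i.e. how many copies of each $W(c_i)$ occur in the decomposition of $A$ as a graded right $A$ module. By \cref{prop:equivalence of categories between right A-modules and A_0-modules}, $A\cong A_0\otimes_{A_0}A$, and a decomposition of the graded module $A$ corresponds to a decomposition of the $A_0$ module $A_0$. Since $A_0\cong\bigoplus_{V'}V'^{\oplus d_{V'}}$, with $d_{V'}$ the matrix size of the component of $V'$, we get $A\cong\bigoplus_{V'}(V'\otimes_{A_0}A)^{\oplus d_{V'}}$. The summands $V'$ in the block of $W$ are exactly $V'\cong V^{c_i}$, and $V'\otimes_{A_0}A\cong W(c_i)$.

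One point needs care here: $d_{V^{c}}=d$. Conjugation by $a_c$ is an algebra automorphism of $A_0$ that permutes the Wedderburn components and carries $V$ to $V^c$. It therefore preserves both the matrix size and the division algebra. Each $c_i$ thus appears $d$ times, giving $\bar c$ of length $|\Gr/E|d$ in the stated form.

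It remains to identify $D=\underline{\mathrm{Mod}}_A^\Gr(W,W)$. Its degree-$c$ part is $\mathrm{Mod}^\Gr_A(W,W(c))$, which by the equivalence equals $\mathrm{Mod}_{A_0}(V,V^c)$. This is nonzero exactly when $c\in E$, since $V$ and $V^c$ are simple and Schur's lemma applies. Hence $D$ is $E$-graded with $D_c\neq0$ for every $c\in E$. In degree $0$ we get $D_0=\mathrm{End}_{A_0}(V)$, the division algebra of the component $M_d(D_0)$.

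Every nonzero homogeneous element of $D$ is invertible: by Schur again it is an isomorphism $V\to V^c$, equivalently $W\to W(c)$, and its inverse is a homogeneous element of degree $-c$. The main obstacle is making this identification compatible with composition, so that $D_0$ is genuinely the grading $0$ subalgebra and the multiplicative structure on $\bigoplus_c D_c$ is the one from the proof of \cref{thm:structure theorem of semisimple graded algebra}. We handle it by working entirely on the graded side with $\underline{\mathrm{Mod}}_A^\Gr$ and transporting only the vector spaces $D_c$ through the equivalence.
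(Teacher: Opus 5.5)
Your proposal is correct and follows essentially the same route as the paper. Like the paper, you identify the block of $V\otimes_{A_0}A$ through the shifts $V^{c_i}\otimes_{A_0}A\cong (V\otimes_{A_0}A)(c_i)$, read off the multiplicity $d$ from the decomposition of $A_0$, and compute $D_c\cong\mathrm{Mod}_{A_0}(V,V^c)$. You also make explicit two points the paper leaves implicit: that $E$ is a subgroup (via $(V^c)^{c'}\cong V^{c+c'}$), and that $V^{c}$ lies in a Wedderburn component of $A_0$ with the same matrix size $d$ (via the conjugation automorphism by $a_c$).
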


\begin{proof}
    In the proof of \cref{prop:equivalence of categories between right A-modules and A_0-modules}, we showed that each graded matrix algebra component in the decomposition is the endomorphism of the direct sum of all simple $A$ submodules of $A$ that are isomorphic to $W$ up to a grading shift, where $W$ is a simple $A$ module. Using \cref{eq:commutative diagram of N^c}, we can find all such simple $A$ submodules at the level of $A_0$ modules. Let $V = W_0$, which corresponds to the matrix component $M_{d}(D_0)$ of $A_0$. We get a set of nonisomorphic $A_0$ modules $\{V^{c_i}\}$, and the induced $A$ modules $V^{c_i}\otimes_{A_0}A$ are isomorphic up to a grading shift. Notice that $A_0$ has direct summands $(V^{c_i})^{\oplus d}$ as a right $A_0$ module, and we have
    \begin{equation}
        \prod_{[c_i]\in \Gr/E} \mathrm{Mod}_{A_0}\left(\left(V^{c_i}\right)^{\oplus d},\left(V^{c_i}\right)^{\oplus d}\right) \cong \prod_{[c_i]\in \Gr/E} M_{d}(D_0),
    \end{equation}
    where $D_0\cong\mathrm{Mod}_{A_0}(V^c,V^c)\cong \mathrm{Mod}_{A_0}(V,V)$.

    The $A$ modules $V^{c_i}\otimes_{A_0}A\cong V\otimes_{A_0}A(c_i)$ are exactly what we are finding. Thus, the graded matrix algebra component is
    \begin{widetext}
    \begin{equation}
        \begin{aligned}
             & \underline{\mathrm{Mod}}_A^\Gr\left(\bigoplus_{[c_j]\in \Gr/E}\left(V\otimes_{A_0}A(c_j)\right)^{\oplus d},\bigoplus_{[c_k]\in \Gr/E}\left(V\otimes_{A_0}A(c_k)\right)^{\oplus d}\right)  \\
            \cong & M_{\left|\Gr/E\right|d}(D)(\bar{c}),
        \end{aligned}
    \end{equation}
    \end{widetext}
    where $D\cong \underline{\mathrm{Mod}}_A^\Gr(V\otimes_{A_0}A,V\otimes_{A_0}A)$, and the grading $c$ component of $D$ is $\mathrm{Mod}_A^\Gr(V\otimes_{A_0}A,V\otimes_{A_0}A(c))\cong \mathrm{Mod}_{A_0}(V,V^c)$. Thus, $D$ is $E$-graded, since $D_c=0$ for $c\not \in E$.
\end{proof}

\subsection{\texorpdfstring{The classification of $\mathbb{Z}_2^N$-graded division algebra}{The classification of Z2N-graded division algebra}}\label{subsection:classification of Z2N-graded division algebra}

In this section, we classify $\mathbb{Z}_2^N$-graded division algebras over $\mathbb{R}$ and $\mathbb{C}$.

\subsubsection{\texorpdfstring{$\mathbb{Z}_2$-graded division algebra}{Z2-graded division algebra}}\label{subsubsection:classification of Z2-graded division algebra}

We review the classification of $\mathbb{Z}_2$-graded division algebras \cite{brunshidleFundamentalTheoremFinite2012}. If $D$ is a $\mathbb{Z}_2$-graded division algebra, then $D_0$ is a division algebra. When $D_1=0$, the classification reduces to the classification of ordinary division algebras. Thus, we assume $D_1\not = 0$ in what follows.

Let $\theta\in D_1$ be an invertible element. We denote $\theta^2\in D_0$ by $a$. $\theta$ induces an algebra automorphism of $D_0$, which is denoted by $f$ with $f(d)=\theta d \theta^{-1}$. It is straightforward to show that $f^2(d)=ada^{-1}$ and $f(a)=a$.

Conversely, given $(D_0,f,a)$ such that $f^2(d)=ada^{-1}$ and $f(a)=a$, we can construct a $\mathbb{Z}_2$-graded algebra $D\cong D_0\oplus D_0\theta$. An element of $D$ has the form $d$ or $d\theta$ for $d\in D_0$. The multiplication is induced by the rules $\theta^2=a$ and $\theta d \theta^{-1}=f(d)$.

We have shown that the classification of $\mathbb{Z}_2$-graded division algebras is equivalent to the classification of the data $(D_0,f,a)$ satisfying $f^2(d)=ada^{-1}$ and $f(a)=a$. When $D_0$ is central simple, we have the following lemma:
\begin{lem}
   If $D_0$ is a central simple algebra over $\mathbb{K}$, the $\mathbb{Z}_2$-graded division algebra constructed from $(D_0,f,a)$ is isomorphic to the $\mathbb{Z}_2$-graded division algebra constructed from $(D_0,f',a')$ such that $a'\in \mathbb{K}$ and $f'=\mathrm{id}$.
\end{lem}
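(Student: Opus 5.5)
The plan is to use the Skolem--Noether theorem to absorb the automorphism $f$ into a change of the odd generator $\theta$. Since $D_0$ is central simple over $\mathbb{K}$, every $\mathbb{K}$-algebra automorphism of $D_0$ is inner. So there is an invertible $u\in D_0$ with $f(d)=udu^{-1}$ for all $d\in D_0$. (Here $f$ is $\mathbb{K}$-linear because $\mathbb{K}$ is the center of $D_0$, and conjugation by $\theta$ preserves the center.) We then replace $\theta$ by $\theta'=u^{-1}\theta$. This is again an invertible element of $D_1$, since $D_1=D_0\theta$ and $u$ is invertible. For $d\in D_0$ we get $\theta' d\theta'^{-1}=u^{-1}f(d)u=d$. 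Hence the automorphism induced by $\theta'$ is $f'=\mathrm{id}$.

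Next we set $a'=\theta'^2\in D_0$. Because $\theta'$ commutes with all of $D_0$, so does $a'=\theta'^2$. Therefore $a'$ lies in the center of $D_0$, which is $\mathbb{K}$. The relations $f'^2(d)=a'da'^{-1}$ and $f'(a')=a'$ then hold trivially. The algebra $D=D_0\oplus D_0\theta'$ is exactly the $\mathbb{Z}_2$-graded algebra built from $(D_0,\mathrm{id},a')$ by the construction above. The identity map on $D$ thus gives a graded isomorphism between the algebra constructed from $(D_0,f,a)$ and the one constructed from $(D_0,f',a')$.

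The only step needing care is the $\mathbb{K}$-linearity of $f$, which is what lets us invoke Skolem--Noether. Over $\mathbb{K}=\mathbb{C}$ with antilinear elements in play, $\theta$ might a priori conjugate scalars. However, the lemma is stated for $D_0$ a central simple $\mathbb{K}$-algebra, and $D$ is a $\mathbb{K}$-algebra, so $\mathbb{K}$ is central in $D$ and $f$ fixes $\mathbb{K}$ pointwise. With that remark the argument is complete. One further observation, used later rather than here: $a'$ is determined only up to multiplication by $\lambda^2$ for $\lambda\in\mathbb{K}^\times$, reflecting the rescaling $\theta'\mapsto\lambda\theta'$.
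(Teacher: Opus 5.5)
Your proof is correct and is essentially the paper's argument: Skolem--Noether writes $f$ as conjugation by some invertible $u\in D_0$, replacing $\theta$ by $u^{-1}\theta$ makes the induced automorphism trivial, and then $a'=\theta'^2$ centralizes $D_0$ and hence lies in $\mathbb{K}$ (the paper phrases this last step as $d=(f')^2(d)=a'd(a')^{-1}$). Your remark that $f$ fixes $\mathbb{K}$ pointwise because $\mathbb{K}$ is central in the $\mathbb{K}$-algebra $D$ is the right justification for applying Skolem--Noether, which the paper leaves implicit. Your first justification, that conjugation by $\theta$ preserves the center, would not suffice on its own, since preserving the center setwise does not make $f$ fix it pointwise.
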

\begin{proof}
    Since $D_0$ is a central simple algebra, according to the Skolem--Noether theorem, any automorphism of $D_0$ is an inner automorphism. Thus, we have $f(d)=bdb^{-1}$ for some $b\in D_0$. If we set $\theta'=b^{-1}\theta$, then we have $f'(d)=\theta'd\theta'^{-1}=d$ for $d\in D_0$. Moreover, $a'=\theta'^2\in \mathbb{K}$, since $d=(f')^2(d)=a'd(a')^{-1}$ for $d\in D_0$.
\end{proof}

For $\mathbb{Z}_2$-graded division algebras over $\mathbb{C}$, $D_0=\mathbb{C}$ is a central simple algebra. Therefore, the only choice of $D$ is $(\mathbb{C},\mathrm{id},1)$ up to isomorphism. Notice that if $a\not = 1$, we can take $\theta'$ to be $a^{-\frac{1}{2}}\theta$.

For $\mathbb{Z}_2$-graded division algebras over $\mathbb{R}$, when $D_0=\mathbb{R}$ or $\mathbb{H}$, the only two choices are $(D_0,\mathrm{id},\pm 1)$ up to isomorphism. When $D_0=\mathbb{C}$, it is well known that the Galois group of $\mathbb{C}$ over $\mathbb{R}$ is $\mathbb{Z}_2$; hence, the only two choices of $f$ are $\mathrm{id}$ and complex conjugation. When $f=\mathrm{id}$, $\theta$ commutes with $i$; therefore, $D$ is an algebra over $\mathbb{C}$. Thus, the only choice is $(\mathbb{C},\mathrm{id},1)$ up to isomorphism. When $f$ is the complex conjugation, $f(a)=a$ shows $a\in\mathbb{R}$; hence, the only two choices are $(\mathbb{C},\overline{(-)},\pm 1)$ up to isomorphism.

In summary, there are two isomorphism classes of $\mathbb{Z}_2$-graded division algebras over the field $\mathbb{C}$, which are $\mathbb{C}$ and $(\mathbb{C},\mathrm{id},1)$. 
There are ten isomorphism classes of $\mathbb{Z}_2$-graded division algebras over $\mathbb{R}$, 
which are $\mathbb{R}$, $\mathbb{C}$, $\mathbb{H}$, $(\mathbb{R},\mathrm{id},1)$, $(\mathbb{R},\mathrm{id},-1)$, $(\mathbb{C},\mathrm{id},1)$, $(\mathbb{C},\overline{(-)},1)$, $(\mathbb{C},\overline{(-)},-1)$, $(\mathbb{H},\mathrm{id},1)$, and $(\mathbb{H},\mathrm{id},-1)$.

\subsubsection{\texorpdfstring{$\mathbb{Z}_2^N$-graded division algebra}{Z2N-graded division algebra}}\label{subsubsection:classification of Z2N-graded division algebra}

We classify $\mathbb{Z}_2^N$-graded division algebras over $\mathbb{R}$ and $\mathbb{C}$. Let $D$ be a $\mathbb{Z}_2^N$-graded division algebra such that $D_c\not = 0$ for each $c\in \mathbb{Z}_2^N$. Notice that $D_i\cong D_0\oplus D_{z_i}$ is a $\mathbb{Z}_2$-graded division algebra,  where $z_i\in \mathbb{Z}_2^N$ is the $i$th generator of $\mathbb{Z}_2^N$. We denote the canonical grading 1 element in $D_{z_i}$ by $\theta_i$; therefore, we have $D_i\cong (D_0,f_i,a_i)$.

It remains to determine the commutation relation between $\theta_i$ and $\theta_j$ for $i\not = j$. Here, we take $a_{ij} = (\theta_i\theta_j)^2$, which encodes the commutation relation $\theta_i\theta_j=b_{ij}\theta_j\theta_i$, where $b_{ij}=a_{ij}a_j^{-1}f_j(a_i^{-1})$. The $\mathbb{Z}_2$-graded division algebra $D_0\oplus D_{z_i+z_j}$ is isomorphic to $(D_0,f_i\circ f_j,a_{ij})$ if we choose $\theta_i\theta_j$ to be the canonical grading 1 element in $D_{z_i+z_j}$.
\begin{lem}
    If $D_0$ is a central simple algebra over $\mathbb{K}$, according to \cref{subsubsection:classification of Z2-graded division algebra}, $\theta_i$ can be chosen such that $f_i=\mathrm{id}$ and $a_i=\pm 1$ for any $i$. Then, we have $a_{ij}\in \mathbb{K}$ and $(a_{ij})^2=1$ for any $i$ and $j$.
\end{lem}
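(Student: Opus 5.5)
The plan is to apply the preceding $\mathbb{Z}_2$ lemma generator by generator, and then use the argument that identified $a'$ as a scalar on the pair $\theta_i\theta_j$. The key input is that an element of $D_0$ commuting with all of $D_0$ lies in the centre $\mathbb{K}$.

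\emph{Normalising each generator.} For each $i$, the algebra $D_0\oplus D_{z_i}$ is a $\mathbb{Z}_2$-graded division algebra with central simple $D_0$. By the lemma in \cref{subsubsection:classification of Z2-graded division algebra}, I replace $\theta_i$ by $b_i^{-1}\theta_i$, which makes $f_i=\mathrm{id}$ and $a_i=\theta_i^2\in\mathbb{K}$. I then rescale by a scalar to get $a_i=\pm1$. Over $\mathbb{C}$ this is rescaling by $a_i^{-1/2}$, as in the text. Over $\mathbb{R}$ one rescales by $|a_i|^{-1/2}$, which is legitimate because real scalars are central in $D$.

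\emph{Showing $a_{ij}$ is central.} Put $\theta=\theta_i\theta_j\in D_{z_i+z_j}$. This is a nonzero homogeneous element, hence invertible, and it induces $f_i\circ f_j=\mathrm{id}$ on $D_0$. So $\theta$ commutes with every $d\in D_0$, and therefore so does $a_{ij}=\theta^2\in D_0$. Thus $a_{ij}$ lies in the centre of $D_0$, which is $\mathbb{K}$ because $D_0$ is central simple. This is exactly the step $d=a'd(a')^{-1}$ from the previous proof.

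\emph{Showing $a_{ij}^2=1$.} First, $\theta_i\theta_j\theta_i^{-1}$ lies in $D_{z_j}$, and conjugation by it acts on $D_0$ as $f_if_jf_i^{-1}=\mathrm{id}$. Hence $\theta_j^{-1}\theta_i\theta_j\theta_i^{-1}$ lies in $D_0$ and commutes with all of $D_0$. It is therefore a scalar $b\in\mathbb{K}^\times$, and we get $\theta_i\theta_j=b\,\theta_j\theta_i$, matching the relation $b_{ij}$ in the text. Since $b$ is central, I can compute
\[
a_{ij}=\theta_i\theta_j\theta_i\theta_j=b\,\theta_j\theta_i\theta_i\theta_j=b\,a_ia_j .
\]
Next I square the commutation relation to get $\theta_i^2\theta_j=b^2\theta_j\theta_i^2$. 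Because $a_i=\pm1$ is a central scalar, this gives $a_i\theta_j=b^2a_i\theta_j$. As $a_i\theta_j\neq0$ is invertible, $b^2=1$. Therefore $a_{ij}^2=b^2a_i^2a_j^2=1$.

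The main obstacle is the $\mathbb{K}=\mathbb{R}$ case. There one must make sure the scalars used in the rescalings are real, and that the centre of $D_0$ is $\mathbb{R}$ rather than something larger. Central simplicity of $D_0$ over $\mathbb{R}$ excludes $D_0=\mathbb{C}$, so $D_0\in\{\mathbb{R},\mathbb{H}\}$ and the argument goes through.
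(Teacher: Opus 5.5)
Your proof is correct and follows essentially the same route as the paper: you compute a triple product two ways to get $b_{ij}^2=1$ (the paper expands $\theta_i\theta_j\theta_i$, you expand $\theta_i^2\theta_j$), then use $a_{ij}=b_{ij}a_ia_j$ to get $a_{ij}^2=1$, and obtain $a_{ij}\in\mathbb{K}$ because conjugation by $\theta_i\theta_j$ is $f_i\circ f_j=\mathrm{id}$. Your extra step showing that $b_{ij}$ is itself a central scalar is a harmless, slightly cleaner addition that the paper leaves implicit.
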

\begin{proof}
    We have
    \begin{equation}
        \theta_i\theta_j\theta_i=\theta_i b_{ij}^{-1}\theta_i\theta_j=f_i(b_{ij}^{-1})a_i\theta_j.
    \end{equation}
    On the other hand, we have
    \begin{equation}
        \theta_i\theta_j\theta_i=b_{ij}\theta_j a_i=b_{ij}f_j(a_i)\theta_j.
    \end{equation}
    Therefore, we have
    \begin{equation}\label{eq:the equation satisfied by bij}
        f_i(b_{ij}^{-1})a_i=b_{ij}f_j(a_i).
    \end{equation}
    Thus, we obtain $b_{ij}^2=1$, since $f_i=\mathrm{id}_{D_0}$. Using $b_{ij}=a_{ij}a_j^{-1}f_j(a_i^{-1})$ and $a_i=\pm 1$, we get $(a_{ij})^2 = 1$. Notice that $(f_i\circ f_j)^2(d)=a_{ij}da_{ij}^{-1}=d$ for any $d\in D_0$; hence, $a_{ij}\in \mathbb{K}$.
\end{proof}

For $\mathbb{Z}_2^N$-graded division algebras over $\mathbb{C}$, $D_0=\mathbb{C}$ is a central simple algebra. Therefore, the only choices of $a_{ij}$ are $\pm 1$. For different choices of $a_{ij}$, $D_0\oplus D_{z_i+z_j}$ are nonisomorphic. Hence, $D$ with different choices of $a_{ij}$ are nonisomorphic.

For $\mathbb{Z}_2^N$-graded division algebras over $\mathbb{R}$, $D_0=\mathbb{R}$ or $\mathbb{H}$ is a central simple algebra. Therefore, the only choices of $a_{ij}$ are $\pm 1$.

For $D_0=\mathbb{C}$, there are two choices of $f_i$, which are $\mathrm{id}$ and the complex conjugation. It reduces to the case of $\mathbb{Z}_2^N$-graded division algebras over $\mathbb{C}$ when all $f_i=\mathrm{id}$. For the cases in which only one $f_i$ equals $\overline{(-)}$, notice that \cref{eq:the equation satisfied by bij} is correct in general; therefore, we have $b_{ij}^2=a_{ij}^2=1$ if $f_i=\mathrm{id}$. Similar to \cref{eq:the equation satisfied by bij}, we have $f_j(b_{ij})a_j=b_{ij}^{-1}f_i(a_j)$, which is obtained by multiplying $\theta_j\theta_i\theta_i$ in two different ways. Thus, we obtain $b_{ij}^2=a_{ij}^2=1$ if $f_j=\mathrm{id}$. It remains to consider the case $f_i=\overline{(-)}$ and $f_j=\overline{(-)}$. In this case, we only have $|f_{ij}|^2=1$; therefore, $f_{ij}$ is a phase.

To further determine $f_{ij}$, we need to change the choice of basis $\{z_i\}$ of $\mathbb{Z}_2^N$. Without loss of generality, we start with $f_i=\overline{(-)}$ for the first $n$ generators $z_i$ of $\mathbb{Z}_2^N$, and $f_j=\mathrm{id}$ for the last $N-n$ generators $z_j$ of $\mathbb{Z}_2^N$. We take a new basis $z'_n=z_{n-1}+z_n$, and $z'_i=z_i$ for $i\not = n$. At the same time, we use $\theta'_n = \theta_{n-1}\theta_n$ to replace $\theta_n$, while the other $\theta_i$ remain unchanged. Now $f_i$ is $\overline{(-)}$ for the first $n-1$ generators of $\mathbb{Z}_2^N$, and $f_i$ is $\mathrm{id}$ for the last $N-n+1$ generators of $\mathbb{Z}_2^N$. If we repeat this several times, we find that only $f_1$ can be $\overline{(-)}$ and $f_i$ must be $\mathrm{id}$ for $i > 1$. Therefore, we never encounter the case where both $f_i$ and $f_j$ are $\overline{(-)}$ for $i\not = j$.

In summary, for a $\mathbb{Z}_2^N$-graded division algebra $D$ over $\mathbb{C}$, the classification is given by $(\{D_i\},\{a_{ij}\})$, where each $D_i\cong D_0\oplus D_{z_i}$ is a $\mathbb{Z}_2$-graded division algebra, and $a_{ij}=\pm 1$.

For a $\mathbb{Z}_2^N$-graded division algebra $D$ over $\mathbb{R}$ with $D_0=\mathbb{R}$ or $\mathbb{H}$, the classification is given by $(\{D_i\},\{a_{ij}\})$, where each $D_i\cong D_0\oplus D_{z_i}$ is a $\mathbb{Z}_2$-graded division $\mathbb{R}$ algebra, and $a_{ij}=\pm 1$. For $D_0=\mathbb{C}$ and all $f_i=\mathrm{id}$, the classification is given by $(\{D_i\},\{a_{ij}\})$, where $D_i\cong D_0\oplus D_{z_i}$ is a $\mathbb{Z}_2$-graded division $\mathbb{C}$-algebra, and $a_{ij}=\pm 1$. For the cases in which at least one $f_i$ equals $\overline{(-)}$, we need to choose a basis as explained above such that $f_1=\overline{(-)}$ and $f_i=\mathrm{id}$ for $i\not = 1$. The classification is given by $(\{D_i\},\{a_{ij}\})$, where $D_i\cong D_0\oplus D_{z_i}$ is a $\mathbb{Z}_2$-graded division algebra, and $a_{ij}=\pm 1$.

\begin{eg}
    We enumerate all isomorphism classes of $\mathbb{Z}_2\times \mathbb{Z}_2$-graded division algebras over $\mathbb{R}$. If two $\mathbb{Z}_2\times \mathbb{Z}_2$-graded division algebras are isomorphic after a change of grading, we count them as the same one. For $D_0=\mathbb{R}$ or $\mathbb{H}$, $D$ is determined by $a_1$, $a_2$ and $a_{12}$ equal to $\pm 1$. It gives 4 for $\mathbb{R}$ and 4 for $\mathbb{H}$. For $D_0=\mathbb{C}$ and $f_1=f_2=\mathrm{id}$, it gives 2. For $f_1=\overline{(-)}$ and $f_2=\mathrm{id}$, $a_2$ must be $1$, $a_1=\pm 1$ and $a_{12}=\pm 1$. Therefore, it gives 3. In total, the number is 13, as shown in \cite{kuznetsova10foldWay132023}.
\end{eg}

\subsection{Clifford algebra}\label{subsection:Clifford Algebra}

In this section, we briefly review the basic concepts of Clifford algebras; for more details, see Appendix \ref{subsection: Clifford Algebra in Appendix}.

For each vector space $V$ over $\mathbb{R}$ (resp. $\mathbb{C}$) with a quadratic form $Q:V\to \mathbb{R}$ (resp. $\mathbb{C}$), there is a Clifford algebra $Cl_Q$ (resp. $\mathbb{C}l_Q$). We mainly focus on nondegenerate quadratic forms. In this case, we can choose a basis of $V$ such that $Q$ is diagonal with diagonal entries $\pm 1$.

The algebra $Cl^{p,q}$ is the Clifford algebra corresponding to $\mathbb{R}^{p,q}$, where $\mathbb{R}^{p,q}$ is the vector space $\mathbb{R}^{p+q}$ with quadratic form having $p$ positive and $q$ negative diagonal entries. To be explicit, $Cl^{p,q}$ is an $\mathbb{R}$ algebra generated by $\{e_{1} ,\ldots,e_{p}, \tilde{e}_{1} ,\ldots,\tilde{e}_{q}\}$ that satisfy the relations $e_{i}^{2} =-1$, $\tilde{e}_{i}^{2} =1$, $e_{i} e_{j} +e_{j} e_{i} =\tilde{e}_{i}\tilde{e}_{j} +\tilde{e}_{j}\tilde{e}_{i} = e_{i}\tilde{e}_{j} +\tilde{e}_{j}e_{i} =0$.

Similarly, $\mathbb{C}l^{p,q}$ is the Clifford algebra corresponding to $\mathbb{C}^{p,q}$. $\mathbb{C}l^{p,q}$ is actually isomorphic to $\mathbb{C}l^{0,p+q}$ by taking another set of generators $\{ie_i, \tilde{e}_i\}$. We also denote $\mathbb{C}l^{0,p+q}$ by $\mathbb{C}l^{p+q}$. $\mathbb{C}l^{p,q}$ is a $\mathbb{C}$ algebra generated by the same sets of generators and relations as $Cl^{p,q}$.

$Cl^{p,q}$ and $\mathbb{C}l^{p,q}$ are semisimple algebras and $\mathbb{Z}_2$-graded semisimple algebras, as proved in \cref{thm:structure theorem of Clifford algebra} and \cref{thm:structure theorem of Clifford algebra as Z2-graded algebra}. The construction of simple modules and simple graded modules can be found in the proof where we construct the Wedderburn--Artin decomposition explicitly. The results of the Wedderburn--Artin decomposition are given in \cref{table:Clifford algebras}. The relation between Clifford algebras and graded division algebras is summarized in \cref{table:The isomorphisms between Z2-graded division algebras and Clifford algebras}. Using \cref{thm:structure theorem of Clifford algebra} and \cref{table:Clifford algebras}, it is straightforward to calculate $\mathcal{A}(Cl^{p,q})$, which is shown in \cref{table:A(Clpq)}.

\begin{table*}
    \centering
    \begin{tabular}{c | c c | c c}
        \hline
        $p$ & \multicolumn{2}{c}{$\mathbb{C}l^{p}$} & \multicolumn{2}{c}{} \\
        \hline
        0 & $\mathbb{C}$ & $\mathbb{C}$ &  &  \\
        1 & $\mathbb{C}\times\mathbb{C}$ & $\mathbb{C}l^1$ &  &  \\
        2 & $M_2(\mathbb{C})$ & $M_2(\mathbb{C})(\bar{c})$ &  &  \\
        3 & $M_2(\mathbb{C})\times M_2(\mathbb{C})$ & $M_2(\mathbb{C}l^1)$ &  &  \\
        \vdots & \vdots & \vdots &  &  \\
        \hline\hline
        $p$ or $q$ & \multicolumn{2}{c}{$Cl^{p,0}$} & \multicolumn{2}{c}{$Cl^{0,q}$} \\
        \hline
        0 & $\mathbb{R}$ & $\mathbb{R}$ & $\mathbb{R}$ & $\mathbb{R}$ \\
        1 & $\mathbb{C}$ & $Cl^{1,0}$ & $\mathbb{R}\times\mathbb{R}$ & $Cl^{0,1}$ \\
        2 & $\mathbb{H}$ & $Cl^{2,0}$ & $M_2(\mathbb{R})$ & $Cl^{0,2}$ \\
        3 & $\mathbb{H}\times\mathbb{H}$ & $Cl^{3,0}$ & $M_2(\mathbb{C})$ & $Cl^{0,3}$ \\
        4 & $M_2(\mathbb{H})$ & $M_2(\mathbb{H})(\bar{c})$ & $M_2(\mathbb{H})$ & $M_2(\mathbb{H})(\bar{c})$ \\
        5 & $M_4(\mathbb{C})$ & $M_2(Cl^{0,3})$ & $M_2(\mathbb{H})\times M_2(\mathbb{H})$ & $M_2(Cl^{3,0})$ \\
        6 & $M_8(\mathbb{R})$ & $M_4(Cl^{0,2})$ & $M_4(\mathbb{H})$ & $M_4(Cl^{2,0})$ \\
        7 & $M_8(\mathbb{R})\times M_8(\mathbb{R})$ & $M_8(Cl^{0,1})$ & $M_8(\mathbb{C})$ & $M_8(Cl^{1,0})$ \\
        8 & $M_{16}(\mathbb{R})$ & $M_{16}(\mathbb{R})(\bar{c})$ & $M_{16}(\mathbb{R})$ & $M_{16}(\mathbb{R})(\bar{c})$ \\
        9 & $M_{16}(\mathbb{C})$ & $M_{16}(Cl^{1,0})$ & $M_{16}(\mathbb{R}) \times M_{16}(\mathbb{R})$ & $M_{16}(Cl^{0,1})$ \\
        \vdots & \vdots & \vdots & \vdots & \vdots \\
        \hline
    \end{tabular}
    \caption{The Wedderburn--Artin decomposition of Clifford algebras. The left side shows the decomposition as nongraded algebras, while the right side shows the decomposition as $\mathbb{Z}_2$-graded algebras. $Cl^{p,0}$ and $Cl^{0,q}$ exhibit eight-periodicity, and $\mathbb{C}l^p$ exhibits two-periodicity. $\bar{c}$ indicates that half of the dimensions have a grading of 0, and the other half have a grading of 1. For example, for $M_2(\mathbb{C})(\bar{c})$, we have $\bar{c} = (0, 1)$.}\label{table:Clifford algebras}
\end{table*}

\begin{table}
    \centering
    \begin{tabular}{c | c }
            \hline
            $p$ & $\mathcal{A}(\mathbb{C}l^{p})$ \\
            \hline
            0 & $\mathbb{Z}$ \\
            1 & $0$ \\
            2 & $\mathbb{Z}$ \\
            \vdots & \vdots \\
            \hline\hline
            $q$ & $\mathcal{A}(Cl^{0,q})$ \\
            \hline
            0 & $\mathbb{Z}$ \\
            1 & $\mathbb{Z}_2$ \\
            2 & $\mathbb{Z}_2$ \\
            3 & $0$ \\
            4 & $\mathbb{Z}$ \\
            5 & $0$ \\
            6 & $0$ \\
            7 & $0$ \\
            8 & $\mathbb{Z}$\\
            \vdots & \vdots \\
            \hline
    \end{tabular}
    \caption{The value of $\mathcal{A}(Cl^{p,q})$.}\label{table:A(Clpq)}
\end{table}

\begin{table}
    \centering
    \begin{tabular}{|c|c|c|}
        \hline
        \multicolumn{2}{|c|}{$D$ over $\mathbb{C}$} & $M_{\left|\Gr/E\right|d}(D)(\bar{c})$ \\
        \hline
        $\mathbb{C}$ & $\mathbb{C}$ & $M_{d}(\mathbb{C}l^2)$ \\
        $(\mathbb{C},\mathrm{id},1)$ & $\mathbb{C}l^{1}$ & $M_d(\mathbb{C}l^1)$\\
        \hline
        \hline
        \multicolumn{2}{|c|}{$D$ over $\mathbb{R}$} & $M_{\left|\Gr/E\right|d}(D)(\bar{c})$ \\
        \hline
        $\mathbb{R}$ & $\mathbb{R}$ & $M_d(Cl^{1,1})$ \\
        $\mathbb{C}$ & $\mathbb{C}$ & $M_{d}(\mathbb{C}l^2)$ \\
        $\mathbb{H}$ & $\mathbb{H}$ & $M_d(Cl^{4,0})$ \\
        $(\mathbb{R},\mathrm{id},1)$ & $Cl^{0,1}$ & $M_d(Cl^{0,1})$ \\
        $(\mathbb{R},\mathrm{id},-1)$ & $Cl^{1,0}$ & $M_d(Cl^{1,0})$ \\
        $(\mathbb{C},\mathrm{id},1)$ & $\mathbb{C}l^{1}$ & $M_d(\mathbb{C}l^{1})$ \\
        $(\mathbb{C},\overline{(-)},1)$ & $Cl^{0,2}$ & $M_d(Cl^{0,2})$ \\
        $(\mathbb{C},\overline{(-)},-1)$ & $Cl^{2,0}$ & $M_d(Cl^{2,0})$ \\
        $(\mathbb{H},\mathrm{id},1)$ &  $Cl^{3,0}$ & $M_d(Cl^{3,0})$ \\
        $(\mathbb{H},\mathrm{id},-1)$ & $Cl^{0,3}$ & $M_d(Cl^{0,3})$ \\
        \hline
    \end{tabular}
    \caption{The first two columns are the isomorphisms between $\mathbb{Z}_2$-graded division algebras and Clifford algebras. The third column is the Wedderburn--Artin components of $\mathbb{Z}_2$-crossed product algebras as nongraded matrix algebras over Clifford algebras.}\label{table:The isomorphisms between Z2-graded division algebras and Clifford algebras}
\end{table}

%% file: 4.decomp_of_AV.tex
\section{\texorpdfstring{Dirac Hamiltonians and Decomposition of $\symalg^V$}{Dirac Hamiltonian and Decomposition of AV}}\label{section:decomposition of AV}

Section \ref{subsection:twisted group algebra with more gradings} introduces the $\mathbb{Z}_2^N$-graded algebra $\mathbb{C}[G,\omega,s]$ and its Wedderburn--Artin decomposition. Section \ref{subsection:Z2-Graded Algebra AV} explains how to perform the $\mathbb{Z}_2$-graded Wedderburn--Artin decomposition of $\symalg^V$. Sections \ref{subsection:constructing Dirac Hamiltonians} and \ref{subsection:topological invariant of Dirac Hamiltonians} respectively describe how to extract Dirac Hamiltonians and their topological invariants from the decomposition. Finally, \cref{subsection:use GAP to compute more complicated examples} presents a GAP package developed by the authors to compute the aforementioned decompositions.

\subsection{\texorpdfstring{$\mathbb{C}[G,\omega,s]$ with $\mathbb{Z}_2^N$-grading}{C[G,omega,s] with Z2N-grading}}\label{subsection:twisted group algebra with more gradings}

\subsubsection{The Wedderburn--Artin decomposition}\label{subsubsection:decomposition into matrix algebras C[G,omega,s] with more gradings}

Let $G$ be a $\mathbb{Z}_2^N$-graded group with grading map $G\to \mathbb{Z}_2^N$. Since $\mathbb{Z}_2$ is a field, $\mathbb{Z}_2^N$ is a vector space over $\mathbb{Z}_2$. We assume the grading map to be surjective; if it is not, we can always restrict $\mathbb{Z}_2^N$ to the image of the grading map. We also assume that $s:G\to \mathbb{Z}_2^N\to \mathbb{Z}_2$ is the grading of antilinear elements, i.e., the elements in $G_{c}$ are antilinear when $s(c)=1$. Since we assumed the grading map to be surjective, we can fix a set of group elements $g_{c}\in G_{c}$, for each $c\in\mathbb{Z}_2^N$.

We are going to decompose $\mathbb{C}[G,\omega,s]$ into $\mathbb{Z}_2^N$-graded matrix algebras. We assume that the Wedderburn--Artin decomposition of the subalgebra $\mathbb{C}[G_0,\omega]$ is already known, as constructed in \cref{subsection:Twisted Group Algebra}. That is, we have the isomorphism $\Phi_0=(\rho_{i_0})$,
\begin{equation}\label{eq:decomposition of C[G_0]}
    \mathbb{C}[G_0,\omega]\cong \prod_{i_0} M_{d_{i_0}}(\mathbb{C}),
\end{equation}
where $i_0$ takes values in the isomorphism classes of simple $\mathbb{C}[G_0,\omega]$-modules.

Let $V$ be a simple $\mathbb{C}[G_0,\omega]$-module and $\rho_0$ be its representation. We can construct a $\mathbb{Z}_2^N$-graded simple module $\mathbb{C}[G,\omega,s]\otimes_{\mathbb{C}[G_0,\omega]}V$. As a $\mathbb{C}[G_0,\omega]$-module, $\mathbb{C}[G,\omega,s]\otimes_{\mathbb{C}[G_0,\omega]} V$ is isomorphic to $\bigoplus_{c\in \mathbb{Z}_2^N} g_c \otimes_{\mathbb{C}[G_0,\omega]} V$, and thus to $\bigoplus_{c\in \mathbb{Z}_2^N} V^{g_c}$, where $V^{g_c}$ is the $\mathbb{C}[G_0,\omega]$-module given by
\begin{equation}
    g\cdot v = \nu(g_c, g, g_c) \rho_0(g_c^{-1}gg_c)(v),
\end{equation}
and $z\cdot v=\rho_0({}^{g_c}z)(v)$ for $z\in \mathbb{C}$ and $v\in V$. The phase factor $\nu(g_1, g_2, g_3)$ appears in the formula $(g_1)^{-1}\cdot g_2\cdot g_3 = \nu(g_1, g_2, g_3)g_1^{-1}g_2g_3$, to be explicit, we have
\begin{equation}
    \nu(g_1, g_2, g_3) := {}^{g_1}\omega(g_1,g_1^{-1})^{-1} \, {}^{g_1}\omega(g_2,g_{3})\omega(g_1^{-1},g_2g_{3}).
\end{equation}
A direct calculation shows that the graded representation $\rho$ is given by the block entries $\rho_{cc'}$ for $c,c'\in \mathbb{Z}_2^N$. For $z\in \mathbb{C}$, we have
\begin{equation}\label{eq:representation of C[G,omega,s] with more gradings complex number}
    \rho(z)_{cc'} = \delta_{c,c'}\rho_0({}^{g_c}z).
\end{equation}
For $g\in G_{d}$ with $d\in \mathbb{Z}_2^N$, we have
\begin{equation}\label{eq:representation of C[G,omega,s] with more gradings g linear}
    \rho(g)_{cc'}=
            \delta_{c,c'+d} \nu(g_c, g, g_{c'})\rho_0(g_c^{-1}gg_{c'}).
\end{equation}

Next, we determine the graded division algebra $D$ corresponding to the graded simple module. $D_0$ is always $\mathbb{C}$, and it acts on $\bigoplus_{c\in \mathbb{Z}_2^N} V$ diagonally by $\rho_0(z)$. As discussed in \cref{thm:structure theorem of Gr-crossed product algebra}, we first need to determine the subspace $E=\{c\in\mathbb{Z}_2^N | V^{g_c}\cong V\}$. Then, we choose a basis $\{z_i\}$ of $E$ such that the basis satisfies $s(z_1)=1$ and $s(z_i)=0$ for $i\neq 1$ when $s$ is nontrivial. Note that $D$ is determined by $\{D_i=D_0\oplus D_{z_i}\}$ and $\{a_{ij}\}$, as in \cref{subsubsection:classification of Z2N-graded division algebra}.

To determine $D_i$, we only need to construct an invertible element $\theta_i$ in $D_{z_i}$. Let $f_i$ be the isomorphism $V\cong V^{g_{z_i}}$. Notice that the map $\theta_i:V^{g_{c'}}\to V^{g_{c'+z_i}}$ is a $\mathbb{C}[G_0,\omega]$-map, and it must be induced by $(f_i)^{g_{c'}}$, i.e., the result of the action of the functor $(-)^{g_{c'}}$ on the morphism $f_i$. Therefore, the components of $\theta_i$ are given by
\begin{equation}\label{eq:action of theta_i}
    (\theta_i)_{cc'}=
            \delta_{c,c'+z_i} \nu(g_c, g_{c'}, g_{z_i})\rho_0(g_c^{-1}g_{c'}g_{z_i})f_i.
\end{equation}
Together with $D_0$, we can generate the whole $D$. $\theta_i^2$ is the identity up to a scalar, by Schur's lemma. For $i$ with $s(z_i) = 0$, we can multiply $f_i$ by a scalar such that $\theta_i^2 = 1$. For $i = 1$ with $s(z_1) = 1$, $\theta_1$ is antilinear; therefore, we can only multiply $f_1$ by a scalar such that $\theta_1^2 = \pm 1$. Finally, we have $a_{ij} = (\theta_i\theta_j)^2 \in \{\pm 1\}$ for $i\not =j$ according to \cref{subsection:classification of Z2N-graded division algebra}.

In practice, we can use the canonical antilinear map $K$ on $V$ if $V$ has a basis. For $s(c) = 1$, we take a new $\mathbb{C}[G_0,\omega]$-module $\bar{V}^{g_c}$ defined by
\begin{equation}
    g\cdot v = \bar{\nu}(g_c, g, g_c) \bar{\rho_0}(g_c^{-1}gg_c)(v),
\end{equation}
and $z\cdot v=\rho_0(z)(v)$ for $z\in \mathbb{C}$ and $v\in V$. It has the advantage that the complex structure becomes the original one on $V$, and $K:V^{g_c} \to \bar{V}^{g_c}$ is an isomorphism. Replacing $V^{g_c}$ with $\bar{V}^{g_c}$ when $s(c) = 1$, we get, for $g\in G_{d}$ with $d\in \mathbb{Z}_2^N$,
\begin{equation}
    \rho(g)_{cc'}=
            \delta_{c,c'+d} K^{s(c)}\nu(g_c, g, g_{c'})\rho_0(g_c^{-1}gg_{c'})K^{s(c')}.
\end{equation}
And the components of $\theta_i$ becomes
\begin{equation}
  \begin{aligned}
    (\theta_i)_{cc'} &=
        \delta_{c,c'+z_i} K^{s(c)} \nu(g_c, g_{c'}, g_{z_i}) \\
        & \phantom{{}={}} \quad \cdot \rho_0(g_c^{-1}g_{c'}g_{z_i}) K^{s(z_i)} f_i K^{s(c')},
  \end{aligned}
\end{equation}
where $f_i: V\to \bar{V}^{g_{z_i}}$ is an isomorphism.

The $D$-structure on the $\mathbb{Z}_2^N$-graded module can be used to extract a $\mathbb{Z}_2^N$-graded Wedderburn--Artin component with coefficient in $D^{\mathrm{op}}$, as introduced in \cref{thm:structure theorem of Gr-crossed product algebra}. We can construct one such component for each equivalence class of simple $\mathbb{C}[G_0,\omega]$-modules under the equivalence relation given by $V\cong W^{g_c}$ for some $c\in \mathbb{Z}_2^N$.

Therefore, we get the Wedderburn--Artin decomposition of $\mathbb{C}[G,\omega,s]$ as a $\mathbb{Z}_2^N$-graded algebra, which is given by the isomorphism $\Phi=(\rho_{i})$,
\begin{equation}\label{eq:decomposition Phi of graded group algebra with antilinear element with more grading}
    \mathbb{C}[G,\omega,s]\cong \prod_{i} M_{d_{i}'}(D_i)(\bar{c}_{i}),
\end{equation}
where $i$ takes values in the equivalence classes of simple $\mathbb{Z}_2^N$-graded $\mathbb{C}[G,\omega,s]$-modules; the equivalence relation is given by isomorphism up to a grading shift. We have $d_{i}' = |\Gr/E_{i}|d_{i_{0}} = d_{i}/\mathrm{dim}(D_i)$, where $d_{i}$ is the dimension of the simple $\mathbb{Z}_2^N$-graded $\mathbb{C}[G,\omega,s]$-module.

\subsubsection{\texorpdfstring{The inverse of $\Phi$}{The inverse of Phi}}\label{subsubsection:inverse of Phi with antilinear element with more grading}

For $f\in M_{d_{i}'}(D_i)(\bar{c}_{i})$, we have
\begin{widetext}
\begin{equation}\label{eq:inverse of Phi Z2N-graded twisted group algebra with antilinear elements}
    \Phi^{-1}(f) = \frac{d_i}{2|G|}\sum_{g\in G}\omega(g^{-1},g)^{-1}\left( \mathrm{Tr}\left(\mathrm{Re}\left(\rho_i(g^{-1})f\right)\right) - \mathrm{Tr}\left(\mathrm{Re}\left(\rho_i(ig^{-1})f\right) \right) i \right)\, g ,
\end{equation}
\end{widetext}
where $\mathrm{Re}$ is the projection from $D$ to the subspace $\mathbb{R}$ spanned by the unit. This formula is proved in \cref{subsubsection:inverse of Phi twisted group algebra with antilinear elements}.

\subsubsection{Character theory}\label{subsubsection:character theory of graded group algebra with antilinear element with more grading}

We first determine $E$ by characters. We need to calculate
\begin{equation}
    \frac{1}{|G_0|}\sum_{g\in G_0}\omega(g,g^{-1})^{-1}\chi^{g_c}(g^{-1})\chi(g)
\end{equation}
to determine whether $V\cong V^{g_c}$ for all $c\in \mathbb{Z}_2^N$, where $\chi^{g_c}$ is the character of $V^{g_c}$.

To determine $D$ by characters, we only need to determine $D_i$ and $a_{ij}$. For $i$ with $s(z_i) = 0$, $D_i\cong \mathbb{C}l^1$. When $s$ is nontrivial, $D_1$ is determined by $w_1$ as in \cref{subsubsection:character theory of twisted group algebra with antilinear elements},
\begin{equation}
    w_1=\frac{1}{|G_0|}\sum_{g\in G_{z_1}} \omega(g,g)\chi(g^2),
\end{equation}
where $\chi$ is the character of $V$. If $w_1=1$, then $D_1 = Cl^{0,2}$. If $w_1=-1$, then $D_1 = Cl^{2,0}$. Notice that $a_{ij}$ is determined by the $\mathbb{Z}_2$-graded division algebra $D_0\oplus D_{z_i+z_j}$; similarly, $a_{ij}$ is determined by
\begin{equation}
    w_{ij}=\frac{1}{|G_0|}\sum_{g\in G_{z_i+z_j}} \omega(g,g)\chi(g^2).
\end{equation}
If $w_{ij}=1$, then $a_{ij}=1$. If $w_{ij}=-1$, then $a_{ij}=-1$.

\subsection{\texorpdfstring{$\mathbb{Z}_2$-graded algebra $\symalg^V$}{Z2-graded algebra AV}}\label{subsection:Z2-Graded Algebra AV}

\subsubsection{The Wedderburn--Artin decomposition}\label{subsubsection:WA-decomp of AV}

Let us recall the definitions of the $\mathrm{Pin}$ and $\mathrm{Spin}$ groups \cite{karoubiKtheoryIntroduction2009}.
\begin{prop}\label{prop of Pin group}
    Let $(V,Q)$ be a finite-dimensional vector space over $\mathbb{R}$ equipped with a nondegenerate quadratic form. There are short exact sequences of groups,
    \begin{equation*}
        1\to \mathbb{Z}_2\to \mathrm{Pin}_Q\to O_Q\to 1
    \end{equation*}
    and
    \begin{equation*}
        1\to \mathbb{Z}_2\to \mathrm{Spin}_Q\to SO_Q\to 1.
    \end{equation*}
    where $\mathrm{Pin}_Q$ and $\mathrm{Spin}_Q$ are subgroups of the unit group of $Cl_Q$. $\mathrm{Pin}_Q$ is generated by elements $x=v_1 v_2 \dots v_k$ with $v_i \in V$ and $Q(v_i) \neq 0$, satisfying the normalization $|N(x)|=1$, where $N$ is the spinorial norm. $\mathrm{Spin}_Q$ consists of those elements in $\mathrm{Pin}_Q$ generated by an even number of vectors.

    The homomorphism from $\mathrm{Pin}_Q$ (resp. $\mathrm{Spin}_Q$) to $O_Q$ (resp. $SO_Q$) is defined by $x\mapsto (-1)^{deg(x)}x(-)x^{-1}$.
\end{prop}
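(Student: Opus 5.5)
The plan is to use the twisted adjoint action $\tilde{\mathrm{Ad}}_x(v)=(-1)^{\deg x}xvx^{-1}$ on the Clifford group, and to obtain both sequences from two facts: $\tilde{\mathrm{Ad}}$ is onto $O_Q$ (resp. $SO_Q$), and its kernel on the normalized group is $\{\pm1\}$.

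\emph{Image and well-definedness.} First, for a single vector $v\in V$ with $Q(v)\neq 0$, I compute $\tilde{\mathrm{Ad}}_v$ on $V$. The Clifford relation $vw+wv=2B(v,w)$ (with the sign convention of $Cl_Q$) gives
\[
-vwv^{-1}=w-2\frac{B(v,w)}{Q(v)}v .
\]
This is the reflection $r_v$ in the hyperplane $v^{\perp}$, so it lies in $O_Q$ and in particular preserves $V$. Since $\deg$ is additive mod 2, $\tilde{\mathrm{Ad}}$ is multiplicative, so a product $x=v_1\cdots v_k$ maps to $r_{v_1}\cdots r_{v_k}\in O_Q$. Thus the map $\mathrm{Pin}_Q\to O_Q$ is a well-defined group homomorphism. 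The twist by $(-1)^{\deg x}$ is exactly what makes single vectors act by reflections rather than by minus a reflection; this matters in odd dimension.

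\emph{Surjectivity.} I invoke the Cartan--Dieudonné theorem: every element of $O_Q$, for nondegenerate $Q$ over $\mathbb{R}$, is a product of at most $\dim V$ reflections in non-isotropic vectors. Rescaling each $v_i$ so that $Q(v_i)=\pm1$ gives $|N(x)|=1$, because the spinor norm satisfies $N(x)=\prod Q(v_i)$ up to sign conventions. Hence $\mathrm{Pin}_Q\to O_Q$ is onto. A reflection has determinant $-1$, so products of an even number of vectors land in $SO_Q$. Conversely, any element of $SO_Q$ is a product of an even number of reflections, so $\mathrm{Spin}_Q\to SO_Q$ is onto.

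\emph{Kernel.} This is the main step. Suppose $x$ is homogeneous and satisfies $(-1)^{\deg x}xv=vx$ for all $v\in V$. Choose an orthogonal basis $e_1,\dots,e_n$. Write $x=a+e_1b$, where $a$ and $b$ are built only from $e_2,\dots,e_n$. The commutation condition with $e_1$ forces $b=0$. Repeating this for each $e_i$ shows $x\in\mathbb{R}$. Because of the twist, this works for both parities of $\deg x$. The normalization $|N(x)|=|x^2|=1$ then gives $x=\pm1$. Finally, $-1=e_1e_1\cdot(\text{sign})$ is a product of an even number of vectors, so $\{\pm1\}\cong\mathbb{Z}_2$ lies in $\mathrm{Spin}_Q\subset\mathrm{Pin}_Q$.

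Exactness at each position then follows. The hard part is this kernel computation, in particular handling odd $x$ correctly. I would first make sure that $\mathrm{Pin}_Q$, as generated, consists of homogeneous elements, which is clear because each generator is a product of vectors.
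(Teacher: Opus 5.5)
Your proposal is correct and follows the same route as the paper: single vectors act by reflections under the twisted adjoint action, surjectivity comes from $O_Q$ (resp.\ $SO_Q$) being generated by (an even number of) reflections, and the kernel is $\{\pm 1\}$. Your orthogonal-basis argument for the kernel, which works for both parities because of the twist, supplies a step that the paper only asserts.
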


\begin{proof}
    The action of a vector $v\in V \subset \mathrm{Pin}_Q$ on $V$ is given by $v \mapsto -uvu^{-1}$, which is a reflection across the hyperplane orthogonal to $u$. Since $O_Q$ is generated by reflections (and $SO_Q$ by an even number of reflections), the maps $\mathrm{Pin}_Q \to O_Q$ and $\mathrm{Spin}_Q \to SO_Q$ are surjective. The kernel $\mathbb{Z}_2$ of these maps is $\{\pm 1\}$.
\end{proof}

Let $\rho:G\to O_Q$ be a group homomorphism; we can construct a diagram of exact sequences,
\begin{equation}\label{equation of extension}
    \begin{tikzcd}
        1 \ar[r] & \mathbb{Z}_2 \ar[r]\ar[d] & G_{\omega_Q} \ar[r,"\pi"]\ar[d,"\tilde{\rho}"] & G \ar[l,dotted,bend left,"f"]\ar[r]\ar[d,"\rho"] & 1\\
        1 \ar[r] & \mathbb{Z}_2 \ar[r] & \mathrm{Pin}_Q \ar[r] & O_Q \ar[r] & 1
    \end{tikzcd}
\end{equation}
The right square is a pullback; $G_{\omega_Q}$ is the group extension of $G$ and $\mathbb{Z}_2$ by the two-cocycle $\omega_Q$. The following theorem on ``separation of variables'' plays a central role in the construction of the Wedderburn--Artin decomposition of $\symalg^V$. Here, we slightly generalize the definition of $\symalg^V$ by allowing $V$ to be equipped with a general quadratic form $Q$. The orthogonal real representation $V$ is generalized to $G \to O_Q$, yielding the $\mathbb{Z}_2$-graded algebra $\symalg^{(V,Q)}$. When no confusion arises, this is also denoted as $\symalg^{Q}$ or $\symalg^{V}$.

\begin{thm}[\cite{cornfeldTenfoldTopologyCrystals2021,shiozakiClassificationSurfaceStates2022}]\label{thm:separation of variable}
    Let $(V,Q)$ be a finite-dimensional vector space over $\mathbb{R}$ equipped with a quadratic form and a $G$ action given by $\rho :G\to O_Q$, $\tilde{\rho}:G_{\omega_Q}\to \mathrm{Pin}_Q$ be a lifting of $\rho$, and $f$ be the section of $\pi$ as in \cref{equation of extension}. Then, we have a $\mathbb{Z}_2$-graded isomorphism 
    \begin{equation}
        \symalg^{(V,Q)}\cong Cl_Q\hat{\otimes}_{\mathbb{R}}\mathbb{C}[G,\omega',s],
    \end{equation} 
    where $\omega'$ is a two-cocycle over the coefficient $\mathbb{R}/\mathbb{Z}$ with
    \begin{equation*}
        \omega'(g_1,g_2) = \omega(g_1,g_2) + \frac{1}{2}\omega_Q(g_1,g_2) + \frac{1}{2}(h+o)\cup o(g_1, g_2),
    \end{equation*}
    The $\mathbb{Z}_2$ grading of $\mathbb{C}[G,\omega',s]$ is given by $h+o$, where $o$ is the one-cocycle given by $\mathrm{det}(g) = (-1)^{o(g)}$.
\end{thm}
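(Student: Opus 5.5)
The plan is to write down an explicit ``untwisting'' of the group generators. In $\symalg^{(V,Q)}$ the generators $g\in G$ act on the Clifford generators through the twisted conjugation
\[
(-1)^{h(g)}g\,\tilde e_i\,(g)^{-1}=\rho(g)(\tilde e_i).
\]
Meanwhile $\tilde\rho(f(g))\in\mathrm{Pin}_Q\subset Cl_Q$ implements the same orthogonal transformation inside $Cl_Q$, since $(-1)^{\deg x}x v x^{-1}=\rho(g)(v)$ with $\deg x=o(g)$. The idea is to define new elements $u_g:=\tilde\rho(f(g))^{-1}\,g\in\symalg^{(V,Q)}$. Up to signs, these should supercommute with all of $V\subset Cl_Q$, and we then show that the $u_g$ together with $i$ generate a copy of $\mathbb{C}[G,\omega',s]$ sitting in the graded centralizer of $Cl_Q$. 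The map $Cl_Q\hat\otimes\mathbb{C}[G,\omega',s]\to\symalg^{(V,Q)}$, $x\hat\otimes u_g\mapsto x\,u_g$, is then an algebra map. It is bijective because both sides have real dimension $2^{\dim V}\cdot 2|G|$ and the map is visibly surjective, since $g=\tilde\rho(f(g))u_g$.

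The first step is to fix the grading of $u_g$. Here $\tilde\rho(f(g))$ is a product of $o(g)\bmod 2$ vectors, so it has degree $o(g)$, while $g$ has degree $h(g)$. Hence $u_g$ has degree $h(g)+o(g)$, which gives the claimed grading $h+o$. Next we check the supercommutation $u_g\tilde e_i=(-1)^{(h+o)(g)}\tilde e_i u_g$ from the two conjugation formulas; the signs $(-1)^{h(g)}$ and $(-1)^{o(g)}$ combine exactly as required for the super tensor product. We also check that $u_g$ commutes or anticommutes with $i$ according to $s(g)$. This holds because $\tilde\rho(f(g))$ is real and $i$ is central in the Clifford part, so antilinearity is untouched.

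The main computation, and the step I expect to be the obstacle, is the cocycle $u_{g_1}u_{g_2}=\omega'(g_1,g_2)u_{g_1g_2}$. We must move $\tilde\rho(f(g_2))^{-1}$ past $g_1$. Conjugating a Pin element $x$ by $g_1$ yields $\pm$ the Pin lift of $\rho(g_1)\rho(g_2)\rho(g_1)^{-1}$, with the extra factor $(-1)^{h(g_1)\deg x}$. This is the source of the $\frac12 h\cup o$ term. The reordering of $u_{g_1}$ (degree $h+o$) past the Clifford element should produce the $\frac12 o\cup o$ part; I would verify this carefully with the super-sign convention.

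The remaining factor is $\tilde\rho(f(g_1))\tilde\rho(f(g_2))=\pm\tilde\rho(f(g_1g_2))$. By definition of the pullback extension this sign is $(-1)^{\omega_Q(g_1,g_2)}$, which contributes $\frac12\omega_Q$. Together with the original $\omega(g_1,g_2)$ from $g_1\cdot g_2$, this yields the formula for $\omega'$ (written additively in $\mathbb{R}/\mathbb{Z}$). The bookkeeping of which sign comes from the super-braiding versus the $h$-twisted conjugation is where errors are likely, so I would test the result on $G=\mathbb{Z}_2$ acting by a single reflection with $h$ trivial and nontrivial. Associativity of the new algebra is automatic since it is a subalgebra, so $\omega'$ is indeed a cocycle.
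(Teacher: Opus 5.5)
Your proposal is correct and is the paper's proof run in the opposite direction. The paper defines $e_i\mapsto e_i\hat\otimes 1$, $g\mapsto\tilde\rho(f(g))\hat\otimes g$ and checks the same two identities, so your $u_g=\tilde\rho(f(g))^{-1}g$ is exactly the preimage of $1\hat\otimes g$. Your sign bookkeeping also collapses to one step once you use the supercommutation you already prove: $u_{g_1}\tilde\rho(f(g_2))^{-1}=(-1)^{(h+o)(g_1)\,o(g_2)}\tilde\rho(f(g_2))^{-1}u_{g_1}$, followed by $\tilde\rho(f(g_2))^{-1}\tilde\rho(f(g_1))^{-1}=(-1)^{\omega_Q(g_1,g_2)}\tilde\rho(f(g_1g_2))^{-1}$.

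The one point to tighten is bijectivity. Since $\symalg^{(V,Q)}$ is presented by generators and relations, normal ordering only gives $\dim_{\mathbb{R}}\symalg^{(V,Q)}\le 2^{\dim V}\cdot 2|G|$. That bound suffices in the paper's direction, which is a surjection out of $\symalg^{(V,Q)}$, but not in yours, which is a surjection onto it. To close this, note that the paper's map respects the defining relations by the same computation and is a left inverse of yours on generators.
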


\begin{proof}

    We construct the isomorphism explicitly by $g\mapsto \tilde{\rho}\left(f\left(g\right)\right)\hat{\otimes}g$ and $e_i\mapsto e_i\hat{\otimes}1$. The map is obviously a bijection, and it is an algebra homomorphism according to the following calculation:
\begin{align*}
        & \phantom{{}={}} \left(\tilde{\rho}\left(f\left(g_1\right)\right)\hat{\otimes}g_1\right)\cdot \left(\tilde{\rho}\left(f\left(g_2\right)\right)\hat{\otimes}g_2\right) \\ 
        & = \left(-1\right)^{ \left(h\left(g_1\right)+o\left(g_1\right)\right)o\left(g_2\right) }
        \tilde{\rho}\left(f\left(g_1\right)f\left(g_2\right)\right)\hat{\otimes} e^{ i2\pi\omega'\left(g_1,g_2\right) } g_1g_2 \\
        & = \left(-1\right)^{ \left( h\left(g_1\right) + o\left(g_1\right) \right) o\left(g_2\right) + \omega_Q\left(g_1,g_2\right) } e^{ i2\pi\omega'\left(g_1,g_2\right) } \\
        & \phantom{{}={}} \quad \cdot \tilde{\rho}\left(f\left(g_1g_2\right)\right) \hat{\otimes} g_1g_2 \\
        & = e^{ i2\pi\omega\left(g_1,g_2\right) } \tilde{\rho}\left(f\left(g_1g_2\right)\right)\hat{\otimes}g_1g_2
\end{align*}
\begin{align*}
        & \phantom{{}={}} \left(\tilde{\rho}\left(f\left(g\right)\right)\hat{\otimes}g\right)\cdot \left(v\hat{\otimes} 1\right) \cdot \left(\tilde{\rho}\left(f\left(g\right)\right)\hat{\otimes}g\right)^{-1} \\
        & = \left(\tilde{\rho}\left(f\left(g\right)\right)\hat{\otimes}1\right) \cdot \left(1\hat{\otimes}g\right) \cdot \left(v\hat{\otimes}1\right) \cdot \left(1\hat{\otimes}g\right)^{-1} \\
        & \phantom{{}={}} \quad \cdot \left(\tilde{\rho}\left(f\left(g\right)\right)^{-1}\hat{\otimes}1\right) \\
        & = (-1)^{(h+o)(g)} \tilde{\rho}\left(f\left(g\right)\right)v\tilde{\rho}\left(f\left(g\right)\right)^{-1}\hat{\otimes}1 \\
        & = (-1)^{h(g)}\rho\left(g\right)\left(v\right)\hat{\otimes}1 .
\end{align*}
We used the following equations in the above calculation:
\begin{align}
    f(g_1)f(g_2) & = (-1)^{\omega_Q(g_1,g_2)} \cdot f(g_1g_2) ,
\end{align}
where $-1$ is considered as an element in $\mathbb{Z}_2\hookrightarrow G_{\omega_Q}$.
\end{proof}

$\mathbb{C}[G,\omega',s]$ is a $\mathbb{Z}_2\times \mathbb{Z}_2$-graded algebra, where the grading is given by $(s,h+o)$. We first obtain the Wedderburn--Artin decomposition of $\mathbb{C}[G,\omega',s]$ as a $\mathbb{Z}_2\times \mathbb{Z}_2$-graded algebra using the method described in \cref{subsection:twisted group algebra with more gradings}. We then forget the grading corresponding to $s$ to obtain the Wedderburn--Artin decomposition as a $\mathbb{Z}_2$-graded algebra, as treated in Appendix \ref{subsubsection:grading forgetting}. Finally, taking the graded tensor product with $Cl_Q$ yields the Wedderburn--Artin decomposition of $\symalg^V$ as a $\mathbb{Z}_2$-graded algebra (see Appendix \ref{subsubsection:Graded Tensor Product of Z2N-Graded Algebras}).

\subsubsection{\texorpdfstring{$H^2(BO_Q,\mathbb{Z}_2)$}{H2(BOQ,Z2)}}\label{subsubsection:H^2(BO_Q,Z_2)}

In this section, we provide supplementary details regarding $\omega_Q$. Similar to ordinary group extensions, the group extension of $O_Q$ by $\mathbb{Z}_2$ is classified by $\omega_Q\in H^2(BO_Q,\mathbb{Z}_2)$. The cocycle used in the extension of $G$ is the image of $\omega_Q$ under the action of $\rho_V^{*}$, and this image is likewise denoted as $\omega_Q$.

It is well known \cite{brownCohomology$BSO_n$$BO_n$1982} that 
\begin{equation}
    H^*(BO(n),\mathbb{Z}_2)\cong \mathbb{Z}_2[w_1,\dots,w_n],
\end{equation}
where $w_i\in H^i(BO(n),\mathbb{Z}_2)$ are called Stiefel--Whitney classes. $\mathrm{Pin}_+(n)$ is the extension by $w_2$. This result can be generalized to $O(p,q)$ \cite{trautmanDoubleCoversPseudoorthogonal2001}.

\begin{thm}
    \begin{equation}
        H^*(BO(p,q),\mathbb{Z}_2)\cong \mathbb{Z}_2[w_1^-,\dots,w_p^-] \otimes_{\mathbb{Z}_2} \mathbb{Z}_2[w_1^+,\dots,w_q^+].
    \end{equation}
\end{thm}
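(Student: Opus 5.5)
The plan is to reduce the computation to the compact case by a homotopy equivalence and then apply the Künneth formula. The pseudo-orthogonal group $O(p,q)$, viewed as a Lie group, deformation retracts onto its maximal compact subgroup $O(p)\times O(q)$. This is the polar (Cartan) decomposition: every $g\in O(p,q)$ factors uniquely as $g=k\exp(X)$, with $k\in O(p)\times O(q)$ and $X$ in the complementary subspace $\mathfrak{p}\cong\mathbb{R}^{pq}$ of symmetric elements of $\mathfrak{o}(p,q)$. Here $O(p)$ acts on the negative-definite block and $O(q)$ on the positive-definite block, so $p$ and $q$ match the $e_i$ and $\tilde e_i$ generators.

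The first step is to deduce $BO(p,q)\simeq B(O(p)\times O(q))$. The inclusion $O(p)\times O(q)\hookrightarrow O(p,q)$ is a homotopy equivalence of topological groups. It therefore induces a homotopy equivalence of classifying spaces; one way to see this is to compare the universal fibrations and apply the five lemma to the long exact sequences of homotopy groups. A second reading of the statement is also relevant: in the paper, $\omega_Q$ is pulled back to a \emph{finite} group $G$ through $\rho:G\to O_Q$. In that setting the relevant object is continuous (topological) group cohomology, which agrees with $H^*(BO(p,q),\mathbb{Z}_2)$ for the classifying space of the Lie group. I will state explicitly that $BO(p,q)$ means the classifying space of the topological group, so that the reduction above applies.

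The second step uses $B(O(p)\times O(q))\simeq BO(p)\times BO(q)$ together with the cited result $H^*(BO(n),\mathbb{Z}_2)\cong\mathbb{Z}_2[w_1,\dots,w_n]$. Coefficients lie in the field $\mathbb{Z}_2$, and each factor has finitely generated cohomology in every degree, so the Künneth theorem gives
\begin{equation*}
H^*(BO(p)\times BO(q),\mathbb{Z}_2)\cong H^*(BO(p),\mathbb{Z}_2)\otimes_{\mathbb{Z}_2}H^*(BO(q),\mathbb{Z}_2),
\end{equation*}
with no Tor terms. We then define $w_i^-$ as the Stiefel--Whitney classes pulled back along the projection to $BO(p)$, and $w_i^+$ likewise along the projection to $BO(q)$. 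Geometrically, these are the Stiefel--Whitney classes of the bundles associated to a maximal negative-definite subspace and a maximal positive-definite subspace of $V$, respectively.

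The main obstacle is bookkeeping rather than depth: fixing conventions so that the labels $w^\pm$ agree with later uses such as \cref{cor:2 of sw class}. In particular, one must check that $w_1^++w_1^-$ restricts to $o$, the determinant class, since $\det$ on $O(p)\times O(q)$ is the product of the two determinants.
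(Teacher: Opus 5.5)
Your proposal is correct and follows the paper's proof. The paper obtains the retraction of $O(p,q)$ onto $O(p)\times O(q)$ by restricting the polar-decomposition retraction of $GL(p+q)$ onto $O(p+q)$, which is your Cartan decomposition, and then applies the K\"unneth formula; you also spell out the passage to classifying spaces and the finite-type hypothesis, which the paper leaves implicit.

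One aside is false and should be dropped, though it plays no role in your argument. Continuous group cohomology with coefficients in the discrete group $\mathbb{Z}_2$ does not agree with $H^*(BO(p,q),\mathbb{Z}_2)$: such cochains are locally constant and only see $\pi_0(O(p,q))$, so they cannot detect $w_2^\pm$. Working with $BO(p,q)$ as the classifying space of the topological group, as you go on to do, is the right setting.
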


\begin{proof}
    $O(p,q)$ is a subgroup of $GL(p+q)$. $GL(p+q)$ has a homotopy retraction to $O(p+q)$. Restricting this homotopy retraction to $O(p,q)$, we obtain a homotopy retraction from $O(p,q)$ to $O(p)\times O(q)$. The result then follows from the Künneth formula.
\end{proof}

\begin{cor}\label{cor:1 of sw class}
    $H^2(BO(p,q),\mathbb{Z}_2)$ is generated by $w_2^+$, $w_2^-$, $w_1^+w_1^+$, $w_1^-w_1^-$, and $w_1^+w_1^-$, where the symbols for the cup product and tensor product are omitted. These cohomology classes characterize the algebraic relations satisfied by the generators of $O(p,q)$, i.e., the algebraic relations of reflections along a vector. Let $v,v'\in \mathbb{R}^q\subset \mathbb{R}^{p,q}$ and $u,u'\in \mathbb{R}^p\subset \mathbb{R}^{p,q}$, where $v$ is orthogonal to $v'$ and $u$ is orthogonal to $u'$.
    \begin{enumerate}
        \item $w_2^+$ (or $w_2^-$) characterizes the (anti)commutativity of reflections along $v$ and $v'$ (or $u$ and $u'$).
        \item $w_1^+w_1^+$ (or $w_1^-w_1^-$) characterizes the fact that the square of the reflection along $v$ (or $u$) equals $\pm 1$.
        \item $w_1^+w_1^-$ characterizes the (anti)commutativity of reflections along $v$ and $u$.
        \item $w_1^+$ (or $w_1^-$) is the one-cocycle determined by $\mathrm{det}:O(q)\to O(p,q)\to \mathbb{Z}_2$ (or $\mathrm{det}:O(p)\to O(p,q)\to \mathbb{Z}_2$).
    \end{enumerate}

    In particular, the group extension of $\mathrm{Pin}(p,q)$ is given by $w_2^+ + w_2^- + w_1^+w_1^- + w_1^-w_1^-$.
\end{cor}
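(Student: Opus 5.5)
The plan is to reduce everything to the generators of $O(p,q)$, namely reflections, and read off each class from the relations their lifts satisfy in $\mathrm{Pin}(p,q)\subset Cl^{p,q}$.

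\textbf{Step 1: generators of $H^2$.} By the preceding theorem, $H^*(BO(p,q),\mathbb{Z}_2)$ is the tensor product of the Stiefel--Whitney polynomial rings of $BO(p)$ and $BO(q)$. Hence the degree-two part is spanned by
\begin{itemize}
\item $w_2^\pm$,
\item $(w_1^\pm)^2$,
\item $w_1^+w_1^-$.
\end{itemize}
Since $O(p,q)$ retracts onto $O(p)\times O(q)$, the pullback to the maximal compact subgroup detects every class. The one-dimensional classes $w_1^\pm$ are the determinants of the two blocks, which gives item 4.

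\textbf{Step 2: central extensions and relations.} A class in $H^2(BO(p)\times BO(q),\mathbb{Z}_2)$ corresponds to a central $\mathbb{Z}_2$-extension. Such an extension of the compact group is determined by lifts of the reflections $r_u$, $r_v$ and the relations among them:
\begin{itemize}
\item the square of a single lifted reflection, which is $\pm1$;
\item the commutator of lifts of reflections along two orthogonal vectors in the same block;
\item the commutator of lifts of reflections in different blocks.
\end{itemize}
These signs are independent, and the dictionary is the standard one for $\mathrm{Pin}_\pm(n)$. The extension by $w_2$ gives $\mathrm{Pin}_+$, with lifts squaring to $1$ and anticommuting; adding $w_1^2$ flips the square. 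For the cross term, $w_1^+w_1^-$ is pulled back from $\mathbb{Z}_2\times\mathbb{Z}_2$ via $(\det_+,\det_-)$. Its extension is $D_8$ versus $\mathbb{Z}_2^3$, which toggles commutation of $r_v$ with $r_u$ and nothing else. Matching the $5$ classes with these $5$ sign data gives items 1–3.

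\textbf{Step 3: identify the $\mathrm{Pin}(p,q)$ extension.} Compute the relations among the lifts in $Cl^{p,q}$.
\begin{itemize}
\item For $\tilde e$ with $\tilde e^2=1$, the lift of $r_v$ is $\pm\tilde e$, which squares to $1$.
\item For $e$ with $e^2=-1$, the lift of $r_u$ is $\pm e$, which squares to $-1$.
\item Orthogonal vectors anticommute in all cases: within the $+$ block, within the $-$ block, and across blocks.
\end{itemize}
Comparing with the trivial extension, where lifts square to $1$ and commute:
\begin{itemize}
\item within each block the anticommutation contributes $w_2^+$ and $w_2^-$;
\item the square $-1$ in the $p$-block contributes $w_1^-w_1^-$;
\item the cross anticommutation contributes $w_1^+w_1^-$.
\end{itemize}
This yields $w_2^++w_2^-+w_1^+w_1^-+w_1^-w_1^-$.

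\textbf{Main obstacle.} The delicate point is normalization. One must check that the class $w_2$ by itself produces anticommutation together with squares equal to $+1$, rather than some mixture with $w_1^2$. This depends on the convention that the $\mathrm{Pin}_+$ lifts square to $+1$.

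I would settle it on $O(2)$ with two reflections. Use the restriction to the dihedral subgroup and the known fact that $w_1^2$ restricted to a reflection $\mathbb{Z}_2$ is nonzero while $w_2$ restricts to zero. Consistency of signs across the three relation types would then be verified on $\mathbb{Z}_2^n$ subgroups of diagonal reflections, where all classes are explicit quadratic forms.
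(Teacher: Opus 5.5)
Your proposal is correct and follows the same route as the paper. The generators come from the retraction $O(p,q)\simeq O(p)\times O(q)$ plus the K\"unneth formula (the preceding theorem), and the $\mathrm{Pin}(p,q)$ class is read off from the Clifford relations $\tilde e^2=1$, $e^2=-1$ and the anticommutation of orthogonal vectors, exactly as in your Step 3. The paper asserts items 1--3 without proof, so your detection argument fills that gap: restriction to the diagonal subgroup $\mathbb{Z}_2^p\times\mathbb{Z}_2^q$ is injective, and there $w_2^+\mapsto\sum_{j<j'}y_jy_{j'}$, $(w_1^+)^2\mapsto\sum_j y_j^2$ and $w_1^+w_1^-\mapsto\sum_{i,j}x_iy_j$ become explicit quadratic forms.
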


\begin{cor}\label{cor:2 of sw class}
    There is an obvious isomorphism $O(p,q)\cong O(q,p)$; hence, there is an isomorphism 
    \begin{equation}
        H^*(BO(p,q),\mathbb{Z}_2)\cong H^*(BO(q,p),\mathbb{Z}_2).
    \end{equation}
    $\mathrm{Pin}(q,p)$ is also a group extension of $O(p,q)$, and the two-cocycle corresponding to this group extension is the element $w_2^++w_2^-+w_1^+w_1^-+w_1^+w_1^+$ in $H^*(BO(p,q),\mathbb{Z}_2)$. In particular, the difference between the two cocycles used to extend $\mathrm{Pin}(q,p)$ and $\mathrm{Pin}(p,q)$ is $w_1^+w_1^++w_1^-w_1^-$.
\end{cor}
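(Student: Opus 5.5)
The plan is to make the swap $O(p,q)\cong O(q,p)$ explicit and then transport the cohomology classes along it.

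\textbf{The isomorphism on $H^*$.} The identification $O(p,q)\cong O(q,p)$ comes from negating the quadratic form. The underlying vector space and the group of linear maps are unchanged, but the positive-definite and negative-definite blocks trade places. Hence, on the maximal compact subgroup $O(p)\times O(q)$, the isomorphism simply swaps the two factors. By the Künneth description in the preceding theorem, the induced map on $H^*(B-,\mathbb{Z}_2)$ exchanges $w_i^+$ and $w_i^-$. This gives the stated isomorphism of cohomology rings.

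\textbf{The class of $\mathrm{Pin}(q,p)$ viewed over $O(p,q)$.} By \cref{cor:1 of sw class}, the extension class of $\mathrm{Pin}(q,p)$, written in its own variables, is $w_2^+ + w_2^- + w_1^+w_1^- + w_1^-w_1^-$. Here the superscripts refer to the sign conventions for $O(q,p)$. Pulling this class back along the swap exchanges $+$ and $-$. The result, in the variables of $O(p,q)$, is $w_2^-+w_2^++w_1^-w_1^++w_1^+w_1^+$, which is the claimed class.

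\textbf{Direct check via reflection relations.} I would confirm this independently using the dictionary in \cref{cor:1 of sw class}. In $\mathrm{Pin}(q,p)\subset Cl_{-Q}$, a unit vector $v$ that squares to $+1$ in $Cl_Q$ squares to $-1$, and conversely. Therefore the lift of every reflection has its square flipped in sign. This flip is exactly what the classes $w_1^+w_1^+$ and $w_1^-w_1^-$ record, so both of them are toggled.

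Anticommutation of lifts of orthogonal vectors is unaffected by the sign of the form. Thus $w_2^\pm$ and $w_1^+w_1^-$ remain unchanged.

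\textbf{The difference.} Adding the two classes over $\mathbb{Z}_2$ gives $w_1^+w_1^++w_1^-w_1^-$, which proves the last claim.

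\textbf{Main obstacle.} The delicate point is the convention. I need to check which of $w_1^\pm w_1^\pm$ corresponds to ``reflections squaring to $-1$'' in each algebra, relative to the paper's convention $e_i^2=-1$ and $\tilde e_i^2=1$. Only then is the formula in \cref{cor:1 of sw class}, with its $w_1^-w_1^-$ term, pinned down. I would verify this on $p=1,q=0$. There $\mathrm{Pin}(1,0)=\{\pm1,\pm e_1\}\cong\mathbb{Z}_4$, which is the nontrivial extension of $\mathbb{Z}_2$, while $\mathrm{Pin}(0,1)\cong\mathbb{Z}_2\times\mathbb{Z}_2$ is split. This example fixes the sign convention.
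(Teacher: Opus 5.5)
Your proposal is correct and follows the route the paper itself indicates: the identification $O(Q)=O(-Q)$ exchanges the roles of the $O(p)$ and $O(q)$ factors, hence swaps $w_i^+\leftrightarrow w_i^-$, and pulling back the class of $\mathrm{Pin}(q,p)$ given by \cref{cor:1 of sw class} yields $w_2^++w_2^-+w_1^+w_1^-+w_1^+w_1^+$, whose difference from the $\mathrm{Pin}(p,q)$ class is $w_1^+w_1^++w_1^-w_1^-$. Your independent check via squares and anticommutators of lifted reflections is also consistent, and it can be made fully rigorous by restricting to the diagonal subgroup $\mathbb{Z}_2^p\times\mathbb{Z}_2^q\subset O(p)\times O(q)$, on which the mod-$2$ cohomology of $BO(p)\times BO(q)$ is detected.
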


\subsection{Constructing Dirac Hamiltonians}\label{subsection:constructing Dirac Hamiltonians}

In this section, we explain how to construct Dirac Hamiltonians from the elements in the algebraic classification $\mathcal{A}(\symalg^V)$. In fact, we only need to write down the Dirac Hamiltonians corresponding to the generators of $\mathcal{A}(\symalg^V)$, since the Hamiltonians corresponding to other elements can be obtained via direct sums. The generators of $\mathcal{A}(\symalg^V)$ correspond to the $\mathbb{Z}_2$-graded simple modules of $\symalg^V$, or the ungraded simple modules of $Cl^{0,1}\hat{\otimes}\symalg^V$. Therefore, it suffices to construct the simple modules from each component of the Wedderburn--Artin decomposition. Finally, we use \cref{eq:Dirac Hamiltonian} to write down the Dirac Hamiltonians from these simple modules. Examples of Dirac Hamiltonians are given in \cref{subsection:use GAP to compute more complicated examples}.

\subsection{Topological invariants of Dirac Hamiltonians}\label{subsection:topological invariant of Dirac Hamiltonians}

Consider a given Dirac Hamiltonian [see \cref{eq:Dirac Hamiltonian}] whose mass term satisfies $M = - M_0$. From this, we can extract a $\mathbb{Z}_2$-graded module $N$ of $\symalg^V$. Let $\rho_N$ denote its representation, where $M$ is regarded as the $\mathbb{Z}_2$ grading. To determine which element in $\mathcal{A}(\symalg^V)$ it belongs to, we consider two cases.

The first case is the $\mathbb{Z}_2$ topological invariant, which corresponds to a $\mathbb{Z}_2$ component in $\mathcal{A}(\symalg^V)$. Given the decomposition
\begin{equation}
    \Phi: \mathbb{C}[G,\omega,s] \cong \prod_{i} M_{d_{i}'}(D_i)(\bar{c}_{i}),
\end{equation}
suppose the $\mathbb{Z}_2$ component corresponds to the index $i$. We map the identity element of the matrix algebra of component $i$ back to $\mathbb{C}[G,\omega,s]$ via $\Phi^{-1}$ given in \cref{eq:inverse of Phi Z2N-graded twisted group algebra with antilinear elements}, and then map it to $\symalg^V$ via \cref{thm:separation of variable}, denoting the resulting element as $a_i$. The components of $\mathbb{C}[G,\omega,s]$ are in one-to-one correspondence with the components of $\symalg^V$. Note that $\rho_N(a_i)$ acts as the identity only on the submodule of $V$ corresponding to component $i$, and acts as zero on all other submodules. Assuming the dimension of the graded simple module of $\symalg^V$ corresponding to component $i$ is $n_i$, then $\frac{1}{n_i}\mathrm{Tr}_{\mathbb{R}}(\rho_N(a_i)) \pmod{2}$ is exactly the $\mathbb{Z}_2$ topological invariant. Note that the trace here requires treating the representation as a real representation.

The other case is the $\mathbb{Z}$ topological invariant, which corresponds to a $\mathbb{Z}$ component in $\mathcal{A}(\symalg^V)$ and to the component $i$ in the decomposition of $\mathbb{C}[G,\omega,s]$. If the dimension of the system is $q$, then the Clifford algebra part of $\symalg^V$ is $Cl^{0,q}$, and we take an element $\varepsilon_1 = \tilde{e}_1\tilde{e}_2\dots \tilde{e}_q$ in it. For $M_{d_{i}'}(D_i)(\bar{c}_{i})$, when $D_i$ is $\mathbb{R}$, $\mathbb{C}$, or $\mathbb{H}$, we take a diagonal matrix, map it back to $\symalg^V$, and denote it as $\varepsilon_2$. This diagonal matrix is chosen such that for $\bar{c}_{i} = (c_{i1}, c_{i2}, \dots)$, if $c_{ij} = 0$, the $j$th diagonal element is $1$; if $c_{ij} = 1$, the $j$th diagonal element is $-1$. For $D_i = \mathbb{C}$, if $\varepsilon_1^2 = -1$, the diagonal elements are changed to $\pm i$. When $D_i$ is one of the remaining seven $\mathbb{Z}_2$-graded division algebras, since they are all Clifford algebras, we take a diagonal matrix whose diagonal elements are $e_1e_2\dots \tilde{e}_1\tilde{e}_2\dots$, map it back to $\symalg^V$, and denote it as $\varepsilon_2$. Assuming the dimension of the graded simple module of $\symalg^V$ corresponding to component $i$ is $n_i$, then $\frac{1}{n_i}\mathrm{Tr}_{\mathbb{R}}\left(M\rho_N(\varepsilon_1\varepsilon_2)\right)$ is exactly the $\mathbb{Z}$ topological invariant.

\subsection{Computing more complicated examples using GAP}\label{subsection:use GAP to compute more complicated examples}

When computing more complicated examples, manual calculation is extremely cumbersome; therefore, a GAP package was developed by the authors \cite{ttiiddeeTtiiddeeFreeFermionicSPT2026}. GAP \cite{GAP4} is a computer algebra system specialized in group theory, capable of computing irreducible characters of finite groups and the irreducible complex representations corresponding to these characters. Consequently, the computational method introduced in this work assumes that all irreducible complex representations of the group are known, and subsequent calculations are performed on this basis.

The computational method introduced herein requires taking square roots and solving norm equations. Since GAP primarily represents numbers via cyclotomic fields and lacks real or complex number types, and since cyclotomic fields are not closed under taking square roots or solving norm equations, GAP cannot perform these operations on cyclotomic fields. Therefore, PARI/GP \cite{PARI2} was also utilized. PARI/GP is a computer algebra system specialized in number theory calculations. GAP provides a prototype for its interface, allowing its functionality to be integrated into GAP.

In the package developed by the authors, the defining information of $\symalg^V$, namely $G$, $\omega$, $s$, $h$, and $V$, is required as input to obtain the Wedderburn--Artin decomposition of $\mathbb{C}[G,\omega',s]$. The results of some examples are presented below.

\subsubsection{\texorpdfstring{Example: Spin-$\frac{1}{2}$ system with $S_2\times \mathbb{Z}_2^{\hat{C}}$ symmetry, where $\hat{C}^2 = 1$}{Example: Spin-1/2 system with S2 x Z2C symmetry, where C2 = 1}}\label{subsubsection:example Spin-1/2 S2 x Z2C C2 = 1}

When the symmetry is the spin-$\frac{1}{2}$ $G = S_2\times \mathbb{Z}_2^{\hat{C}}$, we have $s(\hat{C})=h(\hat{C})=1$, and their values are $0$ on the remaining group elements. $\hat{s}_2$ has a natural action on the three-dimensional space $V$, and $\hat{C}$ acts as a reflection on $V$. The two-cochain $\omega$ on $S_2$ is determined by the spin-$\frac{1}{2}$ nature, which is given by the group extension $Pin_{-}(3)$ of $O(3)$. The $\omega$ on $\mathbb{Z}_2^{\hat{C}}$ is determined by $\hat{C}^2=1$. Furthermore, since the elements in $S_2$ commute with $\hat{C}$, we have $\omega(\hat{s}_2, \hat{C})=\omega(\hat{C}, \hat{s}_2)=0$. The decomposition of $\mathbb{C}[G,\omega',s]$ is given by
\begin{equation}
    \mathbb{C}[G,\omega',s] \cong Cl^{0,3} \textrm{,}
\end{equation}
with the mappings $s_2\mapsto \tilde{e}_1$, $C\mapsto \tilde{e}_3\tilde{e}_1$, and $i\mapsto \tilde{e}_2\tilde{e}_3$. Furthermore, we can obtain the decomposition of $\symalg^V$ as
\begin{equation}
    \symalg^V \cong Cl^{0,6} \cong M_4(Cl^{2,0}) \textrm{,}
\end{equation}
Therefore, according to \cref{table:A(Clpq)}, its classification is $0$. Although its classification is trivial, we can still construct a trivial Dirac Hamiltonian from this decomposition, which is given by \cref{eq:Dirac Hamiltonian} with
\begin{equation}
    \begin{aligned}
        \tilde{\Gamma}_1 & = \sigma_z \otimes \sigma_y \otimes 1 \\
        \tilde{\Gamma}_2 & = \sigma_x \otimes \sigma_y \otimes 1 \\
        \tilde{\Gamma}_3 & = \sigma_y \otimes \sigma_y \otimes 1 \\
        \hat{s}_2 & = 1 \otimes \sigma_z \otimes \sigma_y \\
        \hat{C} & = (\sigma_y \otimes \sigma_y \otimes 1)K \\
        M = - M_0 & = 1 \otimes \sigma_z \otimes 1 \\
    \end{aligned}
\end{equation}
where $K$ is the complex conjugation operator. Substituting these into \cref{eq:Dirac Hamiltonian} yields the Dirac Hamiltonian.

\subsubsection{\texorpdfstring{Example: Spin-$\frac{1}{2}$ system with $S_2\times \mathbb{Z}_2^{\hat{C}}$ symmetry and nontrivial superconducting pairing, where $\hat{C}^2 = 1$}{Example: Spin-1/2 system with S2 x Z2C symmetry with nontrivial superconducting pairing, where C2 = 1}}\label{subsubsection:example Spin-1/2 S2 x Z2C with nontrivial pairing C2 = 1}

When the gap of a topological superconductor features a nontrivial superconducting pairing, meaning the gap function carries a one-dimensional representation, the two-cocycle $\omega$ of $G$ no longer satisfies $\omega(\hat{g}, \hat{C}) = \omega(\hat{C}, \hat{g}) = 0$ for $\hat{g}\in H$, where $G$ is $H\times \mathbb{Z}_2^{\hat{C}}$, and $H$ contains no antilinear symmetries. The rule for the modification of $\omega$ is given in Ref.~\cite{shiozakiClassificationSurfaceStates2022}. If the one-cochain corresponding to the one-dimensional representation on the gap function is denoted by $\theta$, we have
\begin{equation}
    (1, \hat{C}) \cdot (\hat{g}, 1) = e^{i \pi \theta(\hat{g})} (\hat{g}, \hat{C})\textrm{.}
\end{equation}
When the symmetry is the spin-$\frac{1}{2}$ $G = S_2\times \mathbb{Z}_2^{\hat{C}}$ with nontrivial superconducting pairing, the decomposition of $\mathbb{C}[G,\omega',s]$ is
\begin{equation}
    \mathbb{C}[G,\omega',s] \cong Cl^{3,0} \textrm{,}
\end{equation}
with $s_2\mapsto e_1e_2e_3$, $C\mapsto e_3e_1$, and $i\mapsto e_2e_3$. Furthermore, the decomposition of $\symalg^V$ is given by
\begin{equation}
    \symalg^V \cong Cl^{3,3} \cong M_8(\mathbb{R})(\bar{c}) \textrm{,}
\end{equation}
where the first four components of $\bar{c}$ are $0$, and the last four components are $1$. Therefore, according to \cref{table:A(Clpq)}, its classification is $\mathbb{Z}$. The Dirac Hamiltonian corresponding to the generator of $\mathbb{Z}$ is given by \cref{eq:Dirac Hamiltonian}, with
\begin{equation}
    \begin{aligned}
        \tilde{\Gamma}_1 & = \sigma_x \otimes 1 \\
        \tilde{\Gamma}_2 & = \sigma_y \otimes \sigma_z \\
        \tilde{\Gamma}_3 & = \sigma_y \otimes \sigma_x \\
        \hat{s}_2 & = \sigma_z \otimes 1 \\
        \hat{C} & = (\sigma_x \otimes 1) K \\
        M = - M_0 & = \sigma_z \otimes 1 \\
    \end{aligned}
\end{equation}
Substituting these into \cref{eq:Dirac Hamiltonian} yields the Dirac Hamiltonian.

\subsubsection{\texorpdfstring{Example: Spin-$\frac{1}{2}$ system with $D_{3d}\times \mathbb{Z}_2^{\hat{C}} \times \mathbb{Z}_2^{\hat{T}}$ symmetry, where $\hat{C}^2 = 1$ and $\hat{T}^2 = -1$}{Example: Spin-1/2 system with D3d x Z2C x Z2T symmetry, where C2 = 1 and T2 = -1}}\label{subsubsection:example Spin-1/2 D3d x Z2C x Z2T C2 = 1 T2 = -1}

Consider the spin-$\frac{1}{2}$ symmetry $G = D_{3d}\times \mathbb{Z}_2^{\hat{C}} \times \mathbb{Z}_2^{\hat{T}}$. We have $s(\hat{T}) = s(\hat{C}) = 1$, and the values are $0$ on the remaining group elements; we have $h(\hat{C}) = 1$, and the values are $0$ on the remaining group elements. $D_{3d}$ has a natural action on the three-dimensional space $V$, while $\hat{T}$ and $\hat{C}$ act as reflections on $V$. The two-cochain $\omega$ is determined by three parts: $\omega$ on $D_{3d}$ is determined by the spin-$\frac{1}{2}$ nature, given by the group extension $Pin_{-}(3)$ of $O(3)$; $\omega$ on $\mathbb{Z}_2^{\hat{C}}$ is determined by $\hat{C}^2 = 1$; and $\omega$ on $\mathbb{Z}_2^{\hat{T}}$ is determined by $\hat{T}^2 = -1$. In addition, since these three parts commute with each other, $\omega(\hat{g}_1,\hat{g}_2) = 0$ if $\hat{g}_1$ and $\hat{g}_2$ belong to different components. $D_{3d}$ is generated by two generators: the rotoreflection symmetry $s_6$ along the $z$ axis and the reflection symmetry $m_y$ across the $y$ axis. Therefore, each decomposition component $\rho_i$ can be given by the images of the four generators $s_6$, $m_y$, $C$, and $T$, as well as the imaginary unit $i$.

Using our GAP package, we obtain
\begin{equation}
    \begin{aligned}
            \symalg^{V} & \cong M_2(Cl^{2,3}) \times M_2(Cl^{2,3}) \times M_4(Cl^{2,3}) \\
            & \cong M_8(Cl^{0,1}) \times M_8(Cl^{0,1}) \times M_{16}(Cl^{0,1}) \\
    \end{aligned}
\end{equation}
Its classification can be read from \cref{table:A(Clpq)}, which is $\mathbb{Z}_2\times\mathbb{Z}_2\times\mathbb{Z}_2$. The Dirac Hamiltonian corresponding to the generator of the first $\mathbb{Z}_2$ is given by \cref{eq:Dirac Hamiltonian}, where the values of each term can be found in Appendix \ref{section:gap example}.

%% file: 5.conclusion.tex
\section{Conclusions}\label{section:conclusion}

In this work, we introduce the classifying spaces of free fermionic systems. Starting from the symmetries of free fermionic systems, we generalize these symmetries from groups to algebras. Subsequently, we introduce the classifying spaces for free fermionic systems in arbitrary dimensions endowed with such algebraic symmetries, and demonstrate that they are $G$-homotopy equivalent to the classifying spaces of Dirac Hamiltonians. Based on these classifying spaces, we then address the classification problem. We consider four distinct classification schemes prevalent in the physics literature, along with the canonical isomorphisms among them, noting that each scheme possesses its own specific applications and physical significance. Among these four schemes, the algebraic classification is the most amenable to computation. Consequently, we investigate the $\mathbb{Z}_2$-graded Wedderburn--Artin decomposition of the $\mathbb{Z}_2$-graded algebra $\symalg^V$ obtained from Dirac Hamiltonians, and we have developed a GAP package to facilitate these computations. The GAP package can be used to obtain the classification, the graded Wedderburn--Artin decomposition of $\symalg^V$, the Dirac Hamiltonians, and so on.

%% file: a1.some_linear_algebra.tex
\section{Some Linear Algebras}

\subsection{Conventions for matrix algebras}\label{subsection:Convention for Matrix Algebra}

Let $D$ be an algebra, and let $V$ be a free $D$ module. Notice that if $D$ is a division algebra, then a $D$ module is always free. $V$ can be equipped with a basis induced by $V\cong D^d$. Let $f:V\to W$ be a $D$-module map. $f$ can be expressed as a matrix over $D$ if both $V$ and $W$ have bases. If we insist on using column vectors to represent elements in $V$ and $W$, $f$ can be written as a matrix
\begin{equation}
    \begin{bmatrix}
        f_{11} & f_{12} & \cdots & f_{1d} \\
        f_{21} & f_{22} & \cdots & f_{2d} \\
        \vdots & \vdots & \ddots & \vdots \\
        f_{d'1} & f_{d'2} & \cdots & f_{d'd}
    \end{bmatrix},
\end{equation}
where $d'$ is the dimension of $W$, and $f_{ji}$ is the coefficient of the $j$th basis vector in the image of the $i$th basis vector under the action of $f$. Unfortunately, the matrix does not act from the left, since $f$ should be $D$-linear. If $v\in V$, the action of $f$ on $v$ is given by
\begin{equation}
    \left(\mathbf{v}^{\top} \mathbf{f}^{\top}\right)^{\top} = \left(
    \begin{bmatrix}
        v_{1} & v_{2} & \cdots & v_{d}
    \end{bmatrix}
    \begin{bmatrix}
        f_{11} & f_{21} & \cdots & f_{d'1} \\
        f_{12} & f_{22} & \cdots & f_{d'2} \\
        \vdots & \vdots & \ddots & \vdots \\
        f_{1d} & f_{2d} & \cdots & f_{d'd}
    \end{bmatrix}\right)^{\top}.
\end{equation}
Similarly, the composition $f\circ g$ is given by $\left(\mathbf{g}^{\top}\mathbf{f}^{\top}\right)^{\top}$. At first glance, you may think $\left(\mathbf{v}^{\top} \mathbf{f}^{\top}\right)^{\top} = \mathbf{f}\,\mathbf{v}$ and $\left(\mathbf{g}^{\top}\mathbf{f}^{\top}\right)^{\top}=\mathbf{f}\, \mathbf{g}$. However, $D$ is not commutative. The correct formula is $\left(\mathbf{v}^{\top} \mathbf{f}^{\top}\right)^{\top} = \mathbf{f}\cdot_{D^{\mathrm{op}}}\mathbf{v}$ and $\left(\mathbf{g}^{\top}\mathbf{f}^{\top}\right)^{\top}=\mathbf{f}\cdot_{D^\mathrm{op}}\mathbf{g}$, where $\cdot_{D^\mathrm{op}}$ is the multiplication of matrices over $D^{\mathrm{op}}$.

In the main text and in Appendixes \ref{section:Group Algebra over C and R} and \ref{section:graded group algebra}, we constructed $D$ structures on modules over various group algebras. Each representation map $\rho(g)$ commutes with the $D$ structure; therefore, we can represent $\rho(g)$ by a matrix in $M_d(D^{\mathrm{op}})$.

Conversely, given an algebra homomorphism $A\to M_d(D)$, the canonical $M_d(D)$ module $D^d$ induces a left $A$ module. However, the canonical left $D$ action on $D^d$ does not commute with the $A$ action; the canonical right $D$ action does commute with the $A$ action. Thus, $D^d$ is a $D^{\mathrm{op}}$-type module.

\subsection{Module over a division algebra}\label{subsection:Modules over Division Algebras}

For a $\Gr$-graded algebra, its $\Gr$-graded modules are also free. Notice that for a simple module $M$, we have $Dm_0=M$ for any nonzero $m_0\in M_0$. Then, we have a map $D\to Dm_0$ given by $d\mapsto dm_0$, which is an isomorphism since $D$ has no nontrivial left ideals.

%% file: a2.technical_graded.tex
\section{\texorpdfstring{Some Technical Results on Graded Algebras}{Some Technical Results on Graded Algebras}}\label{section:Some Calculation of Z2N-graded algebras}

\subsection{Clifford algebra}\label{subsection: Clifford Algebra in Appendix}

The basic concepts of Clifford algebras can be found in \cite{atiyahCliffordModules1964,karoubiKtheoryIntroduction2009}. One of the main purposes of this section is to derive the entries in \cref{table:Clifford algebras,table:The isomorphisms between Z2-graded division algebras and Clifford algebras}. Using \cref{table:Clifford algebras} together with \cref{eq:two isomorphism of Cl^{p+1 q+1}}, we will obtain the Wedderburn--Artin decompositions of all Clifford algebras both as algebras and as $\mathbb{Z}_2$-graded algebras.

\subsubsection{Definition}

\begin{defi}
    Let $V$ be a $\mathbb{K}$-vector space and let $Q$ be a quadratic form, i.e., a map $V\to \mathbb{K}$ satisfying $Q(kv)=k^2Q(v)$ for all $k\in \mathbb{K}$ and $v\in V$.

    There is a category whose objects are pairs $(A,j)$, where $A$ is an algebra and $j:V\to A$ is a linear map satisfying $j(v)^2 - Q(v)\cdot 1=0$. The morphism $(A,j)\to(B,k)$ is an algebra homomorphism $\psi: A \to B$ such that $\psi \circ j=k$.

    The initial object of this category is $(Cl_Q,j)$, where $Cl_Q$ is the Clifford algebra, and $j:V\to Cl_Q$ is the canonical inclusion.
\end{defi}
The definition seems to be very abstract, but the examples that we are interested in are very concrete.

\begin{eg}\label{eg:definition of Clifford algebras}
    Let $\mathbb{K}^{p,q}$ be a vector space with a bilinear form $(-,-)$. Let the basis of $\mathbb{K}^{p,q}$ be $\{e_{1} ,...,e_{p}, \tilde{e}_{1} ,...,\tilde{e}_{q}\}$ such that $(e_i,e_i)=-1$, $(\tilde{e}_i,\tilde{e}_i)=1$, and all other pairings are $0$. The bilinear form determines a quadratic form given by $Q(v)=(v,v)$.

    The algebra $Cl^{p,q}$ is the Clifford algebra corresponding to $\mathbb{R}^{p,q}$. Similarly, $\mathbb{C}l^{p,q}$ is the Clifford algebra corresponding to $\mathbb{C}^{p,q}$. $\mathbb{C}l^{p,q}$ is actually isomorphic to $\mathbb{C}l^{0,p+q}$ by taking a new basis $\{ie_i, \tilde{e}_i\}$. We also denote $\mathbb{C}l^{0,p+q}$ by $\mathbb{C}l^{p+q}$.

    To be explicit, $Cl^{p,q}$ is an $\mathbb{R}$ algebra generated by $\{e_{1} ,\ldots,e_{p}, \tilde{e}_{1} ,\ldots,\tilde{e}_{q}\}$ that satisfy the relations $e_{i}^{2} =-1$, $\tilde{e}_{i}^{2} =1$, $e_{i} e_{j} +e_{j} e_{i} =\tilde{e}_{i}\tilde{e}_{j} +\tilde{e}_{j}\tilde{e}_{i} = e_{i}\tilde{e}_{j} +\tilde{e}_{j}e_{i} =0$.
    Similarly, $\mathbb{C}l^{p,q}$ is a $\mathbb{C}$ algebra generated by the same sets of generators and relations.

    $Cl^{p,q}$ and $\mathbb{C}l^{p,q}$ are naturally $\mathbb{Z}_2$-crossed product algebras, i.e., $e_i$ and $\tilde{e}_i$ have grading 1.
\end{eg}

\subsubsection{The Wedderburn--Artin decomposition}

We have the following theorem for Clifford algebras.
\begin{thm}\label{thm:structure theorem of Clifford algebra}
    $Cl^{p,q}$ and $\mathbb{C}l^{p,q}$ are semisimple algebras.
\end{thm}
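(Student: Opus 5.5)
The plan is to prove semisimplicity by exhibiting an explicit Wedderburn--Artin decomposition, using a few low-dimensional base cases together with periodicity isomorphisms. This matches the paper's stated aim of deriving \cref{table:Clifford algebras} and refers forward to \cref{eq:two isomorphism of Cl^{p+1 q+1}}. The base cases are computed by hand:
\begin{itemize}
\item $Cl^{0,0}=\mathbb{R}$;
\item $Cl^{1,0}\cong\mathbb{C}$ via $e_1\mapsto i$;
\item $Cl^{0,1}\cong\mathbb{R}\times\mathbb{R}$ via the idempotents $\frac{1}{2}(1\pm\tilde e_1)$;
\item $Cl^{2,0}\cong\mathbb{H}$ via $e_1,e_2\mapsto i,j$;
\item $Cl^{0,2}\cong Cl^{1,1}\cong M_2(\mathbb{R})$ via Pauli-type real matrices.
\end{itemize}
The complex cases follow from $\mathbb{C}l^{p,q}\cong\mathbb{C}\otimes_{\mathbb{R}}Cl^{p,q}$, or more directly from $\mathbb{C}l^{p,q}\cong\mathbb{C}l^{0,p+q}$.

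Next I establish the standard ungraded recursion isomorphisms
\[
Cl^{p+1,q+1}\cong Cl^{p,q}\otimes M_2(\mathbb{R}),\qquad Cl^{p+2,q}\cong Cl^{q,p}\otimes \mathbb{H},\qquad Cl^{p,q+2}\cong Cl^{q,p}\otimes M_2(\mathbb{R}).
\]
Each is given on generators. For example, in the first one $e_i\mapsto e_i\otimes\sigma$ and $\tilde e_j\mapsto\tilde e_j\otimes\sigma$, where $\sigma$ anticommutes with the two extra generators; the two extra generators are sent to $1\otimes\sigma_x$ and $1\otimes i\sigma_y$, and $\sigma=\sigma_z$. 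To check these are isomorphisms, I verify that the images satisfy the Clifford relations, so the universal property gives a homomorphism. I then compare dimensions, both sides being $2^{p+q+2}$, and argue surjectivity because the image contains generators of both tensor factors. For the complex case, $\mathbb{C}l^{n+2}\cong\mathbb{C}l^n\otimes M_2(\mathbb{C})$.

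Iterating these maps reduces every $Cl^{p,q}$ to a tensor product of one base case with copies of $M_2(\mathbb{R})$ and $\mathbb{H}$. Then $\mathbb{H}\otimes\mathbb{H}\cong M_4(\mathbb{R})$ and $\mathbb{C}\otimes\mathbb{H}\cong M_2(\mathbb{C})$ show the result is a finite product of matrix algebras over $\mathbb{R}$, $\mathbb{C}$, or $\mathbb{H}$. Such an algebra is semisimple, since $M_n(D)$ is a direct sum of its $n$ column modules, each simple because $D$ is a division algebra. This also reproduces the left column of \cref{table:Clifford algebras}, including the eightfold periodicity $Cl^{p+8,q}\cong Cl^{p,q}\otimes M_{16}(\mathbb{R})$.

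An alternative, which I would mention as a backup, is a Maschke-type argument. The finite group generated by $\pm$ monomials in the generators spans $Cl^{p,q}$, and every monomial is invertible. Averaging over this group gives an invariant complement for any submodule, so every module is semisimple.

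The main obstacle is bookkeeping rather than conceptual difficulty: getting the signs right in the recursion isomorphisms, so that the images of the generators genuinely square to $\mp1$ and anticommute. Surjectivity must also be checked, since injectivity then follows from dimension counting. I would first verify the $(1,1)$-shift carefully, because it also underlies the graded version in \cref{thm:structure theorem of Clifford algebra as Z2-graded algebra}.
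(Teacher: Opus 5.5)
Your proposal is correct and takes essentially the same route as the paper. Both arguments use explicit low-dimensional base cases, recursion isomorphisms defined on generators, the identities $\mathbb{H}\otimes\mathbb{H}\cong M_4(\mathbb{R})$ and $\mathbb{H}\otimes\mathbb{C}\cong M_2(\mathbb{C})$, and complexification for $\mathbb{C}l^{p,q}$. The differences are minor: you use the two-parameter recursions $Cl^{p+2,q}\cong Cl^{q,p}\otimes\mathbb{H}$ and $Cl^{p,q+2}\cong Cl^{q,p}\otimes M_2(\mathbb{R})$ where the paper uses their $q=0$ and $p=0$ cases together with the $(1,1)$-shift, and you explicitly justify that a finite product of matrix algebras over division algebras is semisimple, a step the paper leaves implicit.
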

\begin{proof}
    We prove the theorem by constructing algebra isomorphisms from Clifford algebras to the direct sum of matrix algebras over division algebras.

    The following isomorphisms can be easily verified: $Cl^{0,0}\cong \mathbb{R}$, $Cl^{1,0}\cong \mathbb{C}$ with $e_1\mapsto i$, $Cl^{2,0}\cong \mathbb{H}$ with $e_1\mapsto I$ and $e_2\mapsto J$, $Cl^{0,1}\cong \mathbb{R}\times \mathbb{R}$ with $\tilde{e}_1\mapsto (1,-1)$, $Cl^{1,1}\cong M_2(\mathbb{R})$ with $e_1\mapsto i\sigma_y$, $\tilde{e}_1\mapsto \sigma_z$ ($i\sigma_y$ is a real matrix), and $Cl^{0,2}\cong M_2(\mathbb{R})$ with $\tilde{e}_1\mapsto \sigma_x$, $\tilde{e}_2\mapsto \sigma_z$.

    There are isomorphisms
    \begin{equation}\label{eq:isomorphism of Cl^{p+2 0}}
        Cl^{p+2,0}\cong Cl^{0,p}\otimes Cl^{2,0}
    \end{equation}
    given by $e_i\mapsto \tilde{e}_{i}\otimes e_{1}e_{2}$ for $i\leq p$, and $e_i\mapsto 1\otimes e_{i-p}$ for $i=p+1$ and $p+2$. Thus, we have $Cl^{3,0}\cong \mathbb{H}\times \mathbb{H}$.

    There are isomorphisms
    \begin{equation}\label{eq:isomorphism of Cl^{0 p+2}}
        Cl^{0,p+2}\cong Cl^{p,0}\otimes Cl^{0,2}
    \end{equation}
    given by $\tilde{e}_i\mapsto e_{i}\otimes \tilde{e}_{1}\tilde{e}_{2}$ for $i\leq p$, and $\tilde{e}_i\mapsto 1\otimes \tilde{e}_{i-p}$ for $i=p+1$ and $p+2$. Thus, we have $Cl^{0,3}\cong M_2(\mathbb{C})$.

    If we repeatedly use the above two isomorphisms, we obtain all $Cl^{p,0}$ and $Cl^{0,p}$. They are isomorphic to matrix algebras over division algebras, since we have the isomorphisms $\mathbb{H}\otimes \mathbb{C}\cong M_2(\mathbb{C})$ with $I\otimes 1\mapsto -i\sigma_x$ and $J\otimes 1\mapsto -i\sigma_y$, and $\mathbb{H}\otimes\mathbb{H}\cong M_4(\mathbb{R})$ with $I\otimes 1\mapsto i\sigma_y\otimes\sigma_z$, $J\otimes 1\mapsto i\sigma_y\otimes\sigma_x$, $1\otimes I\mapsto\sigma_z\otimes i\sigma_y$, $1\otimes J\mapsto\sigma_x\otimes i\sigma_y$.

    The remaining $Cl^{p,q}$ can be obtained by the isomorphisms
    \begin{equation}\label{eq:isomorphism of Cl^{p+1 q+1}}
        Cl^{p+1,q+1}\cong Cl^{p,q}\otimes Cl^{1,1}
    \end{equation}
    with $e_i\mapsto e_i\otimes e_1\tilde{e}_1$ and $\tilde{e}_j\mapsto \tilde{e}_j\otimes e_1\tilde{e}_1$ for $i\leq p$ and $j\leq q$, $e_{p+1}\mapsto 1\otimes e_1$, $\tilde{e}_{q+1}\mapsto 1\otimes\tilde{e}_1$.

    The result for complex Clifford algebras $\mathbb{C}l^p$ can be obtained by the complexification $\mathbb{C}l^{p,q}\cong Cl^{p,q}\otimes \mathbb{C}$.
\end{proof}

\subsubsection{\texorpdfstring{$\mathbb{Z}_2$-Graded Simple Algebras as Clifford Algebras}{Z2-Graded Division Algebra and Matrix Algebra as Clifford Algebra}}\label{subsubsection:Z2-Graded Division Algebra and Matrix Algebra as Clifford Algebra}

In this section, we show how to get \cref{table:The isomorphisms between Z2-graded division algebras and Clifford algebras}. In \cref{subsubsection:classification of Z2-graded division algebra}, we have classified all $\mathbb{Z}_2$-graded division algebras over $\mathbb{R}$ and $\mathbb{C}$. It is straightforward to prove that the division algebras in the first column are isomorphic to the Clifford algebras in the second column in \cref{table:The isomorphisms between Z2-graded division algebras and Clifford algebras}.

In \cref{thm:structure theorem of Gr-crossed product algebra}, we have shown that the graded matrix components of a $\Gr$-crossed product algebra are given by $M_{\left|\Gr/E\right|d}(D)(\bar{c})$. The group $E$ is a subgroup of $\Gr$ such that $D_c\not = 0$ for $c\in E$, and $\bar{c}$ is determined by a set of representatives of $\Gr/E$.

For a $\mathbb{Z}_2$-graded division algebra $D$ with $D_1\not =0$, we have $E=\mathbb{Z}_2$; therefore, $M_{\left|\Gr/E\right|d}(D)(\bar{c})$ is just $M_d(D)$.

For a $\mathbb{Z}_2$-graded division algebra $D$ with $D_1=0$, we have $E=\{0\}$ and $\bar{c}=(0,\cdots,0,1,\cdots,1)$, where both $0$ and $1$ appear $d$ times. Therefore, $M_{\left|\Gr/E\right|d}(D)(\bar{c})$ is decomposed into
\begin{equation}
    M_{2d}(D)(\bar{c})\cong M_d(\mathbb{K})\otimes M_2(\mathbb{K})(0,1)\otimes D.
\end{equation}
We have $M_2(\mathbb{R})(0,1)\cong Cl^{1,1}$ and $M_2(\mathbb{C})(0,1)\cong \mathbb{C}l^{1,1}$, which are given by $\sigma_x\mapsto \tilde{e}_1$ and $i\sigma_y\mapsto e_1$. For $D=\mathbb{R}$ and $\mathbb{C}$, we are done. For $D=\mathbb{H}$, we use $Cl^{1,1}\otimes \mathbb{H}$, which is given by \cref{eq:two isomorphism of Cl^{p+2 0}}; this finishes the derivation of the entries in \cref{table:The isomorphisms between Z2-graded division algebras and Clifford algebras}.

In some of the literature, only nongraded matrix algebras over Clifford algebras appear, for the reason encoded in the third column of \cref{table:The isomorphisms between Z2-graded division algebras and Clifford algebras}. However, it should be noted that although using Clifford algebras is computationally convenient, it lacks conceptual uniformity. This leads to the necessity of treating $\mathbb{R}$, $\mathbb{C}$, and $\mathbb{H}$ separately from $Cl^{1,1}$, $\mathbb{C}l^{2}$, and $Cl^{4,0}$. Therefore, graded matrix algebras will still be used in this work.

\subsubsection{\texorpdfstring{Clifford algebras as $\mathbb{Z}_2$-graded semisimple algebras}{Clifford algebra as Z2-graded algebra}}

We first prove the following three isomorphisms, which will be used in a similar way as \cref{eq:isomorphism of Cl^{p+2 0},eq:isomorphism of Cl^{0 p+2},eq:isomorphism of Cl^{p+1 q+1}}. We have the following isomorphisms of $\mathbb{Z}_2$-graded algebras,
\begin{equation}\label{eq:two isomorphism of Cl^{p+2 0}}
    Cl^{p+2,0}\cong Cl^{0,p}\otimes Cl^{2,0}\cong Cl^{p-1,1}\otimes \mathbb{H}.
\end{equation}
The first isomorphism is just \cref{eq:isomorphism of Cl^{p+2 0}}, and it is easy to check that \cref{eq:isomorphism of Cl^{p+2 0}} preserves the grading. The second isomorphism is given by $\tilde{e}_p\otimes e_1\mapsto 1\otimes I$, $\tilde{e}_p\otimes e_2\mapsto 1\otimes J$, $\tilde{e}_p\otimes 1 \mapsto \tilde{e}_1\otimes 1$ and $\tilde{e}_i\otimes e_1e_2\mapsto e_i\otimes 1$ for $i\not = p$. We have
\begin{equation}\label{eq:two isomorphism of Cl^{0 p+2}}
    Cl^{0,q+2}\cong Cl^{q,0}\otimes Cl^{0,2}\cong Cl^{1,q-1}\otimes \mathbb{H}.
 \end{equation}
The first isomorphism is just \cref{eq:isomorphism of Cl^{0 p+2}}, and it is easy to check that \cref{eq:isomorphism of Cl^{0 p+2}} preserves the grading. The second isomorphism is given by $e_q\otimes \tilde{e}_1\mapsto 1\otimes I$, $e_q\otimes \tilde{e}_2\mapsto 1\otimes J$, $e_q\otimes 1 \mapsto e_1\otimes 1$ and $e_i\otimes \tilde{e}_1\tilde{e}_2\mapsto \tilde{e}_i\otimes 1$ for $i\not = q$.

If $p>0$ or $q>0$, we have
\begin{equation}\label{eq:two isomorphism of Cl^{p+1 q+1}}
    Cl^{p+1,q+1}\cong Cl^{p,q}\otimes Cl^{1,1}\cong Cl^{p,q}\otimes M_2(\mathbb{R})\cong M_2(Cl^{p,q}).
\end{equation}
The first isomorphism is just \cref{eq:isomorphism of Cl^{p+1 q+1}}, and it is easy to check that \cref{eq:isomorphism of Cl^{p+1 q+1}} preserves the grading. For $p>0$, the second isomorphism is given by $e_p\otimes \tilde{e}_1\mapsto 1\otimes i\sigma_y$, $e_p\otimes e_1\mapsto 1\otimes \sigma_x$, $e_p\otimes 1\mapsto e_p\otimes 1$, $\tilde{e}_j\otimes e_1\tilde{e}_1\mapsto \tilde{e}_j\otimes 1$ and $e_i\otimes e_1\tilde{e}_1\mapsto e_i\otimes 1$ for $i \not = p$. For $q>0$, the second isomorphism is given by $\tilde{e}_q\otimes e_1\mapsto 1\otimes i\sigma_y$, $\tilde{e}_q\otimes \tilde{e}_1\mapsto 1\otimes \sigma_x$, $\tilde{e}_q\otimes 1\mapsto \tilde{e}_q\otimes 1$, $e_i\otimes e_1\tilde{e}_1\mapsto e_i\otimes 1$ and $\tilde{e}_j\otimes e_1\tilde{e}_1\mapsto \tilde{e}_j\otimes 1$ for $j \not = q$.

Using a method similar to \cref{thm:structure theorem of semisimple graded algebra}, we prove Clifford algebras are semisimple $\mathbb{Z}_2$-graded algebras, which also finishes the derivation of the entries in \cref{table:Clifford algebras}.
\begin{thm}\label{thm:structure theorem of Clifford algebra as Z2-graded algebra}
    $Cl^{p,q}$ and $\mathbb{C}l^{p,q}$ are semisimple $\mathbb{Z}_2$-graded algebras.
\end{thm}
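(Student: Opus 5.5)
We prove the statement by exhibiting, for every pair $(p,q)$, an explicit $\mathbb{Z}_2$-graded isomorphism from $Cl^{p,q}$ (resp. $\mathbb{C}l^{p,q}$) onto a finite product of $\mathbb{Z}_2$-graded matrix algebras $M_d(D)(\bar{c})$ over $\mathbb{Z}_2$-graded division algebras. This mirrors the proof of \cref{thm:structure theorem of Clifford algebra}. Such a product is graded semisimple: $M_d(D)(\bar{c})$ is the direct sum of its graded columns, and each column is a simple graded module, since every nonzero homogeneous element of $D$ is invertible (see \cref{subsection:Modules over Division Algebras}). So it suffices to produce the decompositions, and we do this by induction using the graded isomorphisms \cref{eq:two isomorphism of Cl^{p+2 0}}, \cref{eq:two isomorphism of Cl^{0 p+2}} and \cref{eq:two isomorphism of Cl^{p+1 q+1}}.

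\textbf{Base cases.} By definition, $Cl^{0,0}=\mathbb{R}$, and $Cl^{1,0}$, $Cl^{0,1}$, $Cl^{2,0}$, $Cl^{0,2}$, $Cl^{3,0}$, $Cl^{0,3}$ are themselves $\mathbb{Z}_2$-graded division algebras by \cref{table:The isomorphisms between Z2-graded division algebras and Clifford algebras}. In each case, every nonzero homogeneous element is invertible, which can be checked directly: for instance, a nonzero odd element of $Cl^{0,2}$ is $a\tilde{e}_1+b\tilde{e}_2$, and it squares to $a^2+b^2\neq 0$. Further, $Cl^{1,1}\cong M_2(\mathbb{R})(0,1)$ via $\sigma_x\mapsto\tilde{e}_1$ and $i\sigma_y\mapsto e_1$, as noted in \cref{subsubsection:Z2-Graded Division Algebra and Matrix Algebra as Clifford Algebra}.

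\textbf{Reduction.} \Cref{eq:two isomorphism of Cl^{p+1 q+1}} gives $Cl^{p+1,q+1}\cong M_2(Cl^{p,q})$ as graded algebras whenever $p+q>0$, with the trivial shift, because $Cl^{p,q}$ then has an invertible odd element that absorbs the grading of $M_2(\mathbb{R})(0,1)$. This reduces every case to $Cl^{n,0}$ or $Cl^{0,n}$, and to $Cl^{1,1}$ itself. For $Cl^{n,0}$ with $n\ge 4$, \cref{eq:two isomorphism of Cl^{p+2 0}} gives $Cl^{n,0}\cong Cl^{n-3,1}\otimes\mathbb{H}$. Here $\mathbb{H}$ is concentrated in degree $0$, so the graded and ordinary tensor products agree. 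Next, $Cl^{n-3,1}\cong M_2(Cl^{n-4,0})$ when $n>4$, and for $n=4$ it equals $M_2(\mathbb{R})(0,1)$. The case $Cl^{0,n}$ is handled symmetrically via \cref{eq:two isomorphism of Cl^{0 p+2}}. Iterating lowers $n$ by $4$ each time, so every $Cl^{n,0}$ has the form $M_{2^k}(B)\otimes\mathbb{H}^{\otimes j}$, where $B$ is a base case. Ungraded factors like $\mathbb{H}\otimes\mathbb{H}\cong M_4(\mathbb{R})$ and $\mathbb{H}\otimes\mathbb{C}\cong M_2(\mathbb{C})$, sitting in degree $0$, are handled by the isomorphisms already used in \cref{thm:structure theorem of Clifford algebra}.

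\textbf{Main obstacle.} The hard part is identifying $D\otimes\mathbb{H}$ for the base graded division algebras $D$ as graded matrix algebras over graded division algebras. One also has to check that the resulting pattern reproduces \cref{table:Clifford algebras}, including the shifts $\bar{c}$ at $n\equiv 0,4 \pmod 8$. My approach: $D\otimes\mathbb{H}$ is a $\mathbb{Z}_2$-crossed product whose even part is $D_0\otimes\mathbb{H}\in\{\mathbb{H},\ M_2(\mathbb{C}),\ M_4(\mathbb{R})\}$, so \cref{thm:structure theorem of Gr-crossed product algebra} applies. The division algebra is then read off from the data $(D_0',f,a)$ of \cref{subsubsection:classification of Z2-graded division algebra}: find an invertible odd $\theta$ commuting with a simple even module and compute $\theta^2$. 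Failing that, the explicit Pauli-matrix maps in \cref{eq:two isomorphism of Cl^{p+2 0}} can be used case by case. The complex case follows by complexification: $\mathbb{C}l^{p,q}\cong Cl^{p,q}\otimes\mathbb{C}$ with $\mathbb{C}$ even, $Cl^{0,1}\otimes\mathbb{C}=\mathbb{C}l^1$, and $M_2(\mathbb{C})(0,1)=\mathbb{C}l^2$, which gives the two-periodic pattern.
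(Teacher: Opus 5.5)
Your proposal is correct and follows essentially the same route as the paper. Both arguments reduce via \cref{eq:two isomorphism of Cl^{p+1 q+1}} to $Cl^{n,0}$, $Cl^{0,n}$ and $Cl^{1,1}$, split off $\mathbb{H}$ factors with \cref{eq:two isomorphism of Cl^{p+2 0}} and \cref{eq:two isomorphism of Cl^{0 p+2}}, and finish with the base cases of \cref{table:The isomorphisms between Z2-graded division algebras and Clifford algebras} and $\mathbb{H}\otimes\mathbb{H}\cong M_4(\mathbb{R})$. The paper settles your ``main obstacle'' by reading those isomorphisms backwards (e.g.\ $Cl^{1,0}\otimes\mathbb{H}\cong Cl^{0,3}$ from \cref{eq:two isomorphism of Cl^{0 p+2}} with $q=1$); if you instead invoke \cref{thm:structure theorem of Gr-crossed product algebra}, its semisimplicity hypothesis for $D\otimes\mathbb{H}$ must be supplied, and it follows from \cref{prop:equivalence of categories between right A-modules and A_0-modules} because the even part $D_0\otimes\mathbb{H}$ is semisimple.
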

\begin{proof}
    We use \cref{eq:two isomorphism of Cl^{p+2 0},eq:two isomorphism of Cl^{0 p+2},eq:two isomorphism of Cl^{p+1 q+1}} repeatedly to get the result. We first reduce $Cl^{p,q}$ to a matrix algebra over $Cl^{p-q,0}$ for $p>q$, a matrix algebra over $Cl^{0,q-p}$ for $p<q$ and a matrix algebra over $Cl^{1,1}$ for $p=q>0$, using \cref{eq:two isomorphism of Cl^{p+1 q+1}}.

    For $p=0\dots 4$, $Cl^{p,0}$ already has the form of graded matrix algebras over graded division algebras using \cref{table:The isomorphisms between Z2-graded division algebras and Clifford algebras}. For $p=5\dots 8$, we apply \cref{eq:two isomorphism of Cl^{p+2 0}}. We get $Cl^{2,1}\otimes \mathbb{H}$ for $p=5$, $Cl^{0,2}\otimes \mathbb{H}\otimes \mathbb{H}$ for $p=6$, $Cl^{1,2}\otimes \mathbb{H}\otimes \mathbb{H}$ for $p=7$, and $Cl^{2,2}\otimes \mathbb{H}\otimes \mathbb{H}$ for $p=8$. Then we apply \cref{eq:two isomorphism of Cl^{p+1 q+1}}, up to tensoring with a matrix algebra, $Cl^{2,1}$ becomes $Cl^{1,0}$ for $p=5$, $Cl^{1,2}$ becomes $Cl^{0,1}$ for $p=7$, and $Cl^{2,2}$ becomes $Cl^{1,1}$ for $p=8$. For $p=5$, we apply \cref{eq:two isomorphism of Cl^{0 p+2}}, $Cl^{1,0}\otimes \mathbb{H}$ becomes $Cl^{0,3}$. Finally, we use $\mathbb{H}\otimes\mathbb{H}\cong M_4(\mathbb{R})$ which is proved in \cref{thm:structure theorem of Clifford algebra}. The remaining cases are similar. Notice that we have $Cl^{4,0}\cong Cl^{0,4}$, which is also proved by \cref{eq:two isomorphism of Cl^{p+2 0},eq:two isomorphism of Cl^{0 p+2}}.
\end{proof}

\subsubsection{The simple modules over Clifford algebras}\label{subsubsection:modules of Clifford algebras}

It is straightforward to get all simple modules and simple graded modules over a Clifford algebra using \cref{thm:structure theorem of Clifford algebra} and \cref{thm:structure theorem of Clifford algebra as Z2-graded algebra}, since they are simple modules over the direct sums of matrix algebras and graded matrix algebras. Notice that if $D\cong\mathbb{R}$, $\mathbb{C}$, or $\mathbb{H}$, we have two inequivalent $\mathbb{Z}_2$-graded simple $D$-modules $D$ and $D(1)$. We prove two useful results in what follows.

Let us prove the first one. When $p-q=3\, \mathrm{mod}\, 4$, $Cl^{p,q}$ has two matrix components. The simple $Cl^{p,q}$ modules can be constructed from the unique $Cl^{p-1,q}$ module (or $Cl^{p,q-1}$ module). We denote by $\varepsilon$ the element $e_1\dots e_{p-1}\tilde{e}_1\dots\tilde{e}_q$ (or $e_1\dots e_{p}\tilde{e}_1\dots\tilde{e}_{q-1}$). Then $e_p=\pm\varepsilon$ (or $\tilde{e}_q=\pm\varepsilon$) defines two inequivalent simple $Cl^{p,q}$ modules. To prove that the two modules are simple, we only need to examine their dimensions. To prove that the two modules are inequivalent, notice that
if there is an isomorphism $f$ between them, we have $f(\varepsilon x)=-\varepsilon f(x)$, but $\varepsilon$ is a product of elements of $Cl^{p-1,q}$ (or $Cl^{p,q-1}$); therefore, we should have $f(\varepsilon x)=\varepsilon f(x)$, a contradiction. There is a similar result for $\mathbb{C}l^{p,q}$ when $p-q=1\, \mathrm{mod}\, 2$.

The second result is $(Cl^{p,q})_0\cong Cl^{p-1,q}\cong Cl^{q-1,p}$ and $(\mathbb{C}l^{p})_0 \cong \mathbb{C}l^{p-1}$. For a generator $e\in Cl^{p-1,q}$, the isomorphism from $Cl^{p-1,q}$ to $(Cl^{p,q})_0$ is given by $e\mapsto ee_p$. The other two isomorphisms can be proved similarly. Combining with \cref{prop:equivalence of categories between right A-modules and A_0-modules}, we get another way to construct all $\mathbb{Z}_2$-graded simple modules over Clifford algebras.

\subsection{\texorpdfstring{$\mathbb{Z}_2^N$-graded algebra}{Z2N-graded algebra}}\label{subsection:Z2N-Graded Algebra in Appendix}

\subsubsection{Grading forgetting}\label{subsubsection:grading forgetting}

For a $\Gr$-graded algebra $A$, a group homomorphism $f:\Gr\to\Gr'$ induces a $\Gr'$-grading structure on $A$. When $\Gr = \mathbb{Z}_2^N$ and $\Gr' = \mathbb{Z}_2^{N'}$, the problem reduces to the case when $f$ is a surjective map from $\mathbb{Z}_2^{N}$ to $\mathbb{Z}_2^{N-1}$. This can be seen by diagonalizing the linear map $f$ over the field $\mathbb{Z}_2$.

In this section, we give an algorithm to calculate the Wedderburn--Artin decomposition of a $\mathbb{Z}_2^{N-1}$-graded algebra that is induced from a semisimple $\mathbb{Z}_2^N$-crossed product algebra by $f$. In what follows, without loss of generality, we focus on a single $\mathbb{Z}_2^N$-graded Wedderburn--Artin component $M_{\left|\Gr/E\right|d}(D)(\bar{c})$.

We need to forget the grading $\mathrm{ker}(f)$ of the graded matrix algebra $M_{\left|\Gr/E\right|d}(D)(\bar{c})$. Recall that we have an isomorphism of $\mathbb{Z}_2^N$-graded algebras,
\begin{equation}\label{eq:decomposition of graded matrix algebra}
    \begin{aligned}
        M_{\left|\Gr/E\right|d}(D)(\bar{c}) & \cong M_{\left|\Gr/E\right|d}(\mathbb{K})(\bar{c})\otimes D \\
        & \cong M_d(\mathbb{K})\otimes \left(\bigotimes_{c_i}M_2(\mathbb{K})(0,c_i)\right) \otimes D.
    \end{aligned}
\end{equation}
If $\mathrm{ker}(f)$ is contained in $E$, the problem reduces to forgetting the grading of $D$, and the pure matrix algebra part is easy to deal with. If $\mathrm{ker}(f)$ is not contained in $E$, we can find a new decomposition $\Gr \cong E\oplus \Gr/E$ such that $\mathrm{ker}(f)$ is contained in $\Gr/E$. The decomposition is determined by taking another set of $c_i$. Then the problem reduces to forgetting the grading of the pure matrix algebra part.

It remains to figure out how to forget the grading of $D$. We denote the generators of $D$ by $\theta_i$, where $\theta_i$ has grading $z_i$. We need to consider the following three cases according to the structure of $D_0$.
\paragraph{1. $D_0=\mathbb{R}$} \,

If $\theta_1^2=1$ and $\theta_1$ commutes with all other $\theta_i$, $D\cong D'\oplus D'$, where $D'$ is $D$ without the generator $\theta_1$.

If $\theta_1^2=1$, $\theta_1$ anticommutes with $\theta_2$ and commutes with all the other $\theta_i$. Notice that the general case in which $\theta_1$ anticommutes with several $\theta_i$ can be reduced to this case by changing the basis of $\mathbb{Z}_2^N$: we exchange $\theta_2$ with some $\theta_i$ that anticommutes with $\theta_1$, and multiply $\theta_2$ by the other $\theta_i$ that anticommute with $\theta_1$. This technique will be used again, and we will not mention it again. $\theta_1\theta_2$ and $\theta_2$ form an algebra $M_2(\mathbb{K})(0,z_2)$. If $\theta_2$ anticommutes with some $\theta_i$, we use $\theta_1\theta_i$ to replace $\theta_i$. Now $M_2(\mathbb{K})(0,z_2)$ is decoupled from the other $\theta_i$. We get $D\cong M_2(\mathbb{K})(0,z_2)\otimes D''$, where $D''$ is $D$ without the generators $\theta_1$ and $\theta_2$.

If $\theta_1^2=-1$, $D$ becomes a $\mathbb{Z}_2^{N-1}$-graded division algebra with $D_0=\mathbb{C}$. If $\theta_i$ commutes with $\theta_1$, then $D_i\cong \mathbb{C}l^{1}$. If $\theta_i$ anticommutes with $\theta_1$, then $D_i\cong Cl^{0,2}$ for $\theta_i^2=1$, and $D_i\cong Cl^{2,0}$ for $\theta_i^2=-1$.

\paragraph{2. $D_0=\mathbb{H}$} \,

Notice that $\theta_i$ commutes with the elements of $D_0$; therefore, we can decompose $D$ into $D_0\otimes D_{\mathbb{R}}$, where $D_{\mathbb{R}}$ has $\left(D_{\mathbb{R}}\right)_0=\mathbb{R}$. The structure of $D_{\mathbb{R}}$ after forgetting the grading is already obtained above; therefore, we only need to calculate the structure of $\mathbb{H}\otimes D_{\mathbb{R}}$.

The only nontrivial case is $\theta_1^2=-1$; hence, we need to find out the structure of $\mathbb{H}\otimes D$ for a $\mathbb{Z}_2^{N-1}$-graded division algebra $D$ with $D_0\cong \mathbb{C}$. We use a new set of generators $I\otimes i$, $J\otimes i$, $K\otimes \theta_i$ for $\theta_i$ anticommuting with $i$, and $1\otimes \theta_i$ for $\theta_i$ commuting with $i$. $I\otimes i$ and $J\otimes i$ form the matrix algebra $M_2(\mathbb{R})$, and all generators with nontrivial grading commute with them. Thus, we get $\mathbb{H}\otimes D\cong M_2(\mathbb{R})\otimes D'$, where $(D')_i\cong Cl^{2,0}$ if $D_i\cong Cl^{0,2}$, $(D')_i\cong Cl^{0,2}$ if $D_i\cong Cl^{2,0}$, and $(D')_i\cong \mathbb{C}l^{1}$ if $D_i\cong \mathbb{C}l^{1}$. The commutation relations of $\theta_i$ are not changed.

\paragraph{3. $D_0=\mathbb{C}$ (including $\mathbb{K}=\mathbb{C}$)} \,

The case $\mathbb{K}=\mathbb{C}$ is a special case of $\mathbb{K}=\mathbb{R}$, $D_0=\mathbb{C}$, and arises when all $\theta_i$ commute with the elements of $D_0$.

If $\theta_1$ commutes with $i\in D_0$, and $\theta_1$ commutes with all other $\theta_i$, then $D\cong D'\oplus D'$, where $D'$ is $D$ without the generator $\theta_1$.

If $\theta_1$ commutes with $i\in D_0$, and $\theta_1$ anticommutes with $\theta_2$, then if $\theta_2$ commutes with $i\in D_0$, $\theta_2$ and $\theta_1\theta_2$ generate $M_2(\mathbb{R})(0,z_2)$. If there exists a $\theta_i$ that anticommutes with $\theta_2$, we choose the new generator to be $\theta_1\theta_i$. Then $D\cong M_2(\mathbb{R})(0,z_2)\otimes D''$, where $D''$ is $D$ without the generators $\theta_1$ and $\theta_2$. If $\theta_2$ anticommutes with $i\in D_0$, then $\theta_2$ and $\theta_1\theta_2$ generate $M_2(\mathbb{R})$, but we need to use $i\theta_1$ to replace $i$. If there exists a $\theta_i$ that anticommutes with $\theta_2$, we choose the new generator to be $\theta_1\theta_i$. Then $D\cong M_2(\mathbb{R})(0,z_2)\otimes D''$, where $D''$ is $D$ without the generators $\theta_1$ and $\theta_2$.

If $\theta_1$ anticommutes with $i\in D_0$, and $\theta_1^2=1$, then $\theta_1$ and $i$ generate $M_2(\mathbb{R})$. We show by redefinition of $\theta_i$ that $M_2(\mathbb{R})$ commutes with all the other $\theta_i$; therefore, $D\cong M_2(\mathbb{R})\otimes D'$, where $D'$ is a $\mathbb{Z}_2^{N-1}$-graded division algebra with $D_0\cong \mathbb{R}$. For $D_i\cong \mathbb{C}l^{1}$, if $\theta_i$ commutes with $\theta_1$, then $D'_i\cong Cl^{0,1}$. If $\theta_i$ anticommutes with $\theta_1$, we choose a new generator to be $i\theta_i$; then $D'_i\cong Cl^{1,0}$. For $D_i\cong Cl^{0,2}$, if $\theta_i$ commutes with $\theta_1$, we choose a new generator to be $\theta_1\theta_i$; then $D'_i\cong Cl^{0,1}$. If $\theta_i$ anticommutes with $\theta_1$, we choose a new generator to be $i\theta_1\theta_i$; then $D'_i\cong Cl^{0,1}$. For $D_i\cong Cl^{2,0}$, if $\theta_i$ commutes with $\theta_1$, we choose a new generator to be $\theta_1\theta_i$; then $D'_i\cong Cl^{1,0}$. If $\theta_i$ anticommutes with $\theta_1$, we choose a new generator to be $i\theta_1\theta_i$; then $D'_i\cong Cl^{1,0}$.

If $\theta_1$ anticommutes with $i\in D_0$, and $\theta_1^2=-1$, then $\theta_1$ and $i$ generate $\mathbb{H}$. We show by redefinition of $\theta_i$ that $\mathbb{H}$ commutes with all the other $\theta_i$; therefore, $D$ becomes a $\mathbb{Z}_2^{N-1}$-graded division algebra with $D_0\cong \mathbb{H}$, which we denote by $D'$. For $D_i\cong \mathbb{C}l^{1}$, if $\theta_i$ commutes with $\theta_1$, then $D'_i\cong Cl^{3,0}$. If $\theta_i$ anticommutes with $\theta_1$, we choose a new generator to be $i\theta_i$; then $D'_i\cong Cl^{0,3}$. For $D_i\cong Cl^{0,2}$, if $\theta_i$ commutes with $\theta_1$, we choose a new generator to be $\theta_1\theta_i$; then $D'_i\cong Cl^{3,0}$. If $\theta_i$ anticommutes with $\theta_1$, we choose a new generator to be $i\theta_1\theta_i$; then $D'_i\cong Cl^{3,0}$. For $D_i\cong Cl^{2,0}$, if $\theta_i$ commutes with $\theta_1$, we choose a new generator to be $\theta_1\theta_i$; then $D'_i\cong Cl^{0,3}$. If $\theta_i$ anticommutes with $\theta_1$, we choose a new generator to be $i\theta_1\theta_i$; then $D'_i\cong Cl^{0,3}$.

\subsubsection{\texorpdfstring{Graded tensor product of $\mathbb{Z}_2^N$-graded algebras}{Graded tensor product of Z2N-graded algebras}}\label{subsubsection:Graded Tensor Product of Z2N-Graded Algebras}

Similar to the previous section, this section only considers graded matrix algebras over graded division algebras. First, we define an extended graded tensor product $\tilde{\otimes}$. Given a $\mathbb{Z}_2^N$-graded space $V$ and a $\mathbb{Z}_2^{N'}$-graded space $V'$, $V \tilde{\otimes} V'$ is a $\mathbb{Z}_2^{N+N'}$-graded space. Similar to the braiding structure in \cref{subsubsection:Braiding in Graded Vector Space}, a map $b:\mathbb{Z}_2^N\times \mathbb{Z}_2^{N'}\to \mathbb{Z}_2$ is used to describe the sign acquired by exchanging the two spaces. The graded tensor product defined in \cref{subsubsection:graded vector space} requires the further step of forgetting the off-diagonal grading, which can be accomplished using the method described in Appendix \ref{subsubsection:grading forgetting}.

Given a $\mathbb{Z}_2^N$-graded algebra $A$ and a $\mathbb{Z}_2^{N'}$-graded algebra $A'$, $A\tilde{\otimes}A'$ is a $\mathbb{Z}_2^{N+N'}$-graded algebra. Suppose the graded matrix algebra $A$ is isomorphic to $\mathrm{End}_{D}(V)$ and $A'$ is isomorphic to $\mathrm{End}_{D'}(V')$, where $V$ and $V'$ are finitely generated $D$-graded and $D'$-graded modules, respectively. We have the following formula:
\begin{equation}
    \mathrm{End}_{D}(V) \tilde{\otimes} \mathrm{End}_{D'}(V') \cong \mathrm{End}_{D \tilde{\otimes} D'}(V \tilde{\otimes} V').
\end{equation}
Therefore, it suffices to study $D \tilde{\otimes} D'$.

Suppose $D$ is generated by $D_0$ and the $\theta_i$, and $D'$ by $D'_0$ and the $\theta'_i$; then $D \tilde{\otimes} D'$ is generated by $D_0\otimes D'_0$, the $\theta_i$, and the $\theta'_i$. The algebraic relations between the $\theta_i$ and the $\theta'_i$ can be easily obtained from $b$. $D_0\otimes D'_0$ is the tensor product of nongraded division algebras, and its result is also easy to obtain. When $D_0$ and $D_0'$ are each $\mathbb{R}$ or $\mathbb{H}$, $D_0\otimes D'_0$ commutes with the generators $\theta_i$ and $\theta'_i$. When exactly one of $D_0$ and $D_0'$ is $\mathbb{R}$ and the other is $\mathbb{C}$, the result is also obvious. When exactly one of $D_0$ and $D_0'$ is $\mathbb{H}$ and the other is $\mathbb{C}$, their tensor product is $M_2(\mathbb{C})$, and $I$, $J$, and $K$ are mapped to $-i\sigma_x$, $-i\sigma_y$, and $-i\sigma_z$ respectively, where the $\sigma$'s are the $2\times 2$ Pauli matrices. For an antilinear $\theta'_i$, it commutes with $-i\sigma_y$ and anticommutes with $-i\sigma_x$ and $-i\sigma_z$; therefore, by choosing $-i\sigma_y\theta'_i$ as a new generator, its conjugation action on $M_2(\mathbb{C})$ is the same as complex conjugation.

Finally, if both $D_0$ and $D_0'$ are $\mathbb{C}$, their tensor product is $\mathbb{C}\times \mathbb{C}$, where $z\otimes 1$ is mapped to $(z, \bar{z})$ and $1\otimes z$ is mapped to $(z, z)$. If neither $D$ nor $D'$ has antilinear generators, their tensor product is the direct sum of two graded division algebras, each generated by $\mathbb{C}$, the $\theta_i$, and the $\theta'_i$. If there are antilinear generators in both $D$ and $D'$, and we assume the generator $\theta_1$ of $D$ is antilinear while all other $\theta_i$ are linear, then in $D \tilde{\otimes} D'$, the elements $\theta_1 \tilde{\otimes} 1$, $i\theta_1 \tilde{\otimes} 1$, and $1 \tilde{\otimes} i$ form $M_2(\mathbb{C})(0,1)$. Then, it is sufficient to replace the $\theta_i \tilde{\otimes} 1$ or $1 \tilde{\otimes} \theta'_i$ that anticommute with $\theta_1 \tilde{\otimes} 1$ by $i\theta_i \tilde{\otimes} 1$ or $i \tilde{\otimes} \theta'_i$.

For the super tensor product $\hat{\otimes}$ of real Clifford algebras and the ten $\mathbb{Z}_2$-graded division algebras, one can obtain the Clifford algebras by utilizing $Cl^{p,q}\hat{\otimes}\allowbreak Cl^{p',q'}\cong Cl^{p+p',q+q'}$, $Cl^{p,q}\hat{\otimes}\mathbb{C}l^{p'}\cong \mathbb{C}l^{p+p'+q}$, $Cl^{p,q}\hat{\otimes}\mathbb{H}\cong Cl^{p+3,p-1}$ and $Cl^{p,q}\hat{\otimes}\mathbb{H}\cong Cl^{p-1,p+3}$, and then perform the Wedderburn--Artin decomposition on these Clifford algebras.

%% file: a3.group_algebra.tex
\section{\texorpdfstring{Group Algebras over $\mathbb{C}$ and $\mathbb{R}$}{Group Algebras over C and R}}\label{section:Group Algebra over C and R}

We summarize several results in group algebra theory. Starting with the group algebra $\mathbb{C}[G]$ over $\mathbb{C}$, we note that the theories of the group algebra over $\mathbb{R}$, the twisted group algebra, and the $\mathbb{Z}_2$-graded group algebra are all based on the theory of $\mathbb{C}[G]$. The basic knowledge of group representations and group algebras over $\mathbb{R}$ and $\mathbb{C}$ can be found in \cite{serreLinearRepresentationsFinite1977}.

The main purpose of this appendix is to introduce \cref{eq:Maschke's theorem complex,Maschke's theorem real,Maschke's theorem twisted}, and their inverses \cref{inverse of Phi complex,eq:inverse of Phi real,eq:inverse of Phi twisted real,eq:inverse of Phi twisted complex}. Most other results follow as direct consequences of these isomorphisms.

\subsection{\texorpdfstring{Group algebra $\mathbb{C}[G]$}{Group algebra C[G]}}\label{subsection:Group Algebra over C}

\subsubsection{The Wedderburn--Artin Decomposition}\label{subsubsection:Decomposition into Matrix Algebras}

The well-known Maschke's theorem shows that $\mathbb{C}[G]$ is a semisimple $\mathbb{C}$ algebra, which means there is an isomorphism of algebras $\Phi=(\rho_i)_{i\in \mathrm{Irr}(G)}$
\begin{equation}\label{eq:Maschke's theorem complex}
    \mathbb{C}[G] \cong \prod_{i \in \mathrm{Irr}(G)} M_{d_i}(\mathbb{C}),
\end{equation}
where $\mathrm{Irr}(G)$ is the set of isomorphism classes of simple modules, and each component of $\Phi$ is given by the irreducible representation $\rho_i:\mathbb{C}[G]\to M_{d_i}(\mathbb{C})$. This is the Wedderburn--Artin decomposition of $\mathbb{C}[G]$.

\subsubsection{\texorpdfstring{The inverse of $\Phi$}{The inverse of Phi}}\label{subsubsection:Inverse of Phi complex}

The inverse of $\Phi$ can be constructed explicitly; let $f$ be an element of $\prod_i M_{d_i}(\mathbb{C})$ that is nonzero only in the component $M_{d_i}(\mathbb{C})$. The inverse $\Phi^{-1}(f)$ is given by
\begin{equation}\label{inverse of Phi complex}
    \frac{d_i}{|G|} \sum_{g \in G} \mathrm{Tr}\left( \rho_i(g^{-1})f \right) \, g.
\end{equation}
To prove this is the inverse, we assume $\Phi^{-1}(f) = \sum_{g\in G} a_g\, g \in \mathbb{C}[G]$. Notice that we have
\begin{equation}
    \begin{aligned}
        a_g & = \frac{1}{|G|}\, \mathrm{Tr}\left(g^{-1}\Phi^{-1}(f)\cdot -\right) \\
        & = \frac{d_i}{|G|}\, \mathrm{Tr}\left(\rho_i(g^{-1})f\right)
    \end{aligned}
\end{equation}
The trace in the first line is calculated as a linear map on $\mathbb{C}[G]$, and the trace in the second line is calculated as a linear map on $M_{d_i}(\mathbb{C})$.

Several familiar results follow straightforwardly from the explicit form of $\Phi^{-1}$. For example, if we take $f = e^i_{ba}\in M_{d_i}(\mathbb{C})$ to be a matrix entry generator, then the equation $\Phi\left(\Phi^{-1}\left(e^i_{ba}\right)\right) = e^i_{ba}$ becomes the great orthogonality theorem
\begin{equation}\label{inverse of Phi on matrix entries complex}
    \frac{d_i}{|G|} \sum_{g \in G} \rho_{i}(g^{-1})_{ab} \, \rho_j(g)_{cd} = \delta_{ij}\delta_{ad}\delta_{bc}.
\end{equation}
If we take $f = \mathrm{id}\in M_{d_i}(\mathbb{C})$, then the equation $\mathrm{Tr}\left( \Phi\left(\Phi^{-1}\left(f\right)\right) \right) = \mathrm{Tr}(f)$ becomes $(\chi_i,\chi_j) = 1/|G| \, \delta_{ij}$, where
\begin{equation}\label{eq:inner product of class function}
    (\chi,\chi') = \frac{1}{|G|} \sum_{g \in G} \chi(g^{-1}) \, \chi'(g),
\end{equation}
i.e., the characters $\chi_i$ form an orthonormal basis of the vector space of class functions.

\subsubsection{Constructing isomorphisms between two modules}\label{subsubsection:Constructing isomorphisms}

Let $V$ and $W$ be two isomorphic simple $\mathbb{C}[G]$ modules. We construct a $\mathbb{C}[G]$ isomorphism between them\footnote{The most efficient way to construct a $\mathbb{C}[G]$-isomorphism is probably to start with an arbitrary linear isomorphism and average it over $G$. The Schur lemma ensures that the result of the averaging is either 0 or a $\mathbb{C}[G]$-isomorphism. The result is a $\mathbb{C}[G]$-isomorphism for almost all choices of linear isomorphisms. However, we want to use a method that doesn't involve a random choice.}.

Notice that there is an isomorphism $\psi:\mathrm{End}(V)\xrightarrow{\Phi^{-1}} \mathbb{C}[G] \xrightarrow{\Phi} \mathrm{End}(W)$. The first map does not preserve the unit, but the composition is an algebra isomorphism. To be explicit, let $f\in \mathrm{End}(V)$. The image $\psi(f)$ is
\begin{equation}\label{isomorphism of EndV and EndW}
    \frac{d}{|G|} \sum_{g \in G} \mathrm{Tr}\left(\rho_V(g^{-1})f\right) \rho_W(g)\in \mathrm{End}(W),
\end{equation}
where $d$ is the dimension of $V$. According to the Skolem--Noether theorem, the isomorphism must have the form $\alpha f \alpha^{-1}$ for some linear isomorphism $\alpha$, which is unique up to a scalar. The isomorphism $\alpha$ is $G$-equivariant. Notice that if we set $f=\rho_V(h)$, \cref{isomorphism of EndV and EndW} becomes $\rho_W(h)$, i.e., $\alpha \rho_V(h) \alpha^{-1}=\rho_W(h)$. $\alpha$ seems to be a suitable choice of the $\mathbb{C}[G]$ isomorphism; it only remains to construct $\alpha$ explicitly.

Let $r\in \mathrm{End}(V)$ be a diagonalizable rank-1 linear map, and let $v$ be an eigenvector of $r$. The image $\psi(r)$ also has rank 1, and $\alpha(v)$ must be an eigenvector of $\psi(r)$. The choice of $\alpha(v)$ is unique up to a scalar, which is the same scalar appearing in the Skolem--Noether theorem. $\alpha$ is determined by $\alpha(v)$, since $\alpha$ is $G$ equivariant, and any $f\in \mathrm{End}(V)$ can be written as a linear combination of $\rho_V(g)$, i.e., we have
\begin{equation}\label{isomorphism of V and W}
    \alpha\left(f(v)\right)=\frac{d}{|G|} \sum_{g \in G} \mathrm{Tr}\left(\rho_V(g^{-1})f\right)\rho_W(g)\alpha(v).
\end{equation}
In practice, if there is a basis on $V$, we can choose $r$ to be the canonical projection on a basis vector, and use \cref{isomorphism of V and W} to calculate the action of $\alpha$ on other basis vectors.

\subsection{\texorpdfstring{Group algebra $\mathbb{R}[G]$}{Group algebra R[G]}}

\subsubsection{The Wedderburn--Artin decomposition}\label{subsubsection:Decomposition into Matrix Algebras real}

Maschke's theorem also shows that $\mathbb{R}[G]$ is a semisimple $\mathbb{R}$ algebra, which means there is an isomorphism of algebras $\Phi=(\rho_i)$,
\begin{equation}\label{Maschke's theorem real}
    \mathbb{R}[G] \cong \prod_{i \in \mathrm{Irr}_{\mathbb{R}}(G)} M_{d_i'}(D_i).
\end{equation}
Here, $i$ takes values in the isomorphism classes of simple $\mathbb{R}[G]$ modules. $D_i$ is a division $\mathbb{R}$ algebra. Moreover, $d_i' = d_i/\mathrm{dim}(D_i)$, where $d_i$ is the dimension of the simple module. Moreover, the division algebra $D_i$ is the opposite algebra of the endomorphism algebra of the simple $\mathbb{R}[G]$ module.

A simple module is of $D$ type if the endomorphism algebra of the module is $D$, and the endomorphism algebra becomes a $G$-equivariant $D$ structure on the simple module. It is well known that there are only three isomorphism classes of division $\mathbb{R}$ algebras, which are $\mathbb{R}$, $\mathbb{C}$, and $\mathbb{H}$.

In what follows, we show how to construct $\Phi$ explicitly from the complex representations.

\subsubsection{\texorpdfstring{The relationship between $\mathbb{R}[G]$ and $\mathbb{C}[G]$ modules}{subsubsection:The relationship between R[G] and C[G] modules}}\label{subsubsection:The Relationship Between R[G] and C[G] Modules}

If we apply the complexification functor $-\otimes_{\mathbb{R}}\mathbb{C}$ to both sides of \cref{Maschke's theorem real}, the left-hand side becomes $\mathbb{C}[G]$, and each component of the right-hand side becomes one or two components in \cref{eq:Maschke's theorem complex}, i.e.,
$M_{d}(\mathbb{R})\otimes_{\mathbb{R}}\mathbb{C} \cong M_{d}(\mathbb{C})$,
$M_{d}(\mathbb{C})\otimes_{\mathbb{R}}\mathbb{C} \cong M_{d}(\mathbb{C})\times M_{d}(\mathbb{C})$
and $M_{d}(\mathbb{H})\otimes_{\mathbb{R}}\mathbb{C} \cong M_{2d}(\mathbb{C})$.
The first isomorphism is obvious; the second is given by $f\otimes_{\mathbb{R}}1\mapsto (f,\bar{f})$; and the third is induced by mapping $I$, $J$, and $K$ to $-i\sigma_x$, $-i\sigma_y$, and $-i\sigma_z$, respectively, where the $\sigma$'s are the $2\times 2$ Pauli matrices.

This gives a correspondence between real and complex representations. For $D=\mathbb{R}$ and $\mathbb{H}$, there is a one-to-one correspondence between simple $\mathbb{R}[G]$ modules and simple $\mathbb{C}[G]$ modules. For $D=\mathbb{C}$, there is a one-to-two correspondence between simple $\mathbb{R}[G]$ modules and simple $\mathbb{C}[G]$ modules.

Let $W$ be a simple $\mathbb{R}[G]$ module, and let $V$ and $\bar{V}$ be the corresponding simple $\mathbb{C}[G]$ modules. For $D=\mathbb{R}$ or $\mathbb{H}$, we have $V\cong\bar{V}$. We say that the simple $\mathbb{C}[G]$ module $V$ has $D$ type if $W$ does.

Notice that $V$ is not necessarily the complexification of $W$. This is true for $\mathbb{R}$-type modules: we have $W\otimes_{\mathbb{R}}\mathbb{C}\cong V$. For $\mathbb{C}$-type modules, we have $W\otimes_{\mathbb{R}}\mathbb{C}\cong V\oplus \bar{V}$. For $\mathbb{H}$-type modules, we have $W\otimes_{\mathbb{R}}\mathbb{C}\cong V\oplus V$.

Using the correspondence between $\mathbb{R}[G]$ modules and $\mathbb{C}[G]$ modules, it is easy to show $V \not\cong \bar{V}$ for $\mathbb{C}$-type simple $\mathbb{C}[G]$ modules, and there exists an isomorphism $\alpha:V\cong \bar{V}$ for $\mathbb{R}$- and $\mathbb{H}$-type simple $\mathbb{C}[G]$ modules. To further distinguish the $\mathbb{R}$ and $\mathbb{H}$ types, notice that $\bar{\alpha}\circ \alpha=\alpha\circ \bar{\alpha}=z\cdot -$ for some nonzero $z\in \mathbb{C}$, and $\bar{\alpha}\circ \alpha \circ \bar{\alpha}\circ \alpha=z^2\,\mathrm{id}=|z|^2\,\mathrm{id}$; hence, $z\in \mathbb{R}$. If $V$ has $\mathbb{R}$ type, then $z$ is positive; therefore, $\alpha$ is a $G$-equivariant real structure up to a real scalar. If $V$ is $\mathbb{H}$ type, $z$ is negative; therefore, $\alpha$ is a $G$-equivariant quaternion structure up to a real scalar.

There is a simple method to distinguish the three types by the Frobenius--Schur indicator. For $i\in \mathrm{Irr}(G)$,
\begin{equation}\label{Frobenius--Schur indicator}
    FS(i) = \frac{1}{|G|} \sum_{g \in G} \chi_i(g^2).
\end{equation}
If $i$ is of $\mathbb{R}$, $\mathbb{C}$, or $\mathbb{H}$ type, then $FS(i)$ is $1$, $0$, and $-1$, respectively. The proof is in Appendix \ref{subsubsection:Characters and Great Orthogonality Theorem real}.

In summary, assume that all complex irreducible representations are known. To get all components $\rho_i$ of $\Phi$, we first calculate the indicator $FS$ for each complex irreducible representation $\rho$.
\begin{itemize}
    \item If $FS=1$, $\rho$ has $\mathbb{R}$ type. There exists $\alpha:V\cong\bar{V}$ such that $\bar{\alpha}\circ \alpha=\mathrm{id}$. The isomorphism $\alpha$ can be constructed by \cref{isomorphism of V and W}. $\alpha$ is a $G$-equivariant real structure; by restricting to the eigenspace of $\alpha$, we get $\rho_i$.

    \item If $FS=0$, $\rho$ has $\mathbb{C}$ type. In this case, $\rho_i$ is just $\rho$.

    \item If $FS=-1$, $\rho$ has $\mathbb{H}$ type. There exists an isomorphism $\alpha:V\cong\bar{V}$ such that $\bar{\alpha}\circ \alpha=-\mathrm{id}$. The isomorphism $\alpha$ can be constructed by \cref{isomorphism of V and W}, and $\alpha$ is a $G$-equivariant quaternion structure, which gives $\rho_i$.
\end{itemize}

\subsubsection{\texorpdfstring{The inverse of $\Phi$}{The inverse of Phi}}\label{subsubsection:The Inverse of Phi real}

Notice that there is a commutative diagram
\begin{widetext}
\begin{equation}
    \begin{tikzcd}
        \mathbb{R}[G] \ar[r,"\Phi"]\ar[d,hook] & \prod_{i_{\mathbb{R}}} M_{d_{i_\mathbb{R}}}(\mathbb{R}) \times \prod_{j_{\mathbb{C}}} M_{d_{j_\mathbb{C}}/2}(\mathbb{C}) \times \prod_{k_{\mathbb{H}}} M_{d_{k_\mathbb{H}}/4}(\mathbb{H}) \ar[d,hook] \\
        \mathbb{C}[G] \ar[r,"\Phi"] & \prod_{i_{\mathbb{R}}} M_{d_{i_\mathbb{R}}}(\mathbb{C}) \times \prod_{j_{\mathbb{C}}} M_{d_{j_\mathbb{C}}/2}(\mathbb{C})\times M_{d_{j_\mathbb{C}}/2}(\mathbb{C}) \times \prod_{k_{\mathbb{H}}} M_{d_{k_\mathbb{H}}/2}(\mathbb{C}),
    \end{tikzcd}
\end{equation}
\end{widetext}
where $i_D$ takes values in the isomorphism classes of $D$-type simple modules. The vertical maps are induced by the canonical injections to the complexified spaces. In order to construct $\Phi^{-1}$ for $\mathbb{R}[G]$, we simply use $\Phi^{-1}$ for $\mathbb{C}[G]$ and restrict it to the subspace. To be specific, we have
\begin{itemize}
    \item If $f\in M_{d_{i_\mathbb{R}}}(\mathbb{R})$,
    \begin{equation}\label{inverse of phi real R-type}
        \Phi^{-1}(f)=\frac{d_{i_{\mathbb{R}}}}{|G|} \sum_{g \in G} \mathrm{Tr}\left(\rho_{i_{\mathbb{R}}}(g^{-1})f\right) \, g.
    \end{equation}

    \item If $f\in M_{d_{j_\mathbb{C}}/2}(\mathbb{C})$,
    \begin{equation}\label{inverse of phi real C-type}
        \begin{aligned}
            \Phi^{-1}(f)&=\frac{d_{j_\mathbb{C}}}{2|G|} \sum_{g \in G} \mathrm{Tr}\left( \rho_{j_\mathbb{C}}(g^{-1})f + \bar{\rho}_{j_\mathbb{C}}(g^{-1}) \bar{f} \right) \, g \\
            &=\frac{d_{j_\mathbb{C}}}{|G|} \sum_{g \in G} \mathrm{Tr}\left( \mathrm{Re}\left(\rho_{j_\mathbb{C}}(g^{-1})f\right)\right) \, g.
        \end{aligned}
    \end{equation}

    \item If $f\in M_{d_{k_\mathbb{H}}/4}(\mathbb{H})$, recall that the isomorphism $M_{d_{k_\mathbb{H}}/4}(\mathbb{H})\otimes_{\mathbb{R}}\mathbb{C} \cong M_{d_{k_\mathbb{H}}/2}(\mathbb{C})$ is given by mapping $I$, $J$, and $K$ to $-i\sigma_x$, $-i\sigma_y$, and $-i\sigma_z$, respectively. We denote its restriction to $M_{d_{k_\mathbb{H}}/4}(\mathbb{H})$ by $\phi_{k_\mathbb{H}}$. Then we have
    \begin{equation}\label{inverse of phi real H-type}
        \begin{aligned}
            \Phi^{-1}(f)&=\frac{d_{k_\mathbb{H}}}{2|G|} \sum_{g \in G} \mathrm{Tr}\left(\phi_{k_\mathbb{H}}\left(\rho_{k_\mathbb{H}}(g^{-1})f\right)\right) \, g \\
            &=\frac{d_{k_\mathbb{H}}}{|G|} \sum_{g \in G} \mathrm{Tr}\left(\mathrm{Re}\left(\rho_{k_\mathbb{H}}(g^{-1})f\right)\right) \, g.
        \end{aligned}
    \end{equation}
    Notice that $\mathrm{Tr}\circ \phi_{k_\mathbb{H}}=2\mathrm{Tr}\circ\mathrm{Re}$, where $\mathrm{Re}$ denotes the projection that maps a quaternion number to its coefficient of the unit element.
\end{itemize}

There is a more technical way to derive $\Phi^{-1}$ using the reduced trace. Notice that there are three different kinds of traces for a $D^{\mathrm{op}}$-linear map $f\in \mathrm{End}_{D^{\mathrm{op}}}\left( V \right) \cong M_d(D) $, where $D$ is a division $\mathbb{K}$ algebra and $V$ is a $D^{\mathrm{op}}$ module.
\begin{enumerate}
    \item The trace of $f$ as a $\mathbb{K}$-linear map, i.e., $\mathrm{Tr}_{\mathbb{K}}(f)$.
    \item The trace of $f$ as a $Z$-linear map, where $Z$ is the center of $D$, i.e., $\mathrm{Tr}_Z(f)$.
    \item The reduced trace of $f$ as a $D$-linear map, i.e., $\mathrm{Trd}(f) = \mathrm{Tr}_{\overline{Z}}\left(\phi(f\otimes_{Z} 1)\right)$, where $\phi$ is an isomorphism from $M_d(D)\otimes_{Z}\overline{Z}$ to $M_{dm}(\overline{Z})$, $\overline{Z}$ is the algebraic closure of the center $Z$, $m$ is called the Schur index of $D$. $\mathrm{Trd}(f)$ takes values in $Z$.
\end{enumerate}
These traces are related by the formulas
\begin{equation}
    \mathrm{Tr}_{\mathbb{K}}(f) = \mathrm{Tr}_{Z/\mathbb{K}} \circ \mathrm{Tr}_{Z}(f),
\end{equation}
where $\mathrm{Tr}_{Z/\mathbb{K}}$ is the relative trace of the field extension, and
\begin{equation}
    \mathrm{Tr}_Z(f) = m \cdot \mathrm{Trd}(f),
\end{equation}
where $m$ is the Schur index of $D$.

For $f$ an element of the component $M_{d_i'}(D_i)$, we have the formula
\begin{equation}
    \Phi^{-1}(f) = \frac{d_i'm_i}{|G|}\sum_{g \in G} \mathrm{Tr}_{Z/\mathbb{K}} \circ \mathrm{Trd}(\rho_i(g^{-1})f)\, g,
\end{equation}
where $m_i$ is the Schur index of $D_i$.
To prove this formula, we assume $\Phi^{-1}(f) = \sum_{g\in G} a_g\, g \in \mathbb{R}[G]$. Notice that we have
\begin{equation}
    \begin{aligned}
        a_g & = \frac{1}{|G|}\, \mathrm{Tr}_{\mathbb{K}}\left(g^{-1}\Phi^{-1}(f)\cdot -\right) \\
        & = \frac{1}{|G|}\, \mathrm{Tr}_{\mathbb{K}}\left(\rho_i(g^{-1})f\cdot -\right) \\
        & = \frac{d_i'm_i}{|G|}\, \mathrm{Tr}_{Z/\mathbb{K}} \circ \mathrm{Trd}\left(\rho_i(g^{-1})f\right).
    \end{aligned}
\end{equation}

Finally, when $\mathbb{K}=\mathbb{R}$, notice that for $D_i = \mathbb{C}$, we have $\mathrm{Tr}_{Z/\mathbb{K}} = 2\mathrm{Re}$; for $D_i = \mathbb{H}$, we have $\mathrm{Trd} = 2\mathrm{Re}$. In summary, for $f\in M_{d_i'}(D_i)$, we have
\begin{equation}\label{eq:inverse of Phi real}
    \Phi^{-1}(f) = \frac{d_i}{|G|} \sum_{g \in G} \mathrm{Tr}\left(\mathrm{Re}\left(\rho_{i}(g^{-1})f\right)\right) \, g.
\end{equation}

\subsubsection{Characters and great orthogonality theorem}\label{subsubsection:Characters and Great Orthogonality Theorem real}

We first prove that the Frobenius--Schur indicator can distinguish the three $D$ types. It is well known that there is a $G$-invariant inner product on any $\mathbb{C}[G]$ module, since we can start with an arbitrary inner product and then average it over $G$. The inner product induces an isomorphism $\bar{V}\cong V^*$, where $V^*$ is the dual module. Thus, we can replace $V\cong \bar{V}$ by $V\cong V^*$. The isomorphism $V\cong V^*$ is equivalent to the existence of a $G$-invariant nondegenerate bilinear form $B$. Notice that $B$ is unique up to a scalar, because of the Schur lemma. $B$ can be decomposed into $S+A$, where $S$ is symmetric and $A$ is skew symmetric; both $S$ and $A$ are $G$ invariant. Thus, $B$ must equal either $S$ or $A$. It can be proved that if $\bar{\alpha}\circ \alpha$ is positive, then $B=S$. If $\bar{\alpha}\circ \alpha$ is negative, then $B=A$. Finally, we use the following identities:
\begin{equation}
    2\chi_S(g)=\chi(g)^2+\chi(g^2) \quad \textrm{and} \quad 2\chi_A(g)=\chi(g)^2-\chi(g^2),
\end{equation}
where $\chi_S(g)$ and $\chi_A(g)$ are the characters of the symmetric and skew-symmetric subspaces of $V\otimes V$, respectively, which are the duals of the $G$-invariant symmetric and skew-symmetric bilinear forms on $V$, respectively. Subtracting the two identities above and taking the inner product with the trivial representation, we get $FS(\rho)=s-a$, where $s$ is the number of trivial representations contained in the symmetric subspace of $V\otimes V$. We have $s=1$ if $B$ exists and equals $S$; otherwise, $s=0$. Similarly, $a$ is the number of trivial representations contained in the skew-symmetric subspace of $V\otimes V$. We have $a=1$ if $B$ exists and equals $A$; otherwise, $a=0$. This completes the proof.

The character $\xi$ of a real representation is the same as the character of its complexification. Recall that there is a correspondence between the simple $\mathbb{R}[G]$ modules and simple $\mathbb{C}[G]$ modules. Let $i$ denote the $\mathbb{C}[G]$ module obtained from $i_{\mathbb{R}}$; a similar rule applies to $\mathbb{C}$ and $\mathbb{H}$. We have
\begin{itemize}
    \item $\xi_{i_{\mathbb{R}}}=\chi_i$.

    \item $\xi_{j_{\mathbb{C}}}=\chi_j+\bar{\chi}_j$.

    \item $\xi_{k_{\mathbb{H}}}=2\chi_k$.
\end{itemize}

Finally, the great orthogonality theorem is obtained by setting $f$ to be matrix entries.
\begin{itemize}
    \item If $e_{ab}^{i_\mathbb{R}}\in M_{d_{i_\mathbb{R}}}(\mathbb{R})$,
    \begin{equation}
        \Phi^{-1}(e_{ab}^{i_\mathbb{R}})=\frac{d_{i_{\mathbb{R}}}}{|G|} \sum_{g \in G} \rho_{i_{\mathbb{R}}}(g^{-1})_{ba} \, g.
    \end{equation}

    \item If $e_{ab}^{j_\mathbb{C}}\in M_{d_{j_\mathbb{C}}/2}(\mathbb{C})$,
    \begin{equation}
        \Phi^{-1}(e_{ab}^{j_\mathbb{C}})=\frac{d_{j_\mathbb{C}}}{|G|} \sum_{g \in G} \mathrm{Re}(\rho_{j_\mathbb{C}}(g^{-1})_{ba})  \, g,
    \end{equation}
    \begin{equation}
        \Phi^{-1}(i\cdot e_{ab}^{j_\mathbb{C}})=-\frac{d_{j_\mathbb{C}}}{|G|} \sum_{g \in G} \mathrm{Im}(\rho_{j_\mathbb{C}}(g^{-1})_{ba})  \, g.
    \end{equation}

    \item If $e_{ab}^{k_\mathbb{H}}\in M_{d_{k_\mathbb{H}}/4}(\mathbb{H})$,
    \begin{equation}
        \Phi^{-1}(e_{ab}^{k_\mathbb{H}})=\frac{d_{k_\mathbb{H}}}{|G|} \sum_{g \in G} \mathrm{Re}\left(\rho_{k_\mathbb{H}}(g^{-1})_{ba}\right) \, g,
    \end{equation}
    \begin{equation}
        \Phi^{-1}(I\cdot e_{ab}^{k_\mathbb{H}})=\frac{d_{k_\mathbb{H}}}{|G|} \sum_{g \in G} \mathrm{Re}\left(I\cdot \rho_{k_\mathbb{H}}(g^{-1})_{ba}\right) \, g.
    \end{equation}
    $J$ and $K$ are similar.
\end{itemize}
All orthogonality relations can be extracted from the identities above; for example, we have
\begin{equation}
    \frac{d_{j_\mathbb{C}}}{|G|} \sum_{g \in G} \mathrm{Re}(\rho_{j_\mathbb{C}}(g^{-1})_{ba})  \, \rho_{j_\mathbb{C}}(g)_{cd}=\delta_{ac}\delta_{bd}.
\end{equation}
As a direct consequence, we have
\begin{equation}
    \frac{d_{j_\mathbb{C}}}{|G|} \sum_{g \in G} \mathrm{Re}(\rho_{j_\mathbb{C}}(g^{-1})_{ba})  \, \mathrm{Re}(\rho_{j_\mathbb{C}}(g)_{cd})=\delta_{ac}\delta_{bd},
\end{equation}
\begin{equation}
    \frac{d_{j_\mathbb{C}}}{|G|} \sum_{g \in G} \mathrm{Re}(\rho_{j_\mathbb{C}}(g^{-1})_{ba})  \, \mathrm{Im}(\rho_{j_\mathbb{C}}(g)_{cd})=0.
\end{equation}

\subsubsection{Constructing isomorphisms between two modules}\label{subsubsection:Constructing isomorphisms real}

Let $V$ and $W$ be isomorphic simple $\mathbb{R}[G]$ modules with $D^{\mathrm{op}}$ type. We construct a $\mathbb{R}[G]$ isomorphism explicitly. As in Appendix \ref{subsubsection:Constructing isomorphisms}, we have an isomorphism $\psi:\mathrm{End}_{D^{\mathrm{op}}}(V)\xrightarrow{\Phi^{-1}} \mathbb{R}[G] \xrightarrow{\Phi} \mathrm{End}_{D^{\mathrm{op}}}(W)$. $\mathrm{End}_{D^{\mathrm{op}}}(V)$ and $\mathrm{End}_{D^{\mathrm{op}}}(W)$ are isomorphic to $M_d(D)$.

The construction for $\mathbb{R}$ type and $\mathbb{H}$ type is analogous to that in Appendix \ref{subsubsection:Constructing isomorphisms}, except that we use \cref{eq:inverse of Phi real} for $\Phi^{-1}$. Let $f\in \mathrm{End}_{D}(V)$. Since $M_d(D)$ is a central simple algebra for $D=\mathbb{R}$ and $\mathbb{H}$, we have $\psi=\alpha f \alpha^{-1}$ and $\alpha:V\to W$ is a $D$-linear map. For $\mathbb{H}$-type, $\alpha$ is only unique up to multiplication by an element of $D$. Other choices of $\alpha$ are given by composing $\alpha$ with an automorphism of $V$.

For $\mathbb{C}$ type, $M_d(\mathbb{C})$ is no longer a central simple $\mathbb{R}$ algebra. The automorphism of $M_d(\mathbb{C})$ as an $\mathbb{R}$ algebra is not always an inner automorphism; however, it is an inner automorphism up to complex conjugation. Since the automorphism of $M_d(\mathbb{C})$ induces an automorphism of its center, which must be either the identity or complex conjugation, we still have $\psi=\alpha f \alpha^{-1}$; however, $\alpha$ may not be a $\mathbb{C}$-linear map, and it could also be a $\mathbb{C}$-antilinear map. The construction in Appendix \ref{subsubsection:Constructing isomorphisms} is still valid, except that we need to use \cref{eq:inverse of Phi real} for $\Phi^{-1}$. Other choices of $\alpha$ are given by composing $\alpha$ with an automorphism of $V$.

\subsection{\texorpdfstring{Twisted group algebra $\mathbb{K}[G,\omega]$}{Twisted group algebra K[G,Omega]}}\label{subsection:Twisted Group Algebra}

\subsubsection{Definition}

Let $\omega\in H^2(G,\mathbb{K}^{\times})$ be a two-cocycle, where $\mathbb{K}^{\times}$ is the group of invertible elements of $\mathbb{K}$. For $\mathbb{K}=\mathbb{R}$ or $\mathbb{C}$, without loss of generality, $\mathbb{K}^{\times}$ can be replaced by $U(1)$ and $\mathbb{Z}_2=\{1,-1\}$, respectively. Moreover, $\omega\in H^2(G,\mathbb{K}^{\times})$ can be chosen to satisfy $\omega(1,g)=\omega(g,1)=1$ for all $g\in G$.

The twisted group algebra $\mathbb{K}[G,\omega]$ is spanned by its basis $\{g\}_{g\in G}$, and the multiplication is given by $g\cdot h=\omega(g,h)gh$. Notice that $\omega\in H^2(G,\mathbb{K}^{\times})$ also determines a central extension of $G$ by $\mathbb{K}^{\times}$. Moreover, for $\mathbb{K}=\mathbb{C}$, the cocycle $\omega$ takes values in $\mathbb{Z}_N$ for some $N$, where $\mathbb{Z}_N$ is viewed as a subgroup of $U(1)$ with the injection $n\mapsto e^{2\pi i n/N}$. Thus, we have a group extension
\begin{equation}
    1\to \mathbb{Z}_N \to \tilde{G} \to G \to 1.
\end{equation}
The group elements of $\tilde{G}$ have the form $(a,g)$, where $a\in \mathbb{Z}_N$ and $g\in G$, and the multiplication is given by $(a,g)\cdot (b,h)=(ab\,\omega(g,h),gh)$. We have the following isomorphism of algebras:
\begin{equation}\label{eq:Decomposition of twisted group algebra}
    \mathbb{K}[\tilde{G}]\cong \mathbb{K}[G]\times \mathbb{K}[G,\omega] \times \mathbb{K}[G,\omega^2] \times \cdots \times \mathbb{K}[G,\omega^{N-1}].
\end{equation}
The isomorphism is given by the assignment $(a,g)\mapsto (g,ag,a^2g,\cdots,a^{N-1}g)$.

\subsubsection{The Wedderburn--Artin decomposition}

Using \cref{eq:Decomposition of twisted group algebra}, it is easy to get the Wedderburn--Artin decomposition of $\mathbb{K}[G,\omega]$ from the Wedderburn--Artin decomposition of $\mathbb{K}[\tilde{G}]$. Assume that $\mathbb{K}[\tilde{G}]$ has been decomposed into a direct sum of matrix algebras; that is, all irreducible representations of $\mathbb{K}[\tilde{G}]$ are known. If $\rho$ is a component of $\mathbb{K}[\tilde{G}]$ with $\rho:(a,1)\mapsto a^n$, then it must be a component of $\mathbb{K}[G,n\omega]$. Thus, we have the decomposition $\Phi=(\rho_{i})$
\begin{equation}\label{Maschke's theorem twisted}
    \mathbb{K}[G,\omega]\cong \prod_{i} M_{d_{i}'}(D_i).
\end{equation}
Here, $i$ takes values in the isomorphism classes of simple $\mathbb{K}[\tilde{G}]$ modules such that $\tilde{\rho}_{i}(a,1)=a$ with $\tilde{\rho}_{i}$ being the corresponding component of $\mathbb{K}[\tilde{G}]$. The component $\rho_{i}$ is the restriction of $\tilde{\rho}_{i}$ to the subalgebra $\mathbb{K}[G,\omega]$. The dimension is $d_{i}'=d_{i}/\mathrm{dim}(D_i)$, where $d_{i}$ is the dimension of the simple module $i$.

\subsubsection{\texorpdfstring{The inverse of $\Phi$}{The inverse of Phi}}

The inverse $\Phi^{-1}$ is obtained by $M_{d_{i}'}(D_i)\to \mathbb{K}[\tilde{G}] \to \mathbb{K}[G,\omega]$. To be explicit, for $\mathbb{K}=\mathbb{C}$,
\begin{equation}\label{eq:inverse of Phi twisted complex}
    \begin{aligned}
        \Phi^{-1}(f)&=\frac{d_{i}}{|\tilde{G}|} \sum_{(a,g) \in \tilde{G}} \mathrm{Tr}\left(\tilde{\rho}_{i}\left((a,g)^{-1}\right)f\right) \, ag\\
        &=\frac{d_i}{|G|} \sum_{g \in G} \omega(g,g^{-1})^{-1}\mathrm{Tr}\left(\rho_i(g^{-1})f\right) \, g,
    \end{aligned}
\end{equation}
for $\mathbb{K}=\mathbb{R}$,
\begin{equation}\label{eq:inverse of Phi twisted real}
    \begin{aligned}
        \Phi^{-1}(f)&=\frac{d_{i}}{|\tilde{G}|} \sum_{(a,g) \in \tilde{G}} \mathrm{Tr}\left(\mathrm{Re}\left(\tilde{\rho}_{i}\left((a,g)^{-1}\right)f\right)\right) \, ag\\
        &=\frac{d_{i}}{|G|} \sum_{g \in G} \omega(g,g^{-1})^{-1}\mathrm{Tr}\left(\mathrm{Re}\left(\rho_{i}(g^{-1})f\right)\right) \, g,
    \end{aligned}
\end{equation}
where $(a,g)^{-1}=(a^{-1}\omega(g,g^{-1})^{-1},g^{-1})$. Notice that we have $\omega(g^{-1},g)\omega(g,1)=\omega(1,g)\omega(g,g^{-1})$, which means $\omega(g,g^{-1})=\omega(g^{-1},g)$.

\subsubsection{Characters and great orthogonality theorem}

The Frobenius--Schur indicator of a $\mathbb{C}[G,\omega]$ module is given by
\begin{equation}\label{eq:Frobenius--Schur indicator twisted}
    \begin{aligned}
        & \frac{1}{|\tilde{G}|} \sum_{(a,g) \in \tilde{G}} \chi((a,g)^2) \\
        = & \frac{1}{|G|} \sum_{g \in G} \omega(g,g)\chi(g^2),
    \end{aligned}
\end{equation}
where we assume that $\omega$ takes values in $\mathbb{Z}_2$, and $\chi$ is the character of the module.

Using the character theory of $\mathbb{C}[\tilde{G}]$, a direct computation shows that the simple character $\chi_i(g)=\mathrm{Tr}\left(\rho_i(g)\right)$ of $\mathbb{C}[G,\omega]$ is an orthonormal basis under the inner product
\begin{equation}\label{eq:inner product of characters twisted}
    (\chi_i,\chi_j)=\frac{1}{|G|}\sum_{g\in G}\omega(g,g^{-1})^{-1}\chi_i(g^{-1})\chi_j(g)=\delta_{ij}.
\end{equation}
Notice that we have $\omega(g,g^{-1})^{-1}\chi_i(g^{-1})=\bar{\chi}_i(g)$.

The great orthogonality theorem is obtained by setting $f$ to be matrix entries in \cref{eq:inverse of Phi twisted complex}.

\subsubsection{Constructing isomorphisms between two modules}\label{subsubsection:constructing isomorphisms between two modules twisted}

Apart from the use of \cref{eq:inverse of Phi twisted real,eq:inverse of Phi twisted complex}, the approach remains unchanged.

%% file: a4.graded_group_algebra.tex
\section{Graded Group Algebras}\label{section:graded group algebra}

\subsection{\texorpdfstring{$\mathbb{Z}_2$-graded group algebra over $\mathbb{R}$ and $\mathbb{C}$}{Z2-graded group algebra over R and C}}\label{subsection:Z2 graded group algebra over R and C}

\subsubsection{The Wedderburn--Artin decomposition}\label{subsubsection:decomposition into matrix algebras for graded group algebras}

A $\mathbb{Z}_2$-graded group $G$ is a group equipped with a group homomorphism $s:G\to\mathbb{Z}_2$. $G_0$ denotes the kernel of $s$, and $G_1$ is the preimage of $1\in \mathbb{Z}_2$. The group algebra $\mathbb{K}[G]$ is a $\mathbb{Z}_2$-graded algebra. We assume that $G_1$ is nonempty, and we fix a grading 1 group element $g_1\in G_1$.

Notice that $\mathbb{K}[G]$ is a $\mathbb{Z}_2$-crossed product algebra. According to \cref{prop:equivalence of categories between right A-modules and A_0-modules}, the representation theory of $\mathbb{K}[G]$ is determined by the representation theory of $\mathbb{K}[G_0]$. Assume that the Wedderburn--Artin decomposition of $\mathbb{K}[G_0]$  is known, that is, there is an isomorphism $\Phi_0=(\rho_{i_0})$,
\begin{equation}\label{eq:decomposition of K[G_0]}
    \mathbb{K}[G_0]\cong \prod_{i_{0}} M_{d_{i_{0}}'}(D_{i_0}).
\end{equation}
Here, $i_{0}$ takes values in the isomorphism classes of simple $\mathbb{K}[G_0]$-modules. $d_{i_{0}}'=d_{i_{0}}/\mathrm{dim}(D_{i_0})$, where $d_{i_{0}}$ is the dimension of the simple module.

As in \cref{thm:structure theorem of Gr-crossed product algebra}, the $\mathbb{Z}_2$-graded Wedderburn--Artin decomposition of $\mathbb{K}[G]$ is given by an isomorphism of $\mathbb{Z}_2$-graded algebras $\Phi=(\rho_{i})$,
\begin{equation}\label{eq:decomposition Phi of graded group algebra}
    \mathbb{K}[G]\cong \prod_{i} M_{d_{i}'}(D_i)(\bar{c}_{i}).
\end{equation}
Here, $i$ takes values in the graded equivalence classes of simple graded $\mathbb{K}[G]$ modules. $d_{i}'=|\Gr/E_{i}|d_{i_{0}}'=d_{i}/\mathrm{dim}(D_i)$, where $d_{i}$ is the dimension of the $\mathbb{Z}_2$-graded module, $E_{i}$ and $\bar{c}_{i}$ are determined by $D_i$ as in \cref{thm:structure theorem of Gr-crossed product algebra}.

To get $\rho_i$, we fix a simple $\mathbb{K}[G_0]$-module $V$ corresponding to $i_{0}$. By \cref{prop:equivalence of categories between right A-modules and A_0-modules}, $\mathbb{K}[G]\otimes_{G_0}V$ is the simple $\mathbb{K}[G]$ module corresponding to $i$. For simplicity, we omit $i$ in the following. It remains to determine $D$ and a $G$-equivariant $D$ structure on $\mathbb{K}[G]\otimes_{G_0}V$.

To be explicit, let $\rho_0$ be the representation of $V$. Then the representation of $\mathbb{K}[G]\otimes_{G_0}V\cong 1\otimes_{G_0} V\oplus g_1\otimes_{G_0} V\cong V\oplus V$ is given by
\begin{equation}\label{eq:representation of K[G] otimes N grading 0}
    \rho(g)=
    \begin{bmatrix}
        \rho_0(g) & 0 \\
        0 & \rho_0(g_1^{-1}gg_1) \\
    \end{bmatrix}, \textrm{ for } g\in G_0,
\end{equation}
\begin{equation}\label{eq:representation of K[G] otimes N grading 1}
    \rho(g)=
    \begin{bmatrix}
        0 & \rho_0(gg_1) \\
        \rho_0(g_1^{-1}g)  & 0 \\
    \end{bmatrix}, \textrm{ for } g\in G_1,
\end{equation}
recall that $g_1\in G_1$ is the fixed grading 1 group element.

The $\mathbb{Z}_2$-graded division algebra $D$ is the endomorphism algebra of $\mathbb{K}[G]\otimes_{G_0}V$, and the action of the subalgebra $D_0$ is already known, which is the diagonal action on $V\oplus V$. To get $D_1$, we only need to construct an invertible element in $D_1$ with grading 1. Notice that $D_1$ is nonzero iff there exists an isomorphism $f:V\cong V^{g_1}$, where $V^{g}$ is a $\mathbb{K}[G_0]$ module defined by $h\cdot_{V^{g}} v=(g^{-1}hg)\cdot v$. Recall from \cref{subsection:Gr-Crossed Product Algebra} that we defined a right module version of $V^{g}$ in a previous section. The grading 1 element is given by
\begin{equation}\label{eq:grading 1 element of D}
    \theta=
    \begin{bmatrix}
        0 & \rho_0(g_1^2)f \\
        f & 0
    \end{bmatrix}.
\end{equation}
It is easy to show that $\theta$ is a $\mathbb{K}[G]$-module map.

We choose an $f$ that is compatible with the $D_0$-structure in the following sense. For $\mathbb{K}=\mathbb{C}$, $f$ always commutes with $D_0$. Using the construction in Appendix \ref{subsubsection:Constructing isomorphisms real}, for $\mathbb{K}=\mathbb{R}$ and $D_0=\mathbb{R}$ or $\mathbb{H}$, we can get an $f$ that commutes with $D_0$. For $\mathbb{K}=\mathbb{R}$ and $D_0=\mathbb{C}$, we can get an $f$ that is either linear or antilinear.

The $\mathbb{Z}_2$-graded division algebras over $\mathbb{R}$ and $\mathbb{C}$ are listed in \cref{table:The isomorphisms between Z2-graded division algebras and Clifford algebras}. For $\mathbb{K}=\mathbb{C}$, we have two cases $D\cong\mathbb{C}$ for $V \not \cong V^{g_1}$, and $D\cong\mathbb{C}l^1$ for $f:V\cong V^{g_1}$. Notice that $\rho_0(g_1^2)f^2$ is a $\mathbb{C}[G_0]$ map; therefore, it is the identity up to a scalar. We can redefine $f$ such that $\theta^2=\mathrm{id}$.

For $\mathbb{K}=\mathbb{R}$, $D_0=\mathbb{R}$ or $\mathbb{H}$, the only difference is that the redefinition of $f$ cannot change the sign of $\theta^2$; therefore, we have two choices $\theta^2=\pm 1$. For $D_0=\mathbb{C}$, $f$ is either a linear map or an antilinear map. For both cases, we have $\theta^2\in Z(D)=\mathbb{C}$. However, if $f$ is linear, then we can choose an $f$ such that $\theta^2=1$. If $f$ is antilinear, then we have two choices $\theta^2=\pm 1$. This finishes the construction of $D$ and of the $G$-equivariant $D$ structure on $\mathbb{K}[G]\otimes_{G_0}V$, which gives $\rho_{i}$.

In summary, to construct all $\rho_{i}$, we start with $\rho_{i_{0}}$ of a simple $\mathbb{K}[G_0]$ module $V$. The induced $\mathbb{K}[G]$ module $\mathbb{K}[G]\otimes_{G_0}V$ is constructed by \cref{eq:representation of K[G] otimes N grading 0,eq:representation of K[G] otimes N grading 1}.
\begin{enumerate}
    \item If $V\not\cong V^{g_1}$, then $D\cong D_0$.
    \item If $V\cong V^{g_1}$, $D$ is constructed by \cref{eq:grading 1 element of D}, where $f:V\cong V^{g_1}$ is constructed using the methods in Appendixes \ref{subsubsection:Constructing isomorphisms} and \ref{subsubsection:Constructing isomorphisms real}.
    \begin{enumerate}
        \item For $\mathbb{K}=\mathbb{C}$, $D\cong \mathbb{C}l^1$.
        \item For $\mathbb{K}=\mathbb{R}$ and
        \begin{enumerate}
            \item $D_0=\mathbb{R}$, if $\theta^2=1$, then $D\cong Cl^{0,1}$. If $\theta^2=-1$, then $D\cong Cl^{1,0}$.
            \item $D_0=\mathbb{C}$, if $f$ is linear, then $D\cong \mathbb{C}l^1$. If $f$ is antilinear, then $D\cong Cl^{0,2}$ for $\theta^2=1$, and $D\cong Cl^{2,0}$ for $\theta^2=-1$.
            \item $D_0=\mathbb{H}$, if $\theta^2=1$, then $D\cong Cl^{3,0}$. If $\theta^2=-1$, then $D\cong Cl^{0,3}$.
        \end{enumerate}
    \end{enumerate}
\end{enumerate}
The characters suffice to distinguish all these cases, as will be shown in Appendix \ref{subsubsection:Character theory graded}.

\subsubsection{\texorpdfstring{The inverse of $\Phi$}{The inverse of Phi}}\label{subsubsection:Inverse of Phi graded}

Let $f$ be an element in $\mathrm{End}_{D^{\mathrm{op}}}(V) \cong M_d(D)(\bar{c})$, where $D$ is a $\Gr$-graded division algebra, and $V$ is a graded $D^{\mathrm{op}}$ module. We can take the trace of $f$ as a $\mathbb{K}$-linear map, i.e., $\mathrm{Tr}_{\mathbb{K}}(f)$. Notice that the matrix entries in $D_c$ for $c\neq 0$ do not contribute to the value of the trace. Therefore, we have $\mathrm{Tr}_{\mathbb{K}}(f) = \mathrm{dim}(D/D_0) \mathrm{Tr}_{\mathbb{K}}(f_0)$, where $f_0\in M_d(D_0)$ is obtained from $f$ by projecting every matrix entry to $D_0$.

Following the method in Appendix \ref{subsubsection:The Inverse of Phi real}, for $f\in M_{d_{i}'}(D_i)(\bar{c}_{i})$ and $\mathbb{K} = \mathbb{R}$, we have
\begin{equation}\label{eq:inverse of Phi graded}
    \Phi^{-1}(f) = \frac{d_{i}}{|G|} \sum_{g \in G} \mathrm{Tr}\left(\mathrm{Re}\left(\rho_{i}(g^{-1})f\right)\right) \, g,
\end{equation}
where $\mathrm{Re}$ is the projection from $D$ to $D_0$ composed with the $\mathrm{Re}$ for $D_0$, i.e., $\mathrm{Re}$ is the projection from $D$ to the subspace $\mathbb{R}$ spanned by the unit. For $\mathbb{K} = \mathbb{C}$, $\mathrm{Re}$ is just the projection from $D$ to $D_0 = \mathbb{C}$. This formula actually works for any grading group $\Gr$.

\subsubsection{Character theory}\label{subsubsection:Character theory graded}

For $\mathbb{K}=\mathbb{C}$, let $V$ be a simple $\mathbb{C}[G_0]$ module, and let $\chi$ be its character. To determine whether $V\cong V^{g_1}$, we use the inner product formula of characters,
\begin{equation}\label{eq:criteria of N=N^g1}
    \frac{1}{|G_0|} \sum_{g \in G_0} \chi(g_1^{-1}g^{-1}g_1) \, \chi(g).
\end{equation}
If the inner product is 1, then $V\cong V^{g_1}$, otherwise $V\not\cong V^{g_1}$.

For $\mathbb{K}=\mathbb{R}$, in what follows, we determine $D$ using characters. Let $W$ be a simple $\mathbb{R}[G_0]$ module, and let $\chi$ be its character. We still use \cref{eq:criteria of N=N^g1} to determine whether $W\cong W^{g_1}$, but the value is not always 0 or 1. If $W\not\cong W^{g_1}$, the Frobenius--Schur indicator \cref{Frobenius--Schur indicator} of $W$ is enough to determine $D$. If $W\cong W^{g_1}$, we also need to use the Frobenius--Schur indicator of $\mathbb{R}[G]\otimes_{G_0}W$. According to \cref{table:Clifford algebras,table:The isomorphisms between Z2-graded division algebras and Clifford algebras}, $D$ is uniquely determined by these data. Using \cref{eq:representation of K[G] otimes N grading 0,eq:representation of K[G] otimes N grading 1}, we obtain
\begin{equation}\label{eq:Frobenius--Schur indicator of C[G] tensor G0 N}
    FS(\mathbb{R}[G]\otimes_{G_0}W)=\frac{1}{|G_0|}\sum_{g\in G} \chi(g^2).
\end{equation}
Notice that $FS(\mathbb{R}[G]\otimes_{G_0}W)=FS(W)+w$, where
\begin{equation}\label{eq:W}
    w=\frac{1}{|G_0|}\sum_{g\in G_1} \chi(g^2).
\end{equation}
Equivalently, we use $FS(W)$ and $w$ to replace $FS(W)$ and $FS(\mathbb{R}[G]\otimes_{G_0}W)$. They have the advantage that it suffices to know whether $FS(W)$ and $w$ are positive, negative, or zero. We record the results in \cref{table:character theory and Z2-graded type of module from real module}.

For a simple $\mathbb{R}[G_0]$ module $W$, let $V$ be the corresponding simple $\mathbb{C}[G_0]$ module, as discussed in Appendix \ref{subsubsection:The Relationship Between R[G] and C[G] Modules}. A similar argument can be used to derive the entries of \cref{table:character theory and Z2-graded type of module}.

\begin{table}
    \centering
    \begin{tabular}{|c|c|c|c|}
        \hline
        $V\cong V^{g_1}$ & & & $D$ over $\mathbb{C}$ \\
        \hline
        0 & & & $\mathbb{C}$ \\
        1 & & & $\mathbb{C}l^{1}$ \\
        \hline
        \hline
        $V\cong V^{g_1}$ & $FS$ & $w$ & $D$ over $\mathbb{R}$ \\
        \hline
        0 & 1 & 0 & $\mathbb{R}$ \\
        0 & 0 & 0 & $\mathbb{C}$ \\
        0 & -1 & 0 & $\mathbb{H}$ \\
        1 & 1 & 1 & $Cl^{0,1}$ \\
        1 & 1 & -1 & $Cl^{1,0}$ \\
        1 & 0 & 0 & $\mathbb{C}l^{1}$ \\
        1 & 0 & 1 & $Cl^{0,2}$ \\
        1 & 0 & -1 & $Cl^{2,0}$ \\
        1 & -1 & -1 & $Cl^{3,0}$ \\
        1 & -1 & 1 &  $Cl^{0,3}$ \\
        \hline
    \end{tabular}
    \caption{The first column is calculated by \cref{eq:criteria of N=N^g1} to determine whether $V\cong V^{g_1}$ or $\bar{V}^{g_1}$. The second column is the Frobenius--Schur indicator \cref{Frobenius--Schur indicator} of $V$. The third column is $w$ \cref{eq:W} using the characters of $V$.}\label{table:character theory and Z2-graded type of module}
\end{table}

\begin{table}
    \centering
    \begin{tabular}{|c|c|c|c|}
        \hline
        $W\cong W^{g_1}$ & $FS$ & $w$ & $D$ over $\mathbb{R}$ \\
        \hline
        0 & 1 & 0 & $\mathbb{R}$ \\
        0 & 0 & 0 & $\mathbb{C}$ \\
        0 & -2 & 0 & $\mathbb{H}$ \\
        1 & 1 & 1 & $Cl^{0,1}$ \\
        1 & 1 & -1 & $Cl^{1,0}$ \\
        2 & 0 & 0 & $\mathbb{C}l^{1}$ \\
        2 & 0 & 2 & $Cl^{0,2}$ \\
        2 & 0 & -2 & $Cl^{2,0}$ \\
        4 & -2 & -2 & $Cl^{3,0}$ \\
        4 & -2 & 2 &  $Cl^{0,3}$ \\
        \hline
    \end{tabular}
    \caption{The first column is calculated by \cref{eq:criteria of N=N^g1}. The second column is the Frobenius--Schur indicator \cref{Frobenius--Schur indicator} of $W$. The third column is $w$ \cref{eq:W} using the characters of $W$.}\label{table:character theory and Z2-graded type of module from real module}
\end{table}

\subsection{\texorpdfstring{Twisted group algebra $\mathbb{C}[G,\omega,s]$ with antilinear group elements}{Twisted group algebra C[G,omega,s] with antilinear group elements}}\label{subsection:twisted group algebra with antilinear elements}

\subsubsection{Definition}\label{subsubsection:definition of twisted group algebra with antilinear elements}

Let $\mathbb{C}[G,\omega,s]$ be the twisted group algebra with antilinear elements, following the notation in \cref{subsection:symmetry of the free fermionic systems}. The cocycle $\omega\in H^2(G,U(1))$ defines a $\mathbb{Z}_2$-graded group extension of $G$ by $U(1)$. If $G$ is finite, $\omega$ takes values in $\mathbb{Z}_N$ for some $N$, where $\mathbb{Z}_N$ is viewed as a subgroup of $U(1)$ determined by the map $n\mapsto e^{2\pi i n/N}$, i.e.,
\begin{equation}
    \begin{tikzcd}
        1 \ar[r] & \mathbb{Z}_N \ar[r]\ar[d, equals] & \tilde{G}_0 \ar[r]\ar[d, hook] & G_0 \ar[r]\ar[d, hook] & 1\\
        1 \ar[r] & \mathbb{Z}_N \ar[r] & \tilde{G} \ar[r] & G \ar[r] & 1
    \end{tikzcd}
\end{equation}
The group elements of $\tilde{G}$ are $(a,g)$ for $a\in \mathbb{Z}_N$ and $g\in G$. The multiplication is defined by $(a,g)\cdot (b,h)=(a\,{}^gb\,\omega(g,h),gh)$. Notice that $\tilde{G}$ is also a $\mathbb{Z}_2$-graded group; we denote the grading map by $\tilde{s}$. We have the following isomorphism of algebras:
\begin{equation}
    \begin{aligned}
        \mathbb{C}[\tilde{G},1,\tilde{s}] & \cong \mathbb{C}[G,1,s] \times \mathbb{C}[G,\omega,s]\\
        & \phantom{{}\cong{}} \quad \times \mathbb{C}[G,\omega^2,s] \times \cdots \times \mathbb{C}[G,\omega^{N-1},s].
    \end{aligned} 
\end{equation}
The isomorphism is given by the assignment $(a,g)\mapsto (g,ag,a^2g,\cdots,a^{N-1}g)$.

\subsubsection{The Wedderburn--Artin decomposition}\label{subsubsection:decomposition into matrix algebras twisted with antilinear}

We follow a method similar to that in Appendix \ref{subsection:Z2 graded group algebra over R and C}. The grading 0 subspace of $\mathbb{C}[G,\omega,s]$ is a twisted group algebra $\mathbb{C}[G_0,\omega]$ without antilinear group elements. Assume that we already have the Wedderburn--Artin decomposition of $\mathbb{C}[G_0,\omega]$ using the methods in Appendix \ref{subsection:Twisted Group Algebra}. Since $\mathbb{C}[G,\omega,s]$ is a $\mathbb{Z}_2$-crossed product algebra, the $\mathbb{Z}_2$-graded simple modules over $\mathbb{C}[G,\omega,s]$ are determined by the simple modules over $\mathbb{C}[G_0,\omega]$, according to \cref{prop:equivalence of categories between right A-modules and A_0-modules}.

Let $V$ be a simple $\mathbb{C}[G_0,\omega]$ module. We construct the corresponding $\mathbb{Z}_2$-graded simple module 
\begin{equation}
    \mathbb{C}[G,\omega,s]\otimes_{\mathbb{C}[G_0,\omega]}V.
\end{equation}
To be explicit, let $\rho_0$ be the representation of $V$, and $\rho$ be the representation of $\mathbb{C}[G,\omega,s]\otimes_{\mathbb{C}[G_0,\omega]}V$. The representation $\rho$ has
\begin{widetext}
\begin{equation}\label{eq:representation of C[G omega s] otimes N complex number}
    \rho(z)=
    \begin{bmatrix}
        \rho_0(z) & 0 \\
        0 & \rho_0(\bar{z}) \\
    \end{bmatrix}, \textrm{ for } z\in \mathbb{C},
\end{equation}
\begin{equation}\label{eq:representation of C[G omega s] otimes N grading 0}
    \rho(g)=
    \begin{bmatrix}
        \rho_0(g) & 0 \\
        0 & \bar{\omega}(g_1,g_1^{-1})^{-1}\bar{\omega}(g,g_1)\omega(g_1^{-1},gg_1)\rho_0(g_1^{-1}gg_1) \\
    \end{bmatrix}, \textrm{ for } g\in G_0,
\end{equation}
\begin{equation}\label{eq:representation of C[G omega s] otimes N grading 1}
    \rho(g)=
    \begin{bmatrix}
        0 & \omega(g,g_1)\rho_0(gg_1) \\
        \bar{\omega}(g_1,g_1^{-1})^{-1}\omega(g_1^{-1},g)\rho_0(g_1^{-1}g)  & 0 \\
    \end{bmatrix}, \textrm{ for } g\in G_1,
\end{equation}
\end{widetext}
where $g_1\in G_1$ is the fixed grading 1 group element.

In principle, $V$ is only a real vector space; however, if we take $\rho_0(i)$ as a complex structure, then $V$ becomes a complex vector space. Similarly, if we take $\rho(i)$ as a complex structure on $V\oplus V$, then the induced complex vector space is isomorphic to $V\oplus \bar{V}$. Notice that \cref{eq:representation of C[G omega s] otimes N grading 0} is actually an antilinear map. If we choose a basis of the complex vector space $V$, then there is a canonical antilinear map on $V$ which is denoted by $K$. $\rho(g)$ can be written as matrices over $\mathbb{C}$ in this basis,
\begin{widetext}
\begin{equation}\label{eq:matrix of C[G omega s] otimes N grading 0}
    \rho(g)=
    \begin{bmatrix}
        \rho_0(g) & 0 \\
        0 & \omega(g_1,g_1^{-1})^{-1}\omega(g,g_1)\bar{\omega}(g_1^{-1},gg_1)\bar{\rho}_0(g_1^{-1}gg_1) \\
    \end{bmatrix}, \textrm{ for } g\in G_0,
\end{equation}
\begin{equation}\label{eq:matrix of C[G omega s] otimes N grading 1}
    \rho(g)=
    \begin{bmatrix}
        0 & \omega(g,g_1)\rho_0(gg_1)K \\
        \omega(g_1,g_1^{-1})^{-1}\bar{\omega}(g_1^{-1},g)\bar{\rho}_0(g_1^{-1}g)K  & 0 \\
    \end{bmatrix}, \textrm{ for } g\in G_1.
\end{equation}  
\end{widetext}

Next, we determine the $\mathbb{Z}_2$-graded division algebra $D$. $D_0$ is always $\mathbb{C}$, and it acts on $V\oplus V$ by
\begin{equation}\label{eq:representation of C[G omega s] otimes N action of D0}
    \begin{bmatrix}
        \rho_0(z) & 0 \\
        0 & \rho_0(z) \\
    \end{bmatrix}, \textrm{ for } z\in \mathbb{C}.
\end{equation}
Notice that this map commutes with all $\rho(g)$; however, \cref{eq:representation of C[G omega s] otimes N complex number} does not have this property.
If $V\not\cong V^{g_1}$, then we have $D=\mathbb{C}$, where $V^{g_1}$ is a $\mathbb{C}[G_0,\omega]$ module with
\begin{equation}
    g\cdot v=\bar{\omega}(g_1,g_1^{-1})^{-1}\bar{\omega}(g,g_1)\omega(g_1^{-1},gg_1) g_1^{-1}gg_1v,
\end{equation}
and $z\cdot v=\rho_0(\bar{z})(v)$ for $z\in \mathbb{C}$ and $v\in V$.

Assume that we have an isomorphism $f:V\cong V^{g_1}$, and $f$ is constructed using the methods in Appendix \ref{subsubsection:constructing isomorphisms between two modules twisted}. A nonzero element of $D_1$ is enough to generate the whole $D$, and we have one such element,
\begin{equation}\label{eq:grading 1 element of D twisted group algebra}
    \theta=
    \begin{bmatrix}
        0 & \omega(g_1,g_1)\rho_0(g_1^2)f \\
        f & 0
    \end{bmatrix}.
\end{equation}
To check whether $\theta$ commutes with all $\rho(g)$, we need to use $\omega(g,g^{-1})={}^{g}\omega(g^{-1},g)$. $\theta$ is an antilinear element with respect to $D_0=\mathbb{C}$; hence, $D$ is determined by the sign of $\theta^2$. If $\theta^2$ is positive, then $D=Cl^{0,2}$. If $\theta^2$ is negative, then $D=Cl^{2,0}$.

\subsubsection{\texorpdfstring{The inverse of $\Phi$}{The inverse of Phi}}\label{subsubsection:inverse of Phi twisted group algebra with antilinear elements}

For $f\in M_{d_{i}'}(D_i)(\bar{c}_{i})$, we have
\begin{widetext}
\begin{equation}\label{eq:inverse of Phi twisted group algebra with antilinear elements}
    \Phi^{-1}(f) = \frac{d_i}{2|G|}\sum_{g\in G}\omega(g^{-1},g)^{-1}\left( \mathrm{Tr}\left(\mathrm{Re}\left(\rho_i(g^{-1})f\right)\right) - \mathrm{Tr}\left(\mathrm{Re}\left(\rho_i(ig^{-1})f\right) \right) i \right)\, g ,
\end{equation}
\end{widetext}
where $\mathrm{Re}$ is the projection from $D$ to the subspace $\mathbb{R}$ spanned by the unit. This formula actually works for any grading group $\Gr$.

To prove this is the inverse, we assume $\Phi^{-1}(f) = \sum_{g\in G} a_g\, g \in \mathbb{C}[G,\omega,s]$. Notice that we have
\begin{widetext}
\begin{equation}
    \begin{aligned}
        a_g & = \frac{1}{2|G|}\left(\mathrm{Tr}_{\mathbb{R}}\left(\Phi^{-1}(f)(g)^{-1}\cdot -\right) - \mathrm{Tr}_{\mathbb{R}}\left(i\Phi^{-1}(f)(g)^{-1}\cdot -\right)i \right)\\
        & = \frac{d_i}{2|G|}\left( \mathrm{Tr}\left(\mathrm{Re}\left(\rho_i(g)^{-1}f\right)\right) - \mathrm{Tr}\left(\mathrm{Re}\left(i\rho_i(g)^{-1}f\right)\right)i \right) \\
        & = \frac{d_i}{2|G|}\omega(g^{-1},g)^{-1}\left( \mathrm{Tr}\left(\mathrm{Re}\left(\rho_i(g^{-1})f\right)\right) - \mathrm{Tr}\left(\mathrm{Re}\left(i\rho_i(g^{-1})f\right)\right)i \right).
    \end{aligned}
\end{equation}
\end{widetext}
The trace in the first line is calculated as a linear map on $\mathbb{C}[G,\omega,s]$. The second line uses the method in Appendix \ref{subsubsection:Inverse of Phi graded}. The third line uses the fact that $\mathrm{Tr}(\mathrm{Re}\left(-\right) - \mathrm{Tr}\left(\mathrm{Re}\left(i\cdot -\right)i\right))$ is $\mathbb{C}$ linear.

\subsubsection{Character theory}\label{subsubsection:character theory of twisted group algebra with antilinear elements}

We show how character theory determines the $\mathbb{Z}_2$-graded division algebra $D$. Let $V$ be a $\mathbb{C}[G_0,\omega]$-module, and let $\chi$ be its character. Since there are only three possibilities, $D=\mathbb{C}$, $Cl^{0,2}$, or $Cl^{2,0}$, $D$ is determined by the type of $\mathbb{C}[G,\omega,s]\otimes_{\mathbb{C}[G_0,\omega]}V$. If the type is $\mathbb{R}$, we have $D=Cl^{0,2}$. If the type is $\mathbb{C}$, we have $D=\mathbb{C}$.  If the type is $\mathbb{H}$, we have $D=Cl^{2,0}$.

Notice that $\mathbb{C}[\tilde{G},1,\tilde{s}]$ is isomorphic to the real twisted group algebra $\mathbb{R}[G',\omega']$ without antilinear elements, where $G'=\mathbb{Z}_2\times \tilde{G}$. We denote the elements of $G'$ by $g$ and $ig$ for $g\in \tilde{G}$. We have $\omega'(ig,ig')=-1$ for $g\in \tilde{G}_0$, $\omega'(g,ig')=-1$ for $g\in \tilde{G}_1$, and 1 otherwise.

We lift $\mathbb{C}[G,\omega,s]\otimes_{\mathbb{C}[G_0,\omega]}V$ to an $\mathbb{R}[G',\omega']$ module. Now we can use the Frobenius--Schur indicator \cref{eq:Frobenius--Schur indicator twisted} of twisted group algebras,
\begin{equation}\label{eq:intermediate W twisted}
    \begin{aligned}
        & \frac{1}{|G'|}\left( \sum_{g \in \tilde{G}} \omega'(g,g)\chi'(g^2)+\sum_{g \in \tilde{G}} \omega'(ig,ig)\chi'\left((ig)^2\right) \right) \\
        = & \frac{1}{|G'|}\left( \sum_{g \in \tilde{G}} \chi'(g^2)-\sum_{g \in \tilde{G}_0} \chi'\left(g^2\right)+\sum_{g \in \tilde{G}_1} \chi'\left(g^2\right) \right) \\
        = & \frac{1}{|\tilde{G}|} \sum_{g \in \tilde{G}_1} \mathrm{Tr}_{\mathbb{R}}\left(\tilde{\rho}(g^2)\right) =\frac{1}{|\tilde{G}|} \sum_{g \in \tilde{G}_1} 2\mathrm{Tr}\circ \mathrm{Re}\left(\tilde{\rho}(g^2)\right) \\
        = & \frac{1}{|\tilde{G}|} \sum_{g \in \tilde{G}_1} \mathrm{Tr}\left(\tilde{\rho}(g^2)+\bar{\tilde{\rho}}(g^2)\right) = \frac{2}{|\tilde{G}|} \sum_{g \in \tilde{G}_1} \mathrm{Tr}\left(\tilde{\rho}(g^2)\right) \\
        = & \frac{2}{|\tilde{G}|} \sum_a \sum_{g\in G_1} |a|^2 \omega(g,g)\left(\chi(g^2)+\chi^{g_1}(g^2)\right) \\
        = & \frac{4}{|\tilde{G}|} \sum_a \sum_{g\in G_1} |a|^2 \omega(g,g)\chi(g^2)\\
        = & \frac{2}{|G_0|} \sum_{g\in G_1} \omega(g,g)\chi(g^2) ,
    \end{aligned}
\end{equation}
where $\chi'$ is the character of the $\mathbb{R}[G',\omega']$ module, and $\tilde{\rho}$ is the representation of the $\mathbb{C}[\tilde{G},1,\tilde{s}]$ module. $\mathrm{Tr}_{\mathbb{R}}$ is the trace of real linear maps; if the vector space has a complex structure, then we have $\mathrm{Tr}_{\mathbb{R}}(f)=2\mathrm{Tr}\circ \mathrm{Re}(f)$ for a complex linear map $f$. $\chi^{g_1}$ is the character of $V^{g_1}$. After the third equality, we use the complex structure \cref{eq:representation of C[G omega s] otimes N complex number}; therefore, $\tilde{\rho}$ is given by \cref{eq:matrix of C[G omega s] otimes N grading 0,eq:matrix of C[G omega s] otimes N grading 1}.  To get the fifth equality, we use the fact that the Frobenius--Schur indicator of a module is invariant under complex conjugation. In the sixth equality, we use the fact that $V$ and $V^{g_1}$ have the same type; therefore, the Frobenius--Schur indicators are the same.

When $D=\mathbb{C}$, \cref{eq:intermediate W twisted} is 0. When $D=Cl^{0,2}$, \cref{eq:intermediate W twisted} is 2 because a simple $\mathbb{Z}_2$-graded module with type $Cl^{0,2}$ decomposes into two simple ungraded modules with type $\mathbb{R}$. When $D=Cl^{2,0}$, \cref{eq:intermediate W twisted} is $-2$ because the complexification of a simple ungraded module over $\mathbb{R}$ with type $\mathbb{H}$ is a direct sum of two isomorphic modules over $\mathbb{C}$ with type $\mathbb{H}$.

In summary, to determine $D$, we need to calculate
\begin{equation}\label{eq:W twisted}
    w=\frac{1}{|G_0|}\sum_{g\in G_1} \omega(g,g)\chi(g^2),
\end{equation}
where $\chi$ is the character of $V$. If $w=1$, then $D=Cl^{0,2}$. If $w=0$, then $D=\mathbb{C}$. If $w=-1$, then $D=Cl^{2,0}$.

%% file: a5.K-theory.tex
\section{\texorpdfstring{Karoubi's $K$-Theory}{Karoubi's K-Theory}}\label{section:K-theory}

We collect some results from Karoubi's works \cite{karoubiKtheorieEquivariante1970,karoubiAlgebresClifford$K$theorie1968,karoubiEquivariantKtheoryReal2007,karoubiKtheoryIntroduction2009}. Notice that Karoubi only proved the results for a group $G$ that allows antilinear symmetries. There are no results for symmetries involving projective representations or anticommutation with the Hamiltonian. In what follows, we assume that the fundamental theorem for the general case is correct, and leave the proof to the experts. It should be noted that the result for the general case has been used implicitly in physics-related works, for example, Refs. \cite{cornfeldTenfoldTopologyCrystals2021,shiozakiClassificationSurfaceStates2022}.

\subsection{The fundamental theorem}

Let $V$ be a real $G$-vector bundle over a $G$-space $X$ equipped with a $G$-invariant inner product. We can construct a fiber bundle $Cl(V)$ over $X$ whose fiber over $x\in X$ is the Clifford algebra $Cl(V_x)$. We denote the $G$ action on $V$ by $\rho_V$; $\rho_V(g)$ maps an element of $V_x$ to an element of $V_{g\cdot x}$. $\rho_V(g)$ also induces a $G$ action on $Cl(V)$; we still denote it by $\rho_V(g)$.

An $\symalg^V$ module $E$ is a real vector bundle equipped with actions of $\symalg$ and $Cl(V)$ that are compatible with each other. In particular, for $x\in X$, $g\in \symalg$, $a_x\in Cl(V)_x$, and $e_x\in E_x$, we have
\begin{equation}
    g\cdot_{\symalg}(a_x\cdot_{Cl(V)}e_x) = (-1)^{h(g)}\rho_V(g)(a_x)\cdot_{Cl(V)}(g\cdot_{\symalg}e_x)
\end{equation}
Notice that $g\cdot_{\symalg}e_x$ is a vector in $E_{g\cdot x}$. The vector bundle $E$ is also a complex vector bundle because $\mathbb{C}$ is a subalgebra of $\symalg$.

\begin{eg}
    For $X$ a point, $V$ is just a real $G$ module, and an $\symalg^V$ module is the module over the algebra $\symalg^V$ as defined in \cref{subsubsection:classifying space of dirac hamiltonians}.
\end{eg}
\begin{eg}\label{eg:symalg^V-module with V trivial}
    For $V$ a trivial bundle, by abuse of notation, we take the trivial bundle to be $X\times V$ where $V$ is a real $G$ representation. $Cl(X\times V)$ is $X\times Cl(V)$; here, $Cl(V)$ is a Clifford algebra. The action of $Cl(X\times V)$ on a vector bundle $E$ reduces to the action of $Cl(V)$, since the action is given by a bundle map $Cl(X\times V)\times_{X}E\to E$, and the domain is just $Cl(V)\times E$.

    In summary, an $\symalg^V$ module is a vector bundle over $X$ with the action of $\symalg^V$.
\end{eg}

Karoubi defined a group $\ktheory_{\symalg}^V(X,Y)$ for a real $G$-vector bundle $V$ over a $G$-space $X$; $Y$ is a (nice) subspace of $X$. An element of $\ktheory_{\symalg}^V(X,Y)$ is represented by $(E,\varepsilon_1,\varepsilon_2)$, where $E$ is a $\symalg^V$ module, and $\varepsilon_1$, $\varepsilon_2$ are two gradations that are equal on $Y$. A gradation $\varepsilon$ of $E$ is a $\symalg^V$-module map such that $\varepsilon^2=\mathrm{id}$. Here, we define the $\symalg^V$ module maps to anticommute with the action of the grading-1 elements of $\symalg^V$, i.e., $g$ with $h(g)=1$ and $v\in V\subset Cl(V)$. $(E,\varepsilon_1,\varepsilon_2)=(E,\varepsilon'_1,\varepsilon'_2)$ if and only if there exist two trivial elements $(F,\eta_1,\eta_2)$ and $(F',\eta_1',\eta_2')$ with $\eta_1$ (or $\eta_1'$) homotopic to $\eta_2$ (or $\eta_2'$) such that
\begin{equation}
    (E\oplus F,\varepsilon_1\oplus \eta_1,\varepsilon_2\oplus \eta_2)\cong(E'\oplus F',\varepsilon'_1\oplus \eta'_1,\varepsilon'_2\oplus \eta'_2).
\end{equation}
We denote $\ktheory_{\symalg}^V(X)$ by $\ktheory_{\symalg}^V(X,\emptyset)$, i.e., $Y$ is the empty set.

We construct a map from $\ktheory_{\symalg}^V(X)$ to $\ktheory_{\symalg}^0(S_+(V\oplus 1),S(V))$, where $S_+(V\oplus 1)$ is the upper hemisphere of the sphere bundle of $V\oplus 1$, and $S(V)$ is the sphere bundle of $V$. The superscript $0$ of $\ktheory_{\symalg}^0$ denotes the zero-dimensional vector bundle over $S_+(V\oplus 1)$. The map is given by
\begin{equation}
    \begin{split}
        t:(E,\varepsilon_1,\varepsilon_2) \mapsto (\pi^*E,&\ \varepsilon_1'\mathrm{cos}(\theta)+v\mathrm{sin}(\theta),\\
        &\ \varepsilon_2'\mathrm{cos}(\theta)+v\mathrm{sin}(\theta)),
    \end{split}
\end{equation}
where $\pi^* E$ is the pullback bundle of $E$ through $\pi:S_+(V\oplus 1)\to X$, and it also induces the gradation $\varepsilon_1'$ and $\varepsilon_2'$ on $\pi^* E$. $(v,\theta)$ is the polar coordinate of $S_+(V\oplus 1)$, and $v\in V \subset Cl(V)$.
Then Karoubi proved the fundamental theorem:
\begin{thm}
    $t:\ktheory_{\symalg}^V(X) \to \ktheory_{\symalg}^0(S_+(V\oplus 1),S(V))$ is an isomorphism.
\end{thm}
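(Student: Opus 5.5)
The plan is to follow Karoubi's original strategy: first reduce to a local statement, then prove the local statement using the algebraic periodicity $\symalg^{V\oplus V'}\cong Cl^{d,d}\hat{\otimes}\symalg$ from \cref{subsubsection:atomic insulator and Bott periodicity}. Since this $\mathbb{Z}_2$-graded Morita equivalence (\cref{thm:separation of variable} together with \cref{cor:2 of sw class}) holds for twisted, antilinear and $h$-graded symmetries alike, the only input that is new beyond Karoubi's setting is already available. The argument should then go through for general $\symalg=\mathbb{C}[G,\omega,s]$ with $h$.

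\textbf{Step 1: reduction to a trivial bundle.} Both sides are functorial in the pair $(X,V)$. I would check that $t$ is natural with respect to pullback along $G$-maps. I would also check that both sides satisfy relative Mayer--Vietoris sequences for $G$-invariant closed covers of a compact $G$-space $X$. On the left this comes from excision for Karoubi's triples $(E,\varepsilon_1,\varepsilon_2)$. On the right it comes from the same statement for $S_+(V\oplus 1)$ relative to $S(V)$. A five-lemma induction over a finite cover then reduces the theorem to sets of the form $G\times_H U$, with $H$ a stabilizer and $U$ an $H$-contractible slice on which $V$ is $H$-trivial. Induction $\ktheory_{\symalg}(G\times_H U)\cong\ktheory_{\symalg|_H}(U)$ and homotopy invariance then reduce further to the case $X=\mathrm{pt}$ with the group $H$. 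By \cref{eg:symalg^V-module with V trivial}, the same argument also covers any $X$ with $V=X\times V$ trivial.

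\textbf{Step 2: the trivial case via an inverse built from $V'$.} For $V$ trivial, let $V'$ be the representation with $\rho_{V'}(g)=(-1)^{s(g)}\rho_V(g)$. I would build the inverse by composing the $t$-map for $V'$ over the base $S_+(V\oplus1)$ with the isomorphism
\[
\ktheory_{\symalg}^{V\oplus V'}\cong\ktheory_{\symalg}^{0}.
\]
This isomorphism comes from the graded Morita equivalence $\symalg^{V\oplus V'}\cong Cl^{d,d}\hat{\otimes}\symalg$. The result is a map $s\colon\ktheory^0_{\symalg}(S_+(V\oplus1),S(V))\to\ktheory^V_{\symalg}(X)$. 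The claim $s\circ t=\mathrm{id}$ should reduce to a Bott-element computation: the class of $(v\sin\theta+\varepsilon\cos\theta)$, paired with its $V'$-partner, is the unit under the Morita equivalence. To prove it, I would write down the explicit graded module of $Cl^{d,d}$ and compute directly, which is exactly what \cref{cor:1 of sw class,cor:2 of sw class} make possible. For $t\circ s=\mathrm{id}$, I would use Atiyah's rotation trick: the composite differs from the identity by the flip of the two copies of $V\oplus V'$ inside a doubled sphere. That flip is an element of the identity component of the relevant $G$-equivariant orthogonal group, so it is homotopic to the identity.

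\textbf{Main obstacle.} I expect the hardest step to be the rotation argument in the equivariant, antilinear and $h$-graded setting. The homotopy from the flip to the identity must be $G$-equivariant and must commute with $\symalg$. For $g$ with $s(g)=1$, the sign twist $(-1)^{s(g)}$ relating $V$ and $V'$ means the naive rotation matrix in the $(V,V')$-plane is not obviously equivariant. My first attempt would be the rotation $\begin{bmatrix}\cos\phi&-i\sin\phi\\ -i\sin\phi&\cos\phi\end{bmatrix}$, with $i$ acting through $\tilde{e}_i\mapsto ie_i$. Under an antilinear $g$ this acquires exactly the compensating sign. If that fails, I would fall back to Karoubi's alternative route: compare with the Atiyah--Bott--Shapiro description of \cref{subsubsection:comparison of algebraic classification and abs construction}, and prove bijectivity at $X=\mathrm{pt}$ directly from the graded Wedderburn--Artin decomposition, checking the ten graded division algebras case by case.
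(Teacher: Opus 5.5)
The paper gives no proof of this statement to compare against. It quotes Karoubi's fundamental theorem, which Karoubi proved only for groups with antilinear elements, and it explicitly \emph{assumes} the extension to nontrivial $\omega$ and $h$ (``leave the proof to the experts''). Your proposal therefore has to carry the whole argument, and as written it does not: the genuinely topological input is missing, and Step~2 is circular. Step~1 does not reduce the difficulty either. It ends at $X=\mathrm{pt}$ with $V_x$ a nontrivial representation of a stabilizer $H$, which is exactly equivariant Bott periodicity for a representation sphere, so everything rests on Step~2.

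The problem is the map $s$.
\begin{itemize}
\item \emph{The composite does not reach the target.} Both $t_V$ and $t_{V'}$ go from the Clifford side to the sphere side: each trades a Clifford factor for a hemisphere--sphere pair. The Morita isomorphism $\ktheory^{V\oplus V'}_{\symalg}\cong\ktheory^{0}_{\symalg}$ coming from $\symalg^{V\oplus V'}\cong Cl^{d,d}\hat{\otimes}\symalg$ stays on the Clifford side. The composite you describe is $\ktheory^{0}_{\symalg}(X\times S_+(V\oplus1),X\times S(V))\cong\ktheory^{V\oplus V'}_{\symalg}(X\times S_+(V\oplus1),X\times S(V))\xrightarrow{t_{V'}}\ktheory^{V}_{\symalg}\bigl(X\times S_+(V\oplus V'\oplus1),X\times S(V\oplus V')\bigr)$. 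It lands over $X$ times the representation sphere of $V\oplus V'$, not in $\ktheory^{V}_{\symalg}(X)$.
\item \emph{Getting back requires the theorem itself.} To return to $\ktheory^{V}_{\symalg}(X)\cong\ktheory^{V\oplus V\oplus V'}_{\symalg}(X)$ you must invert $t_{V\oplus V'}$. That is the theorem you are proving, for $V\oplus V'\cong V\otimes_{\mathbb{R}}\mathbb{C}$ (antilinear elements acting by conjugation): Real, twisted, graded equivariant Bott periodicity for a complex representation.
\item \emph{The algebra isomorphism does not supply it.} This periodicity is the hard, non-algebraic content; Atiyah obtains it from the index of equivariant elliptic families. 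In the paper, the algebra isomorphism only yields $\clspace(\symalg^{V\oplus V'})\simeq\clspace(\symalg)$. The $\Omega$-spectrum statement additionally uses the assumed equivalence $\clspace(\symalg^V)\simeq\Omega^V\clspace(\symalg)$.
\item \emph{The rotation trick does not supply it either.} It presupposes an already constructed $\alpha$ with $\alpha(\text{Bott class})=1$ that is linear over the untwisted theory. It only upgrades $\alpha\circ t=\mathrm{id}$ to a two-sided inverse; it does not produce $\alpha$.
\item \emph{Equivariance of the rotation is not the real obstacle.} Rotating between two copies of the same representation commutes with $G$. Your matrix with entries $-i\sin\phi$ is equivariant, but it breaks the Clifford relations: since $(i\tilde{e}')^2=-1$, the rotated generator squares to $\cos2\phi$.
\item \emph{The fallback is circular for the same reason.} The graded Wedderburn--Artin decomposition computes only the Clifford side $\mathcal{A}(\symalg^V)$. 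By contrast, $\tilde{\ktheory}_{\symalg}(S^V)$ is an equivariant $K$-group of a sphere with nontrivial $G$-action, which that decomposition does not determine. The paper's ten-case check concerns $\mathcal{A}(D_i)\to\pi_0(\clspace(D_i))$, which is again purely on the Clifford side.
\end{itemize}
What is missing is an actual topological ingredient. One option is an index-type left inverse of $t$. Another is a reduction of the twisted, $h$-graded case to the case Karoubi proved.
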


\begin{eg}
    For $X$ a point and $(E,\varepsilon_1,\varepsilon_2)\in \ktheory_{\symalg}^V(X)$, $E$ is a $\symalg^V$ module, and $\varepsilon_i$ lifts $E$ to a $Cl^{0,1}\hat{\otimes}\symalg^V$ module. An element $(F,\eta_1,\eta_2)$ of the relative K-group $\ktheory_{\symalg}^0(S_+(V\oplus 1),S(V))$ is a real vector bundle $F$ on $S_+(V\oplus 1)$ equipped with the action of $\symalg$ and gradations $\eta_1$, $\eta_2$ that are equal on $S(V)$.

    The image under $t$ of a gradation is actually the upper-hemisphere part of the Dirac Hamiltonian \cref{eq:Dirac Hamiltonian}.
\end{eg}
\begin{eg}\label{eg:K-theory with V trivial}
    We take the same notation as \cref{eg:symalg^V-module with V trivial}. For a trivial bundle $X\times V$, by abuse of notation, we denote $\ktheory_{\symalg}^{X\times V}(X)$ by $\ktheory_{\symalg}^V(X)$. The codomain of $t$ becomes $\ktheory_{\symalg}^0(X\times S_+(V\oplus 1),X\times S(V))$.
\end{eg}

\subsection{\texorpdfstring{The classifying spaces for $K$-theory}{The classifying spaces for K-theory}}\label{subsection:Classifying Spaces of K-Theory}

In the following sections, we discuss the cases in \cref{eg:symalg^V-module with V trivial,eg:K-theory with V trivial}. We have the following theorem:
\begin{thm}\label{thm:classifying spaces of K-theory}
    The following diagram is correct:
    \begin{widetext}
    \begin{equation}
        \begin{tikzcd}
            \ktheory_{\symalg}^V(X)
            \arrow[r,"\cong"]
            \arrow[d,"\cong"{right},"t"{left}]
                & \pi_0\left(\mathrm{Top}\left(X,\clspace(\symalg^V)\right)^G\right)
                \arrow[d] \\
            \ktheory_{\symalg}^0(X\times S_+(V\oplus 1),X\times S(V))
            \arrow[r,"\cong"]
                & \pi_0\left(\mathrm{Top}\left(X,\Omega^V(\clspace(\symalg))\right)^G\right)
        \end{tikzcd}
    \end{equation}    
    \end{widetext}
    The right map is induced by the map of ``construction of Dirac Hamiltonians'' as defined in \cref{subsubsection:comparison of two classifying spaces}.
\end{thm}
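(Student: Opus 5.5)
The plan is to establish the two horizontal isomorphisms separately and then check that the square commutes; the left vertical map $t$ is already an isomorphism by Karoubi's fundamental theorem, which we assume in the generalized form stated above.

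\emph{Top horizontal map.} An element $(E,\varepsilon_1,\varepsilon_2)$ of $\ktheory_{\symalg}^V(X)$ has, by \cref{eg:symalg^V-module with V trivial}, underlying bundle $E$ carrying an action of $\symalg^V$. First I would stabilize: after adding a trivial element, embed $E$ equivariantly into the trivial bundle $X\times\mathcal{U}(Cl^{0,1}\hat{\otimes}\symalg^V)$ so that $\varepsilon_2$ is the restriction of the basepoint $M_0$. This uses Kuiper/absorption-type arguments for the universe, i.e.\ every finite module appears infinitely often. Then $\varepsilon_1$ extended by $M_0$ on the complement defines a $G$-equivariant map $X\to\clspace(\symalg^V)$. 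The second step is to show the construction is well defined on classes: isomorphic triples give maps that differ by a conjugation by an equivariant unitary, and the unitary group of the universe is $G$-contractible, which gives a homotopy. Trivial triples with $\eta_1\simeq\eta_2$ give nullhomotopic perturbations. For the inverse, I would take $f:X\to\clspace(\symalg^V)$, use compactness of $X$ to land in a finite-dimensional submodule $W$, and set $(X\times W,f,M_0)$. Homotopies of maps correspond to homotopies of gradations, and these are trivial by the definition of the equivalence. The bottom horizontal map is the same argument with $V=0$ applied to the relative pair $(X\times S_+(V\oplus1),X\times S(V))$. Relative triples with $\eta_1=\eta_2=M_0$ on $S(V)$ become equivariant maps from the quotient $X_+\wedge S^V$, that is from $X$ to $\Omega^V\clspace(\symalg)$, since $S_+(V\oplus1)/S(V)\cong S^V$.

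\emph{Commutativity.} Under these identifications, $t$ replaces the gradation $\varepsilon$ by $\varepsilon\cos\theta+v\sin\theta$ on the upper hemisphere. This is exactly the $S^V_+$ half of \cref{eq:Dirac Hamiltonian}, with $M=-\varepsilon$ up to the sign convention. On $S(V)$ it equals $v$ for both gradations, and gluing with the lower hemisphere, where the basepoint $M_0$ appears, reproduces the Dirac Hamiltonian with mass $M_0$. So I would write an explicit homotopy between the two composites, collapsing the lower hemisphere to $H_0$ by a radial deformation that stays gapped because all terms anticommute. Up to this homotopy, the composite through the right vertical arrow equals the composite through $t$.

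\emph{Main obstacle.} The hardest step is the relative-to-pointed-map identification together with the homotopy-direct-limit issue flagged in the footnote of \cref{subsubsection:comparison of two classifying spaces}. The Dirac construction is only compatible with the stabilization inclusions up to homotopy, so the right vertical arrow is defined only on homotopy classes, and I must make sure the stabilizing homotopies are $G$-equivariant. I would first handle finite-dimensional stages and construct explicit equivariant homotopies commuting with the inclusions up to coherent homotopy. Then I would pass to $\pi_0$, where only first-order coherence is needed, and this is why the statement concerns $\pi_0$ rather than spaces.
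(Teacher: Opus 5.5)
Your proposal follows essentially the same route as the paper. The top arrow is obtained by stabilizing $E$ and embedding it into the universe so that one gradation becomes the basepoint $M_0$; you take $\varepsilon_2$ as the reference where the paper takes $\varepsilon_1$, which is harmless because you do it consistently in both rows. The bottom arrow is the same construction relative to $X\times S(V)$, combined with $X\times S_+(V\oplus 1)/X\times S(V)\simeq S^V\wedge X_+$ and the tensor--hom adjunction. Commutativity comes from recognizing $t$ as the upper half of the Dirac Hamiltonian, which you spell out in more detail than the paper's ``easy to check''.

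One small correction to your added well-definedness argument. With the paper's finite-support (direct-limit) definition of $\clspace(\symalg^V)$, the relevant group of equivariant unitaries is a colimit of compact groups such as $\varinjlim O(n)$ or $\varinjlim U(n)$, and these are not contractible, so Kuiper's theorem is not the right justification. Instead, use that the space of equivariant isometric embeddings of a finite-dimensional module into the universe is a product of infinite Stiefel manifolds, hence contractible, so different choices of embedding give homotopic maps.
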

\begin{proof}
    For the upper arrow, given $(E,\varepsilon_1,\varepsilon_2)$ on the left, we can take its direct sum with a trivial element $(F,\eta_1,\eta_2)$ such that $E\oplus F$ is a trivial vector bundle $X\times W$, and we embed the $Cl^{0,1}\hat{\otimes} \symalg^V$ module $W$ with $\varepsilon_1\oplus \eta_1$ into $\mathcal{U}(Cl^{0,1}\hat{\otimes} \symalg^V)$; then $\varepsilon_2\oplus \eta_2$ would be the image. Conversely, an element on the right is a grading on $X\times W$, where $W$ is a $Cl^{0,1}\hat{\otimes} \symalg^V$ module, and the image is $(X\times W,M_0,M)$.

    The lower arrow is similar, but with $X\times S(V)$ mapped to the basepoint. We have $X\times S_+(V\oplus 1)/X\times S(V)$ $G$-homotopic to $S^V\wedge X_+$, and we use the tensor-hom adjunction of $\mathrm{Top}_*$.

    The commutativity of the diagram is easy to check.
\end{proof}

\begin{rk}
    In general, we can prove
    \begin{equation}
        \ktheory_{\symalg}^V(X,Y)\cong \pi_0\left(\mathrm{Top}_*\left(X/Y,\clspace(\symalg^V)\right)^G\right).
    \end{equation}
    This result also shows
    \begin{equation}
        \ktheory_{\symalg}^V(X,Y)\cong \ktheory_{\symalg}^V(X/Y,\{x_0\}),
    \end{equation}
    where $x_0$ is the basepoint of $X/Y$, i.e., the point that represents the points in $Y$.
\end{rk}

\begin{cor}
    $\clspace(\symalg^V)\simeq \Omega^V (\clspace(\symalg))$ is a $G$-equivariant homotopy equivalence.
\end{cor}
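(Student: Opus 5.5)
The plan is to deduce the corollary from \cref{thm:classifying spaces of K-theory} by a Yoneda-type argument in the $G$-equivariant homotopy category. Write $\Psi:\clspace(\symalg^V)\to\Omega^V(\clspace(\symalg))$ for the map ``construction of Dirac Hamiltonians'' of \cref{subsubsection:comparison of two classifying spaces}; it is well defined up to $G$-homotopy. The theorem says that for every $G$-space $X$ the induced map
\begin{equation*}
    \Psi_*:\pi_0\left(\mathrm{Top}\left(X,\clspace(\symalg^V)\right)^G\right)\to\pi_0\left(\mathrm{Top}\left(X,\Omega^V\clspace(\symalg)\right)^G\right)
\end{equation*}
is a bijection. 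This holds because in the commuting square the two horizontal arrows and the left arrow $t$ are isomorphisms; $t$ is an isomorphism by Karoubi's fundamental theorem, assumed in the generalized form. So $\Psi$ induces a natural bijection on $G$-homotopy classes of maps out of every $G$-space $X$.

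First I would restrict to $G$-CW complexes $X$. Both spaces have the $G$-homotopy type of $G$-CW complexes, since they are direct limits along cofibrations of spaces of linear maps on finite-dimensional modules (Grassmannian-type, hence smooth $G$-manifolds). Then I would apply Yoneda. Take $X=\clspace(\symalg^V)$ and the identity; it has a preimage only trivially, so instead take $X=\Omega^V\clspace(\symalg)$ and the class of its identity. Surjectivity gives a $G$-map $\Phi$ with $\Psi\circ\Phi\simeq_G\mathrm{id}$. Then $\Psi\circ(\Phi\circ\Psi)\simeq_G\Psi=\Psi\circ\mathrm{id}$, and injectivity of $\Psi_*$ with $X=\clspace(\symalg^V)$ gives $\Phi\circ\Psi\simeq_G\mathrm{id}$. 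So $\Psi$ is a $G$-homotopy equivalence.

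An alternative route, which I would use as a cross-check, is the equivariant Whitehead theorem. Take $X=G/H\times S^n$ and $X=G/H\times D^{n+1}$ relative to the boundary; by the remark after the theorem, relative groups become based maps out of $(G/H)_+\wedge S^n$. The bijections then show that $\Psi^H:\clspace(\symalg^V)^H\to(\Omega^V\clspace(\symalg))^H$ induces isomorphisms on all $\pi_n$ at every basepoint, for every subgroup $H$. The basepoint issue is handled by translating with the group structure given by direct sum, which makes all components equivalent.

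The main obstacle is that $\Psi$ is only a homotopy-coherent map, defined via a homotopy colimit. I must check that the identification in \cref{thm:classifying spaces of K-theory} is natural in $X$ and is compatible with the particular representative of $\Psi$, so that the bijection is really $\Psi_*$ and not merely an abstract isomorphism. I would verify this on finite-dimensional submodules, where the Dirac construction is strictly defined and matches $t$ by the explicit formula $\varepsilon\cos\theta+v\sin\theta$. I would then pass to the limit using that the inclusions are cofibrations and that $X$ is compact, or filtered by compact pieces, so that maps factor through finite stages.
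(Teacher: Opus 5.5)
Your proposal is correct and follows the paper's route. The paper reads the corollary off \cref{thm:classifying spaces of K-theory}: there $t$ (Karoubi's fundamental theorem, assumed in general) and both horizontal arrows are bijections, so the Dirac-construction map induces bijections on $G$-homotopy classes of maps out of every $X$, and Yoneda makes it a $G$-homotopy equivalence; your checks on naturality, on the homotopy-colimit representative of $\Psi$, and on $G$-CW type fill in what the paper leaves implicit.

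One caution: the bundle-side identifications in that theorem are only justified for compact $X$, since the proof complements $E$ to a trivial bundle. Your equivariant Whitehead route, using $G/H\times S^n$ and $G/H\times D^{n+1}$, stays within that range and is the cleaner way to conclude. Applying Yoneda with $X$ equal to the noncompact classifying spaces themselves goes beyond it.
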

Along with \cref{subsubsection:atomic insulator and Bott periodicity}, this shows $\ktheory_{\symalg}^{-V}$ is an $RO(G)$-graded cohomology theory, and $\clspace(\symalg^{-V})$ is the genuine $G$-$\Omega$-spectrum, or equivalently the $RO(G)$-graded spectrum; see \cite{mayEquivariantHomotopyCohomology1996}. There is also a helpful lecture note \cite{debrayAdebrayEquivariant_homotopy_theory2025} on this topic.

\subsection{\texorpdfstring{$\tilde{\ktheory}_{\symalg}(X)$}{KAsym(X)}}\label{subsection:reduced K-theory}

For a $G$-space $X$ with a basepoint $x_0$, we define $\tilde{\ktheory}_{\symalg}(X)$ to be $\ktheory_{\symalg}^0(X,x_0)$, which is also defined in \cref{subsubsection:bundle-theoretic classification} in another way. The two definitions are equivalent, since both are isomorphic to $\pi_0\left(\mathrm{Top}_*\left(X,\clspace(\symalg^V)\right)^G\right)$.

%% file: a6.gap_example.tex
\section{An Example Calculated Using the GAP Package}\label{section:gap example}

The Dirac Hamiltonian for the $D_{3d}$ example in \cref{subsection:use GAP to compute more complicated examples}, obtained via \texttt{Hamiltonians(A)[1]}, is as follows:
\begin{widetext}
\begin{equation}
    \tilde{\Gamma}_1 =
        \begin{bmatrix}
            0 & 1 & 0 & 0 & 0 & 0 & 0 & 0 \\
            1 & 0 & 0 & 0 & 0 & 0 & 0 & 0 \\
            0 & 0 & 0 & -1 & 0 & 0 & 0 & 0 \\
            0 & 0 & -1 & 0 & 0 & 0 & 0 & 0 \\
            0 & 0 & 0 & 0 & 0 & 1 & 0 & 0 \\
            0 & 0 & 0 & 0 & 1 & 0 & 0 & 0 \\
            0 & 0 & 0 & 0 & 0 & 0 & 0 & -1 \\
            0 & 0 & 0 & 0 & 0 & 0 & -1 & 0
        \end{bmatrix}
\end{equation}
\begin{equation}
    \tilde{\Gamma}_2 =
        \begin{bmatrix}
            0 & 0 & 0 & 1 & 0 & 0 & 0 & 0 \\
            0 & 0 & 1 & 0 & 0 & 0 & 0 & 0 \\
            0 & 1 & 0 & 0 & 0 & 0 & 0 & 0 \\
            1 & 0 & 0 & 0 & 0 & 0 & 0 & 0 \\
            0 & 0 & 0 & 0 & 0 & 0 & 0 & 1 \\
            0 & 0 & 0 & 0 & 0 & 0 & 1 & 0 \\
            0 & 0 & 0 & 0 & 0 & 1 & 0 & 0 \\
            0 & 0 & 0 & 0 & 1 & 0 & 0 & 0
        \end{bmatrix}
\end{equation}
\begin{equation}
    \tilde{\Gamma}_3 =
        \begin{bmatrix}
            0 & 0 & 0 & -E(4) & 0 & 0 & 0 & 0 \\
            0 & 0 & -E(4) & 0 & 0 & 0 & 0 & 0 \\
            0 & E(4) & 0 & 0 & 0 & 0 & 0 & 0 \\
            E(4) & 0 & 0 & 0 & 0 & 0 & 0 & 0 \\
            0 & 0 & 0 & 0 & 0 & 0 & 0 & E(4) \\
            0 & 0 & 0 & 0 & 0 & 0 & E(4) & 0 \\
            0 & 0 & 0 & 0 & 0 & -E(4) & 0 & 0 \\
            0 & 0 & 0 & 0 & -E(4) & 0 & 0 & 0
        \end{bmatrix}
\end{equation}
In what follows, $a = \frac{1}{2}E(3)-\frac{1}{2}E(3)^2$, $b = \frac{1}{2}E(4)$, and $K$ is the complex conjugation operator.
\begin{equation}
    \hat{s}_6 =
        \begin{bmatrix}
            0 & 0 & 0 & 0 & a & 0 & -b & 0 \\
            0 & 0 & 0 & 0 & 0 & a & 0 & -b \\
            0 & 0 & 0 & 0 & b & 0 & a & 0 \\
            0 & 0 & 0 & 0 & 0 & b & 0 & a \\
            a & 0 & -b & 0 & 0 & 0 & 0 & 0 \\
            0 & a & 0 & -b & 0 & 0 & 0 & 0 \\
            b & 0 & a & 0 & 0 & 0 & 0 & 0 \\
            0 & b & 0 & a & 0 & 0 & 0 & 0
        \end{bmatrix}
\end{equation}
\begin{equation}
    \hat{m}_y =
        \begin{bmatrix}
            0 & 0 & 0 & 0 & -1 & 0 & 0 & 0 \\
            0 & 0 & 0 & 0 & 0 & -1 & 0 & 0 \\
            0 & 0 & 0 & 0 & 0 & 0 & 1 & 0 \\
            0 & 0 & 0 & 0 & 0 & 0 & 0 & 1 \\
            1 & 0 & 0 & 0 & 0 & 0 & 0 & 0 \\
            0 & 1 & 0 & 0 & 0 & 0 & 0 & 0 \\
            0 & 0 & -1 & 0 & 0 & 0 & 0 & 0 \\
            0 & 0 & 0 & -1 & 0 & 0 & 0 & 0
        \end{bmatrix}
\end{equation}
\begin{equation}
    \hat{C} =
        \begin{bmatrix}
            0 & 0 & 0 & 0 & 0 & E(4) & 0 & 0 \\
            0 & 0 & 0 & 0 & E(4) & 0 & 0 & 0 \\
            0 & 0 & 0 & 0 & 0 & 0 & 0 & E(4) \\
            0 & 0 & 0 & 0 & 0 & 0 & E(4) & 0 \\
            0 & -E(4) & 0 & 0 & 0 & 0 & 0 & 0 \\
            -E(4) & 0 & 0 & 0 & 0 & 0 & 0 & 0 \\
            0 & 0 & 0 & -E(4) & 0 & 0 & 0 & 0 \\
            0 & 0 & -E(4) & 0 & 0 & 0 & 0 & 0
        \end{bmatrix}K
\end{equation}
\begin{equation}
    \hat{T} =
        \begin{bmatrix}
            0 & 0 & -1 & 0 & 0 & 0 & 0 & 0 \\
            0 & 0 & 0 & -1 & 0 & 0 & 0 & 0 \\
            1 & 0 & 0 & 0 & 0 & 0 & 0 & 0 \\
            0 & 1 & 0 & 0 & 0 & 0 & 0 & 0 \\
            0 & 0 & 0 & 0 & 0 & 0 & 1 & 0 \\
            0 & 0 & 0 & 0 & 0 & 0 & 0 & 1 \\
            0 & 0 & 0 & 0 & -1 & 0 & 0 & 0 \\
            0 & 0 & 0 & 0 & 0 & -1 & 0 & 0
        \end{bmatrix}K
\end{equation}
\begin{equation}
    M = -M_0 =
        \begin{bmatrix}
            1 & 0 & 0 & 0 & 0 & 0 & 0 & 0 \\
            0 & -1 & 0 & 0 & 0 & 0 & 0 & 0 \\
            0 & 0 & 1 & 0 & 0 & 0 & 0 & 0 \\
            0 & 0 & 0 & -1 & 0 & 0 & 0 & 0 \\
            0 & 0 & 0 & 0 & 1 & 0 & 0 & 0 \\
            0 & 0 & 0 & 0 & 0 & -1 & 0 & 0 \\
            0 & 0 & 0 & 0 & 0 & 0 & 1 & 0 \\
            0 & 0 & 0 & 0 & 0 & 0 & 0 & -1
        \end{bmatrix}
\end{equation}
\end{widetext}